\newtheorem{theorem}{Theorem}
\newtheorem{lemma}{Lemma}
\newtheorem{proposition}{Proposition}
 \newtheorem{corollary}{Corollary}[theorem]
\newtheorem{remark}{Remark}
\newtheorem{definition}{Definition}
\begin{document}
%
\title{Optimal Causal Rate-Constrained Sampling for a Class of Continuous Markov Processes}
%
%
%

\author{Nian Guo,
        Victoria Kostina
\thanks{N. Guo and V. Kostina are with the Department
of Electrical Engineering, California Institute of Technology, Pasadena, CA, 91125 USA. E-mail: \{nguo,vkostina\}@caltech.edu. This work was supported in part by the National
Science Foundation (NSF) under grants CCF-1751356 and CCF-1956386. A part of this work was presented at the 2020 IEEE International Symposium on Information Theory.}
\thanks{Copyright \copyright 2021 IEEE. Personal use of this material is permitted.  However, permission to use this material for any other purposes must be obtained from the IEEE by sending a request to pubs-permissions@ieee.org.}}
\maketitle

\begin{abstract}
Consider the following communication scenario. An encoder observes a stochastic process and causally decides when and what to transmit about it, under a constraint on the expected number of bits transmitted per second. A decoder uses the received codewords to causally estimate the process in real time. The encoder and the decoder are synchronized in time. For a class of continuous Markov processes satisfying regularity conditions, we find the optimal encoding and decoding policies that minimize the end-to-end estimation mean-square error under the rate constraint. We show that the optimal encoding policy transmits a $1$-bit codeword once the process innovation passes one of two thresholds. The optimal decoder noiselessly recovers the last sample from the 1-bit codewords and codeword-generating time stamps, and uses it to decide the running estimate of the current process, until the next codeword arrives. In particular, we show the optimal causal code for the Ornstein-Uhlenbeck process and calculate its distortion-rate function. Furthermore, we show that the optimal causal code also minimizes the mean-square cost of a continuous-time control system driven by a continuous Markov process and controlled by an additive control signal.
\end{abstract}

\begin{IEEEkeywords}
Causal lossy source coding, sequential estimation, event-triggered sampling, zero-delay coding, rate-distortion theory, control.
\end{IEEEkeywords}

%
\IEEEpeerreviewmaketitle

\section{Introduction}
\subsection{System model and problem setup}
Consider the system in Fig.~\ref{Fig1}. A source outputs a real-valued continuous-time stochastic process $\{X_t\}_{t=0}^T$ with state space $(\mathbb R,\mathcal B_{\mathbb R})$, where $\mathcal B_{\mathbb R}$ is the Borel $\sigma$-algebra on $\mathbb R$.  
\begin{figure}[h!]
\centering
\includegraphics[trim = 35mm 232mm 68mm 49mm, clip, width=10cm]{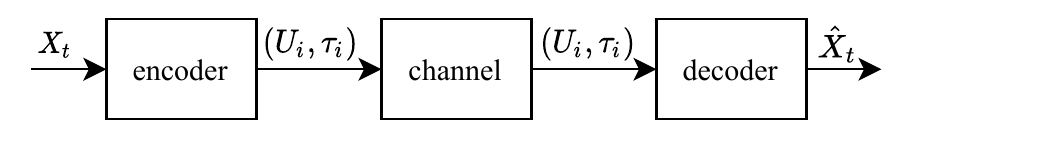}
\centering
\caption{System Model. Sampling time $\tau_i$ and codeword $U_i$ are chosen by the encoder's sampling and compressing policies, respectively.} \label{Fig1}
\end{figure}

An encoder tracks the input process $\{X_t\}_{t=0}^T$ and causally decides to transmit codewords about it at a sequence of stopping times
\begin{equation} \label{timestamp}
0\leq \tau_1\leq \tau_2\leq\dots\leq  \tau_N\leq T
\end{equation} that are decided by a causal sampling policy.  Thus, the total number of time stamps $N$ can be random. The time horizon $T$ can either be finite or infinite.
At time ${\tau}_i$, the encoder generates a codeword $U_i$ according to a causal compressing policy, based on the process stopped at $\tau_i$, $\{X_t\}_{t=0}^{\tau_i}$. Then, the codeword $U_i$ is passed to the decoder without delay through a noiseless channel. At time $t$, $t\in[\tau_i,\tau_{i+1})$, the decoder estimates the input process $X_t$, yielding $\hat X_t$, based on all the received codewords and the codeword-generating time stamps, i.e., $(U_j,\tau_j),~j=1,2,\dots,i$. Note that the encoder and the decoder can leverage the timing information for free due to the clock synchronization and the zero-delay channel.

The communication between the encoder and the decoder is subject to a constraint on the long-term average rate,
\begin{subequations}\label{comm_cons}
\begin{align}\label{comm_cons_a}
    \frac{1}{T}\mathbb E\left[\sum_{i=1}^N\ell(U_i)\right]&\leq R ~\text{(bits per sec)},~(T<\infty),\\  \label{comm_cons_b}
    \limsup_{T\rightarrow\infty} \frac{1}{T}\mathbb E\left[\sum_{i=1}^N\ell(U_i)\right]&\leq R~\text{(bits per sec)}, ~(T=\infty),
\end{align}
\end{subequations}
where $\ell\colon\mathbb Z_+\rightarrow\mathbb Z_+$ denotes the length of its argument in bits, $\ell(x)=\lfloor \log_2(x)\rfloor+1$ for $x>0$, $\ell(0)=1$.
The \emph{distortion} is measured by the long-term average mean-square error (MSE),
\begin{subequations} \label{MSE}
\begin{align}
      \frac{1}{T}\mathbb E \left[\int_{0}^T (X_t-\hat {X}_t)^2 dt\right]&\leq d,~(T<\infty),\\
       \limsup_{T\rightarrow\infty} \frac{1}{T}\mathbb E\left[\int_{0}^T(X_t-\hat X_t)^2dt\right]&\leq d, ~(T=\infty).
\end{align}
\end{subequations}
 We aim to find the encoding and decoding policies that achieve the optimal tradeoff between the communication rate \eqref{comm_cons} and the MSE \eqref{MSE}.

\subsection{The class of processes}\label{IA}
Let $\{\mathcal F_t\}_{t=0}^T$ be the filtration generated by $\{X_t\}_{t= 0}^T$.

Throughout, we impose the following assumptions on the source process.
\begin{itemize}
    \item[(i)]\label{(i)} \textit{(Strong Markov property)} 
$\{X_t\}_{t= 0}^T$ satisfies the strong Markov property:  For all almost surely finite stopping times $\tau\in[0,T]$ and all $t\in[0,T-\tau]$, $X_{t+\tau}$ is conditionally independent of $\mathcal F_{\tau}$ given $X_{\tau}$.

\item[(ii)] \textit{(Continuous paths)} 
$\{X_t\}_{t= 0}^T$ has continuous paths: $X_t$ is almost surely continuous in $t$.

\item[(iii)] \textit{(Mean-square residual error properties)} For all almost surely finite stopping times $\tau\in[0,T]$ and all $t\in[\tau,T]$, 
the mean-square residual error $\tilde X_t=X_t-\mathbb E[X_t|X_{\tau},\tau]$ satisfies:
\begin{itemize}
\item[(iii-a)] $\tilde X_t$ is independent of $\mathcal F_{\tau}$ and $\tilde X_t$ has the Markov property, i.e., for all $r\in[\tau,t]$, $\tilde X_t$ is conditionally independent of $\mathcal F_r$ given $\tilde X_r$.  

\item[(iii-b)] $\tilde X_t$ can be expressed as
\begin{equation}\label{Yt_r}
    \tilde X_t = q_t(s)\tilde X_s+ R_t(s,\tau),
\end{equation}
\end{itemize}where $ s\in [\tau,t]$, $q_t(s)$ is a deterministic function of $(t,s)$, and $R_t(s,\tau)$ is a random process with continuous paths, i.e., $R_t(s,\tau)$ is almost surely continuous in $t$. Furthermore, the random variable $R_t(s,\tau)$ has an even and quasi-concave pdf, and $q_t(t)=1$, $R_t(t,\tau)=0$.
\end{itemize}

We assume that the initial state $X_0=0$ at time $\tau_0=0$ is known both at the encoder and the decoder. For example, any stochastic process of the form $X_t = g_1(t)W_{g_2(t)} + g_3(t)$ satisfies (i)--(iii), where $\{W_t\}_{t\geq 0}$ denotes the Wiener process, $g_1, g_2, g_3$ are continuous deterministic functions of the time $t$, and $g_2$ is positive and non-decreasing in $t$. The parameters in \eqref{Yt_r} for this example process are $q_t(s) = \frac{g_1(t)}{g_1(s)}$ and $R_t(s,\tau) =g_1(t)W_{g_2(t)-g_2(s)}$. The Wiener process, the Ornstein-Uhlenbeck (OU) process, and the continuous L\'{e}vy processes are special cases of this form. These processes are widely used in financial mathematics and physics. There are also other stochastic processes satisfying (i)--(iii), e.g., $X_t = W_{t+c_1}+c_2W_{t}$, where $c_1,c_2\in\mathbb R$, which is expressed by~\eqref{Yt_r} with $q_t(s) = 1$, $R_t(s,\tau) = (1+c_1)W_{t-s}$.

\subsection{Context}
In wireless sensor networks and network control systems of the Internet of Things, nodes are spatially dispersed, communication between nodes is a limited resource, and delays are undesirable. We study the fundamental limits of the communication scenario in which the transmitting node (the encoder) observes a stochastic process, and wants to communicate it in real-time to the receiving node (the decoder).

Related work includes \cite{Imer}--\cite{Zaman}, where it is assumed that the encoder transmits real-valued samples of the input process and that the communication is subject to a sampling frequency constraint or a transmission cost. The causal sampling and estimation policies that achieve the optimal tradeoff between the sampling frequency and the distortion have been studied for the following \emph{discrete-time processes}: the i.i.d process \cite{Imer};  the Gauss-Markov process \cite{Lipsa}; the partially observed Gauss-Markov process \cite{Wu}; and, the first-order autoregressive Markov process $X_{t+1}=aX_t+V_t$ driven by an i.i.d. process $\{V_t\}$ with unimodal and even distribution \cite{chk}\cite{molin}. The first-order autoregressive Markov process considered in \cite{chk}\cite{molin} represents a discrete-time counterpart of the continuous-time process in~\eqref{Yt_r} with $q_t(s) = a^{t-s},R_t(s,\tau)=X_t-a^{t-s}X_s$. Chakravorty and Mahajan \cite{chk} showed that a threshold sampling policy with two constant thresholds and an innovation-based filter jointly minimize a discounted cost function consisting of the MSE and a transmission cost in the infinite time horizon. Molin and Hirche \cite{molin} proposed an iterative algorithm to find the sampling policy that achieves the minimum of a cost function consisting of a linear combination of the MSE and the transmission cost in the finite time horizon, and showed that the algorithm converges to a two-threshold policy.

The optimal sampling policies for some \emph{continuous-time processes} have also been studied: first-order stochastic systems with a Wiener process disturbance \cite{1}; the finite time-horizon Wiener and OU processes \cite{Rabi}; the infinite time-horizon multidimensional Wiener process \cite{Nar}; the infinite-time horizon Wiener process \cite{Sun}; and the OU processes \cite{Zaman} with channel delay. \AA str\"{o}m and Bernhardsson \cite{1} compared uniform and symmetric threshold sampling policies in first-order stochastic systems with a Wiener process disturbance. They showed that the symmetric threshold sampling policy gives a lower distortion than the uniform sampling under the same average
sampling frequency. The optimal causal sampling policies for the Wiener and the OU processes determined in \cite{Rabi}--\cite{Zaman} are threshold sampling policies, whose thresholds are obtained by solving optimal stopping time problems via Snell's envelope. The proofs in \cite{Rabi}--\cite{Zaman} rely on a conjecture about the form of the MMSE decoding policy, implying that the causal sampling policies in \cite{Rabi}--\cite{Zaman} are optimal with respect to the conjectured decoding policy, rather than the optimal decoding policy. Namely, Rabi et al. \cite{Rabi} conjectured that the MMSE decoding policy under the optimal sampling policy is equal to the MMSE decoding policy under deterministic (process-independent) sampling policies without a proof. Nar and Ba\c{s}ar \cite{Nar} arrived at the MMSE decoding policy for the Wiener process by referring to the results in \cite{Nayyar}, where the stochastic processes considered in \cite{Nayyar} are in discrete-time and the increments of the discrete-time process are assumed to have finite support. Yet, the Wiener process is a continuous-time process with Gaussian increments having infinite support. Sun et al. \cite{Sun} and Ornee and Sun \cite{Zaman} assumed that the decoding policy ignores the implied knowledge when no samples are received at the decoder, neglecting the possible influence of the sampling policy on the decoding policy.
Nonparametric estimation of L\'{e}vy processes from uniform non-causal samples has been studied in \cite{Lopez}--\cite{Kappus}.

Although the works \cite{Imer}--\cite{Kappus} did not consider quantization effects, in digital communication systems, real-valued numbers are quantized into bits before a transmission. Quantized event-triggered control schemes have been studied for the following systems: discrete-time linear systems with noise \cite{Khina} and without noise \cite{Yoshi}; continuous-time linear time-invariant (LTI) systems without noise \cite{Pearson}\cite{Kho2019} and with bounded noise \cite{Le2010}--\cite{Talla2016}; partially-observed continuous-time LTI systems without noise \cite{Tw}\cite{Kof} and with bounded noise \cite{Abd}. The quantized event-triggered control schemes in \cite{Khina}--\cite{Abd} are designed to stabilize the systems. The optimality of the proposed schemes was not considered in \cite{Khina}--\cite{Abd}.
In our previous work~\cite{Nian}, we introduced an information-theoretic framework for studying jointly optimal sampling and quantization policies by considering a long-term average bitrate constraint. We showed that the optimal event-triggered sampling policy for the Wiener process remains a two-threshold policy even under a bitrate constraint, while the optimal deterministic (process-independent) sampling policy is uniform.

\subsection{Contribution}
In this paper, we leverage the information-theoretic framework of our prior work \cite{Nian}, introduced in the context of the Wiener process, to study the jointly optimal sampling and quantization policies for the wider class of continuous-time processes introduced in Section~\ref{IA}. Unlike \cite{Nian}, where we rely on previous results \cite{Nar} on the optimal sampling policy for the Wiener process to obtain the optimal causal code, in this paper, we derive the form of the optimal sampling policy from scratch. We prove that the optimal sampling policy is a two-threshold policy whether or not quantization is taken into account. We show that the optimal causal compressor is a sign-of-innovation compressor that generates $1$-bit codewords representing the sign of the process innovation since the last sample. This surprisingly simple structure is a consequence of both the real-time distortion constraint \eqref{MSE}, which penalizes coding delays, and the symmetry of the innovation distribution (iii), which ensures the optimality of the two-threshold sampling policy. Compared to the previous work on sampling of continuous-time processes \cite{Rabi}--\cite{Zaman}, our results apply to a wider class of processes, namely, the processes satisfying (i)--(iii) in Section~\ref{IA}. Furthermore, we confirm the validity of the conjecture on the MMSE decoding policy in \cite{Rabi}--\cite{Nar}. To do so, we use a set of tools that differs from that in \cite{Rabi}--\cite{Nar}: where \cite{Rabi}--\cite{Nar} use Snell's envelope to find the optimal sampling policy under the conjecture on the form of the MMSE decoding policy, we apply majorization theory and real induction to find the jointly optimal sampling and decoding policies.
We show that the optimal causal code for the Ornstein-Uhlenbeck process generates a 1-bit codeword once the process innovation crosses one of the two thresholds, and calculate its distortion-rate function. Next, we show that the SOI code remains optimal in a rate-constrained control scenario with a stochastic plant driven by a process satisfying assumptions (i)--(iii) in Section~\ref{IA}. The SOI code minimizes mean-square cost between the desirable state $0$ and the state of the stochastic plant. In contrast to the event-triggered control schemes in \cite{1} and \cite{Khina}--\cite{Abd}, we introduce a bitrate constraint to the control problems;  we consider a wider class of disturbance signals beyond the Wiener process; we show the optimality of the SOI code in minimizing the state MSE. 

A part of this work was presented at the 2020 IEEE International Symposium on Information Theory \cite{Nian_2}; the conference version does not contain Section~\ref{Control_sec} or any proofs.

\subsection{Paper organization}
In Section~\ref{SecII}, we formulate a causal frequency-constrained sampling problem and show the form of the optimal causal sampling policy. In Section \ref{III}, we formally introduce the causal rate-constrained sampling problem and show the optimal causal code. In Section~\ref{Control_sec}, we prove that the causal code introduced in Section~\ref{III} remains optimal in a rate-constrained control system.
 
\subsection{Notation}
We denote by $\{X_t\}_{t=s}^{r}$ the portion of the stochastic process within the time interval $[s,r]$, and denote by $\{X_t\}_{t>s}^{r}$ the portion of the stochastic process within the time interval $(s,r]$. For a possibly infinite sequence $x=\{x_1,x_2,\dots\}$, we write $x^i=\{x_1,x_2,\dots,x_i\}$ to denote the vector of its first $i$ elements. For a continuous random variable $X$, we denote its pdf by $f_X$. We denote by $\mathrm{Supp}(f_X)\triangleq \{x\colon f_X(x)>0\}$ the support of $f_X$. We use $\sigma(\cdot)$ to denote the $\sigma$-algebra of its argument. We use $X\leftarrow Y$ to represent a substitution of $X$ by $Y$.

\section{Causal frequency-constrained sampling}\label{SecII}
Before we show the optimal causal code in Section~\ref{III}, we formulate the causal frequency-constrained sampling problem and find the optimal tradeoff between the sampling frequency and the MSE. In Theorem~\ref{prop2} in Section~\ref{OSP} below, we find the form of the optimal sampling policy. We will show in Theorem~\ref{thm1} in Section~\ref{iv} that when coupled with an appropriate compressing policy, the optimal causal sampling policy in Theorem~\ref{prop2} attains the optimal tradeoff between the communication rate and the MSE.
\subsection{Causal frequency-constrained code}\label{SecIII}
Allowing the encoder to transmit real-valued samples $U_i=X_{\tau_i}$ instead of the $\mathbb Z_+$-valued codewords $U_i$, and replacing the bitrate constraint \eqref{comm_cons} by the average sampling frequency constraint \begin{subequations}\label{comm_cons_f}
\begin{align}\label{comm_cons_f_a}
    \frac{\mathbb E[N]}{T}&\leq F~(\text{samples per sec}),~(T<\infty),\\\label{comm_cons_f_b}
   \limsup_{T\rightarrow\infty} \frac{\mathbb E[N]}{T}&\leq F~(\text{samples per sec}),~(T=\infty),
\end{align}   
\end{subequations}
where $N$ is the total number of stopping times in \eqref{timestamp}, we obtain the problem of \emph{causal frequency-constrained sampling}.
Next, we formally define causal sampling and decoding policies.
\begin{definition}[($F,d,T$) causal frequency-constrained code]\label{fcode}
A time horizon-$T$ causal frequency-constrained code for the stochastic process $\{X_t\}_{t=0}^T$ is a pair of causal sampling and decoding policies, characterized next. 

\begin{itemize}
    \item[1.] The causal \emph{sampling} policy is a collection of stopping times $\tau_1,\tau_2,\dots$ \eqref{timestamp} adapted to the filtration $\{\mathcal F_t\}_{t=0}^T$ at which samples are generated.

\item[2.]  Given a causal sampling policy, the real-valued samples $\{X_{\tau_j}\}_{j=1}^i$ and sampling time stamps $\tau^i$, the MMSE decoding policy is
\begin{equation}\label{opt_dec_s}
    \bar X_t = \mathbb E[X_t|\{X_{\tau_j}\}_{j=1}^i,\tau^i, t<\tau_{i+1}], ~t\in[\tau_i,\tau_{i+1}).
\end{equation}
\end{itemize}

In an $(F,d,T)$ code, the average sampling frequency must satisfy \eqref{comm_cons_f}, while the MSE must satisfy 
\begin{subequations}\label{MSEf}
\begin{align}
    \frac{1}{T}\mathbb E\left[\int_{0}^T(X_t-\bar{X}_t)^2\right]&\leq d,~(T<\infty),\\\label{MSEf_b}
    \limsup_{T\rightarrow\infty}\frac{1}{T}\mathbb E\left[\int_{0}^T(X_t-\bar{X}_t)^2\right]&\leq d,~(T=\infty).
\end{align}
\end{subequations}
\end{definition}

Allowing more freedom in designing the decoding policy will not lead to a lower MSE, since \eqref{opt_dec_s} is the MMSE estimator. Note that we cannot immediately simplify the expectation in \eqref{opt_dec_s} using the strong Markov property of $\{X_t\}_{t= 0}^T$ ((i) in Section~\ref{IA}) at this point, since the expectation is also conditioned on $t<\tau_{i+1}$. We will show in Corollary~\ref{cor1} below that under the optimal causal sampling policy, \eqref{opt_dec_s} can indeed be simplified to \eqref{cor17}.

In this work, we focus on causal sampling policies satisfying the following assumptions.
\begin{enumerate}
    \item[(iv)] The sampling interval between any two consecutive stopping times, $\tau_{i+1}-\tau_i$, satisfies 
\begin{equation}\label{si}
    \mathbb E[\tau_{i+1}-\tau_i]<\infty,~i=0,1,\dots,
\end{equation}
and the MSE within each interval satisfies
\begin{equation}\label{si_d}
    \mathbb E\left[\int_{\tau_i}^{\tau_{i+1}}(X_t-\bar X_t)^2dt\right]<\infty, ~i=0,1,\dots
\end{equation}
\item[(v)] The Markov chain $\tau_{i+1} - \tau_i - \{X_t\}_{t=0}^{\tau_i}$ holds for all $i=0,1,\dots$
\item[(vi)] For all $i=0,1,\dots$, the conditional pdfs $f_{\tau_{i+1}|\tau_i}$ exist.
\end{enumerate}
Note that \eqref{si} holds trivially if $T<\infty$. Sun et al. \cite{Sun} and Ornee and Sun \cite{Zaman} also assumed \eqref{si} in their analyses of the infinite time horizon problems for the Wiener \cite{Sun} and the OU \cite{Zaman} processes. We use \eqref{si_d} to obtain a simplified form of the distortion-frequency tradeoff for time-homogeneous processes (see \eqref{solvet} below). Furthermore, \eqref{si_d} allows us to prove that the optimal sampling intervals $\tau_{i+1}-\tau_i$ form an i.i.d. process (see \eqref{stp} below). We use (v), (vi) to show that the optimal sampling policy is a symmetric threshold sampling policy in the frequency-constrained setting. See Appendix \ref{pf_rmk_plus} for a sufficient condition on the stochastic process for the optimal sampling policy to satisfy (v). For example, in the infinite time horizon, stochastic processes of the form $X_t=cW_{at}+bt$ satisfy the sufficient condition. Assumption (v) implies that the stopping times form a Markov chain. In contrast, the sampling intervals of causal sampling policies are assumed to form a regenerative process in \cite{Sun}\cite{Zaman}.

To quantify the tradeoffs between the sampling frequency~\eqref{comm_cons_f} and the MSE \eqref{MSEf}, we introduce the distortion-frequency function.

\begin{definition}[Distortion-frequency function (DFF)]\label{Dsf} The DFF for causal frequency-constrained sampling of the process $\{X_t\}_{t=0}^T$ is the minimum MSE achievable by causal frequency-constrained codes,
\begin{equation}\label{dsf_eq}
\begin{aligned}
    \underline{D}(F) \triangleq &\inf\{d:\exists~(F,d,T)~\text{causal}\\ &\text{frequency-constrained code satisfying (iv), (v), (vi)}\}.
\end{aligned}
\end{equation}
\end{definition}

In the causal frequency-constrained sampling scenario, we say that a causal sampling policy is \emph{optimal} if, when succeeded by the MMSE decoding policy \eqref{opt_dec_s}, it forms an $(F,d,T)$ code with $d=\underline{D}(F)$.

\subsection{Optimal causal sampling policy}\label{OSP}
 In Theorem~\ref{prop2} below, we show that the optimal sampling policy is a two-threshold policy that is symmetric with respect to the expected value of the process given the last sample and the last  sampling time, henceforth referred to as a \emph{symmetric threshold policy}. In Theorem~\ref{prop3}, we show a simplified form of the  policy for time-homogeneous processes.

\begin{theorem}\label{prop2}
The optimal causal sampling policy in either finite or infinite time horizon for a class of continuous Markov processes satisfying assumptions (i)--(iii) in Section~\ref{IA} is a symmetric threshold sampling policy of the form
\begin{equation}\label{dstp}
\begin{aligned}
\tau_{i+1}=\inf\{t\geq \tau_i: &X_t-\mathbb E[X_t|X_{\tau_i},\tau_i] \\&\notin (-a_i(t,\tau_i),a_i(t,\tau_i))\},
\end{aligned}
\end{equation}
where the threshold $a_i$ is a non-negative deterministic function of $(t,\tau_i)$.
\end{theorem}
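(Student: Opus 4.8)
The plan is to prove Theorem~\ref{prop2} by a two-step argument: first reduce the optimal causal sampling problem to a per-interval optimal stopping problem with a Lagrangian that couples frequency and MSE, and then show that within each interval the optimal stopping region is a symmetric interval around the conditional mean $\mathbb E[X_t|X_{\tau_i},\tau_i]$. The key structural facts I would exploit are: (a) the residual process $\tilde X_t = X_t - \mathbb E[X_t|X_{\tau_i},\tau_i]$ is, by assumption (iii-a), independent of $\mathcal F_{\tau_i}$ and Markov, so conditioned on the last sample the problem ``restarts'' in terms of $\tilde X_t$; (b) by (iii-b) and the even, quasi-concave pdf of $R_t(s,\tau)$, the law of $\tilde X_t$ is symmetric and unimodal about $0$ for every $t$; and (c) under the MMSE decoder \eqref{opt_dec_s}, the instantaneous distortion $(X_t - \bar X_t)^2$ depends on the past only through $\tilde X_t$ and the elapsed time since the last sample (up to the conditioning on $t<\tau_{i+1}$, which I will need to handle carefully).

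\textbf{Step 1 (reduction to a single-interval problem).} Using assumptions (iv)--(vi), I would argue that a Lagrangian relaxation converts the constrained problem $\min\{\text{MSE} : \text{frequency}\le F\}$ into minimizing $\mathbb E[\int_0^T (X_t-\bar X_t)^2 dt] + \lambda \mathbb E[N]$ for a suitable multiplier $\lambda\ge 0$. By the strong Markov property (i) and assumption (v), the stopping times form a Markov chain and the problem decomposes additively over the sampling intervals $[\tau_i,\tau_{i+1})$; the integrability conditions \eqref{si}, \eqref{si_d} ensure these per-interval costs are finite and that renewal-type arguments apply (this is where the i.i.d.\ structure of the optimal intervals, alluded to around \eqref{stp}, comes in). So it suffices to solve, for each $i$, the optimal stopping problem: choose a stopping time $\tau_{i+1}\ge\tau_i$ adapted to the residual process to minimize $\mathbb E[\int_{\tau_i}^{\tau_{i+1}}(X_t-\bar X_t)^2 dt + \lambda \mid X_{\tau_i},\tau_i]$.

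\textbf{Step 2 (symmetry of the optimal stopping region).} Here I would use majorization / rearrangement theory, as advertised in the contribution section. Fix the time $t$ and consider the continuation value as a function of the current residual $\tilde X_t = y$; because the driving noise $R_t(s,\tau)$ has an even and quasi-concave (hence unimodal symmetric) density and enters additively via \eqref{Yt_r} with $q_t(t)=1$, the transition kernel of $\tilde X_t$ is symmetric and ``peaked'' in the sense needed for a Schur-convexity argument. The instantaneous cost $y^2$ is itself symmetric and convex. A standard bang-bang / exchange argument then shows that among all stopping regions of a given probability content, the one minimizing expected cost-to-go is an interval symmetric about $0$: moving stopping mass toward the origin (where $y^2$ is small and continuation is cheap) can only help, by a coupling that pairs $y$ with $-y$. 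Translating $\tilde X_t$ back to $X_t$ gives the region $X_t - \mathbb E[X_t|X_{\tau_i},\tau_i]\in(-a_i(t,\tau_i),a_i(t,\tau_i))$ as in \eqref{dstp}, with $a_i$ deterministic in $(t,\tau_i)$ since the residual law depends on the past only through $(t,\tau_i)$. The real-induction argument would be used to promote this ``at each fixed $t$'' statement to a statement about the full continuous-time stopping rule, handling the nested dependence on $t<\tau_{i+1}$ and the continuity of paths (ii), (iii-b).

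\textbf{The main obstacle} I anticipate is the coupling between the decoder and the sampler: the MMSE estimate $\bar X_t$ in \eqref{opt_dec_s} is conditioned on the event $\{t<\tau_{i+1}\}$, so the instantaneous distortion is not simply a function of $\tilde X_t$ and elapsed time unless one already knows the stopping region is symmetric — which is what we are trying to prove. Breaking this circularity is the crux: I would set it up as a fixed-point over stopping policies, show that starting from \emph{any} sampling policy the ``best response'' decoder paired with the ``best response'' sampler has the symmetric-interval form, and then use real induction on $t$ (or a monotone-iteration / contraction argument on the Lagrangian cost) to conclude that the jointly optimal pair has the claimed structure. Verifying that the symmetry/unimodality of $R_t(s,\tau)$ is preserved under the relevant conditioning, and that Definition~\ref{fcode}'s MMSE decoder indeed collapses to an innovation-based filter under the symmetric policy (the content of the forthcoming Corollary~\ref{cor1}), is the delicate technical core.
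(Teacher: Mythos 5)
Your high-level ingredients — majorization, real induction on the time variable, exploiting the even/quasi-concave law of the residual, and the circularity between the conditioning on $\{t<\tau_{i+1}\}$ and the symmetry of the stopping region — are all the right ones, and you correctly identify the circularity as the crux. However, your proposed route diverges from the paper's in a way that leaves a genuine gap. You frame the problem as a Lagrangian relaxation $\mathrm{MSE}+\lambda\,\mathbb E[N]$ followed by a per-interval optimal-stopping problem, then a rearrangement/exchange argument on the stopping region, and finally a ``fixed-point over stopping policies'' to close the circularity. None of these three steps is justified as stated: (a) passing from the constrained problem to the Lagrangian and back requires strong duality, i.e., convexity of the distortion–frequency function, which is not established (indeed, convexity of a related quantity appears in the paper only as a \emph{sufficient condition} for assumption (v), in Proposition~\ref{prop4'}, not as a given); (b) the per-interval decomposition presupposes that stopping times are Markov and thresholds depend only on $(t,\tau_i)$, which is part of what must be proved; (c) the ``fixed-point / monotone-iteration'' step is left vague and it is unclear it terminates or has a unique fixed point. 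The paper avoids all of this with a direct \emph{policy-comparison} argument (Appendix~\ref{pfprop2}): starting from an arbitrary sampling-decision process $\{\mathcal P_t\}$, it constructs a specific symmetric threshold process $\{\mathcal P_t^{\mathrm{sym}}\}$ whose thresholds are chosen so that the conditional survival probabilities match, $\mathbb P[\tau_{i+1}'>t\,|\,\tau_i'=s]=\mathbb P[\tau_{i+1}>t\,|\,\tau_i=s]$ for all $s,t$ (Eq.~\eqref{constr2}). This matching, combined with the Markov property (v), immediately gives $\mathbb E[N']=\mathbb E[N]$ with no Lagrange multiplier. Then Lemma~\ref{l6'}, proved by real induction on $t$ using the convolution structure from (iii-b) and Lemmas~\ref{l2}--\ref{l3}, establishes \emph{simultaneously} that the conditional residual pdf under $\{\mathcal P_t^{\mathrm{sym}}\}$ is even and quasi-concave (so the MMSE estimate collapses to $0$, resolving the circularity inside the induction itself) and that it majorizes the conditional residual pdf under $\{\mathcal P_t\}$. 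Lemma~\ref{l4} then bounds the pointwise conditional MSE, and a law-of-total-expectation step finishes the proof. So the comparison policy does the work your Lagrangian and fixed-point were meant to do, with much less machinery; you would need to supply the missing strong-duality and fixed-point arguments to make your route rigorous, whereas the paper simply never needs them.
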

\begin{proof}
Appendix~\ref{pfprop2}.
\end{proof}
Theorem~\ref{prop2} shows that the optimal sampling policy is found within a much smaller set of sampling policies than that allowed in Definition~\ref{Dsf}: the input stochastic process $\{X_t\}_{t=0}^T$ is sampled only if the process innovation passes one of two symmetric thresholds. The thresholds depend on $\{X_t\}_{t=0}^T$ only through the current time $t$, the last sampling time, and the number of samples taken until $t$.
Using the form of the sampling policy~\eqref{dstp}, we show that the MMSE decoding policy \eqref{opt_dec_s} simplifies as follows.
\begin{corollary}\label{cor1}
In the setting of Theorem~\ref{prop2}, under the optimal sampling policy \eqref{dstp}, the MMSE decoding policy reduces to
\begin{equation}\label{cor17}
    \bar X_t = \mathbb E[X_t|X_{\tau_i},\tau_i],~t\in[\tau_i,\tau_{i+1}).
\end{equation}
\end{corollary}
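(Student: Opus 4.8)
The plan is to peel off the conditioning in the MMSE decoder \eqref{opt_dec_s} in two stages: first discard the older data $\{X_{\tau_j}\}_{j=1}^{i-1},\tau^{i-1}$ using the strong Markov property, and then show that the one remaining conditioning event $\{t<\tau_{i+1}\}$ has no effect, thanks to the symmetry of the threshold policy \eqref{dstp}.

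\emph{Step 1 (strong Markov reduction).} Under \eqref{dstp}, $\tau_{i+1}$ is the first time after $\tau_i$ at which the residual $X_s-\mathbb E[X_s\mid X_{\tau_i},\tau_i]$ exits the symmetric interval $(-a_i(s,\tau_i),a_i(s,\tau_i))$; hence $\{t<\tau_{i+1}\}$ is measurable with respect to $\sigma(\{X_s\}_{s=\tau_i}^{t},\tau_i)$, i.e., it is a functional of the trajectory after $\tau_i$ together with $(X_{\tau_i},\tau_i)$ (path-continuity (ii) reduces the intersection over $s\in[\tau_i,t]$ to rational $s$). Writing $\bar X_t$ as the ratio of $\mathbb E[X_t\,\mathbf{1}\{t<\tau_{i+1}\}\mid \{X_{\tau_j}\}_{j=1}^{i},\tau^{i}]$ to $\mathbb P[t<\tau_{i+1}\mid \{X_{\tau_j}\}_{j=1}^{i},\tau^{i}]$, I would apply assumption (i): conditioned on $(X_{\tau_i},\tau_i)$, the post-$\tau_i$ trajectory is independent of $\mathcal F_{\tau_i}\supseteq\sigma(\{X_{\tau_j}\}_{j=1}^{i-1},\tau^{i-1})$, so both the numerator and the denominator depend on $(\{X_{\tau_j}\}_{j=1}^{i},\tau^{i})$ only through $(X_{\tau_i},\tau_i)$. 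This yields $\bar X_t=\mathbb E[X_t\mid X_{\tau_i},\tau_i,\,t<\tau_{i+1}]$.

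\emph{Step 2 (the exit event is innocuous).} Decompose $X_t=\mathbb E[X_t\mid X_{\tau_i},\tau_i]+\tilde X_t$. The first summand is $\sigma(X_{\tau_i},\tau_i)$-measurable, hence unchanged by the extra conditioning, so it remains to show $\mathbb E[\tilde X_t\mid X_{\tau_i},\tau_i,\,t<\tau_{i+1}]=0$. Conditioned on $(X_{\tau_i},\tau_i)=(x,\tau)$, assumption (iii-a) makes the law of the residual path $\{\tilde X_s\}_{s=\tau}^{t}$ independent of $x$, Markov, and anchored at $\tilde X_\tau=0$; by (iii-b) and the evenness of the pdf of $R_t(s,\tau)$, its transition kernel from time $s$ to time $t$ carries $y$ to the law of $q_t(s)\,y+R_t(s,\tau)$, which is symmetric about $q_t(s)\,y$. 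Since the initial point $0$ and every such kernel are invariant under the reflection $y\mapsto -y$, we get $\{\tilde X_s\}_{s=\tau}^{t}\stackrel{d}{=}\{-\tilde X_s\}_{s=\tau}^{t}$; and $\{t<\tau_{i+1}\}=\bigcap_{s\in[\tau,t]}\{\tilde X_s\in(-a_i(s,\tau),a_i(s,\tau))\}$ is itself reflection-invariant because the thresholds are symmetric. Hence $(\tilde X_t,\mathbf{1}\{t<\tau_{i+1}\})\stackrel{d}{=}(-\tilde X_t,\mathbf{1}\{t<\tau_{i+1}\})$, so $\mathbb E[\tilde X_t\,\mathbf{1}\{t<\tau_{i+1}\}]=0$, and with $\tilde X_t\in L^1$ and $\mathbb P[t<\tau_{i+1}\mid X_{\tau_i},\tau_i]>0$ for $t\in[\tau_i,\tau_{i+1})$ (so the ratio is well defined) the conditional mean vanishes. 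This gives $\bar X_t=\mathbb E[X_t\mid X_{\tau_i},\tau_i]$, i.e., \eqref{cor17}.

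\emph{Main obstacle.} The delicate step is Step 1: conditioning on the \emph{event} $\{t<\tau_{i+1}\}$, which is not $\mathcal F_{\tau_i}$-measurable, must be reconciled with the strong Markov property, and this works precisely because the explicit form \eqref{dstp} lets us express $\{t<\tau_{i+1}\}$ through the trajectory after $\tau_i$ only; one must also check that the conditional probabilities appearing in the ratio are strictly positive so that the rearrangement is legitimate. A secondary point, already implicit in the Markov structure of \eqref{Yt_r} (and visible in the worked examples following it), is that $R_t(s,\tau)$ enters additively and independently of $\tilde X_s$, which is what makes the transition kernel in Step 2 reflection-invariant.
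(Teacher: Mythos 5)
Your proposal is correct and follows the same overall structure as the paper's argument: decompose $X_t=\mathbb E[X_t\mid X_{\tau_i},\tau_i]+\tilde X_t$, use independence/Markov structure to strip the conditioning on old data, and then argue that the conditional mean of the residual $\tilde X_t$ given the threshold-survival event is zero by symmetry. The two proofs differ only in how the final zero-mean fact is justified. The paper derives it by invoking the evenness of the conditional density $f_{\tilde X_t'\mid\tau_i',\tau_{i+1}'>t}$, which is a byproduct of the real-induction argument in Lemma~\ref{l6'} (needed there in the stronger form of evenness plus quasi-concavity for the majorization comparison in Theorem~\ref{prop2}), and it uses assumptions (iii-a) and (v) to reduce $\bar{\tilde X}_t$ to $\mathbb E[\tilde X_t\mid\tau_i,t<\tau_{i+1}]$. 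You instead give a self-contained path-level reflection-invariance argument: the residual process starts at $0$, the transition kernel $y\mapsto q_t(s)y+R_t(s,\tau)$ is odd-equivariant because $R_t(s,\tau)$ has an even pdf and enters additively and independently, and the survival event is a symmetric cylinder set; hence $(\tilde X_t,\mathbf 1\{t<\tau_{i+1}\})\stackrel{d}{=}(-\tilde X_t,\mathbf 1\{t<\tau_{i+1}\})$. This is cleaner for the purposes of Corollary~\ref{cor1} alone, since evenness (not quasi-concavity) suffices, and it sidesteps assumption (v) by reading the event $\{t<\tau_{i+1}\}$ directly off the explicit threshold form \eqref{dstp}. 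The one place to be explicit, as you flag yourself, is the independence of $R_t(s,\tau)$ from $\tilde X_s$; the paper treats this as part of the model (iii-a)/(iii-b) and appeals to it in the same way in the proof of Lemma~\ref{l6'}, so your reliance on it is in line with the paper's reading of those assumptions.
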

\begin{proof}
Appendix~\ref{pfcor1}.
\end{proof}
In the frequency-constrained setting, the expectation in \eqref{cor17} can be calculated at the decoder even without the knowledge of the sampling policy, whereas the expectation in \eqref{opt_dec_s}  depends on the sampling policy at the encoder through the conditioning on the event that the next sample has not been taken yet, i.e., $t<\tau_{i+1}$. Corollary~\ref{cor1} confirms the conjecture in \cite[Eq.(3)]{Rabi} and \cite[Eq.(5)]{Nar} on the form of the MMSE decoding policy. 

\begin{corollary}\label{cor2}
In the setting of Theorem~\ref{prop2}, the optimal causal sampling policy satisfies \eqref{comm_cons_f} with equality.
\end{corollary}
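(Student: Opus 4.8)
The plan is to argue by contradiction: suppose an optimal symmetric threshold policy of the form \eqref{dstp} uses the sampling frequency strictly below $F$, and show that we can construct a policy with strictly smaller MSE while still meeting the frequency constraint, contradicting optimality. The intuition is monotonicity — sampling more often can only help the decoder, so at the optimum the rate budget should be exhausted.

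First I would set up the perturbation. Given an optimal policy with thresholds $\{a_i(t,\tau_i)\}$ whose frequency $F' \triangleq \lim \mathbb E[N]/T < F$, I would scale the thresholds down by a factor $\lambda \in (0,1)$, i.e. replace $a_i(t,\tau_i)$ by $\lambda\, a_i(t,\tau_i)$. By the continuity of paths (assumption (ii)) and the structure of the innovation process in \eqref{Yt_r}, shrinking the thresholds shrinks each sampling interval $\tau_{i+1}-\tau_i$ (pathwise, or at least stochastically), so the new frequency $F_\lambda$ is nondecreasing as $\lambda$ decreases, with $F_1 = F' $ and $F_\lambda \to \infty$ (or at least exceeds $F$) as $\lambda \to 0$ — here I would invoke assumption (iv), specifically that the original intervals have finite expectation, and the fact that with vanishing thresholds the process is resampled essentially continuously. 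Some care is needed: I should verify $F_\lambda$ is continuous in $\lambda$, which follows from dominated convergence using \eqref{si} together with the continuity of the first-passage time of the innovation through symmetric thresholds as a function of the threshold level (a consequence of (ii), (iii-b) and the quasi-concave even density in (iii)). By the intermediate value theorem, there is some $\lambda^\star \in (0,1)$ with $F_{\lambda^\star} = F$.

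Next I would show the MSE strictly decreases. Under the scaled-down policy, within each interval the estimation error $X_t - \bar X_t = \tilde X_t$ (using Corollary~\ref{cor1}, whose proof only uses the form \eqref{dstp} and hence applies verbatim to the scaled policy) stays inside $(-\lambda^\star a_i, \lambda^\star a_i)$, strictly smaller in magnitude than under the original policy on a set of positive probability, so the integrand $(X_t-\bar X_t)^2$ is pointwise no larger and strictly smaller with positive probability; integrating and normalizing by $T$ gives a strictly smaller MSE (using \eqref{si_d} to justify the interchange of limits/expectation in the $T=\infty$ case, and the renewal-type i.i.d.\ structure of the intervals mentioned after \eqref{si_d}). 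This contradicts the optimality of the original policy, so $F' = F$, i.e.\ \eqref{comm_cons_f} holds with equality.

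The main obstacle I expect is the continuity-in-$\lambda$ and the ``$F_\lambda$ exceeds $F$ for small $\lambda$'' claims — i.e.\ making rigorous that the expected first-passage time of the innovation through thresholds $\pm\lambda a_i(t,\tau_i)$ is continuous and decreasing in $\lambda$ and tends to $0$ (uniformly enough) as $\lambda\to 0$. For time-inhomogeneous thresholds this is a statement about a family of optimal-stopping/first-exit times, and one must rule out pathologies where shrinking the threshold fails to shrink the interval (e.g.\ if $a_i(t,\tau_i)$ already vanishes as $t\to\tau_i$ at a rate comparable to the innovation's local modulus of continuity). The even, quasi-concave density assumption in (iii) and the explicit decomposition \eqref{Yt_r} are exactly what I would lean on here to control the first-passage law as a function of the threshold scale; an alternative, cleaner route is to avoid global scaling and instead only shrink the thresholds on a single interval (say the first one, or on a sparse subsequence of intervals) by a tiny amount, which changes the frequency and MSE by controllably small amounts and suffices for the contradiction, sidestepping the need for a full continuity argument.
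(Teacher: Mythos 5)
Your threshold-scaling argument has genuine gaps that the paper's proof avoids entirely. First, the ``pointwise no larger'' MSE comparison is not valid as stated: under the scaled policy the sampling times themselves change, so at a given time $t$ the error $X_t-\bar X_t$ is the innovation relative to a \emph{different} last sample $(X_{\tau_i^{\lambda}},\tau_i^{\lambda})$, and its bound is $\lambda a_i(t,\tau_i^{\lambda})$, not $\lambda a_i(t,\tau_i)$; since $a_i(\cdot,\cdot)$ is an arbitrary nonnegative function of both arguments, there is no pathwise domination of $(X_t-\bar X_t)^2$ between the two policies. Second, your contradiction needs a \emph{strict} MSE decrease, which you assert but do not prove, and it needs $F_\lambda$ to be continuous in $\lambda$ and to exceed $F$ for small $\lambda$ — the step you correctly flag as the main obstacle but leave open; first-exit times through time-varying thresholds $\pm\lambda a_i(t,\tau_i)$ need not depend continuously (or even monotonically, across the shifted renewal structure) on $\lambda$ without further argument, especially in the infinite-horizon $\limsup$ formulation. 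Your fallback of perturbing the thresholds only on a single interval or a sparse subsequence fails outright for $T=\infty$: such a perturbation changes neither the $\limsup$ frequency nor the long-run MSE, so it produces no contradiction.

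The paper's proof takes a much simpler constructive route that sidesteps all of this: given any policy with slack $FT-\mathbb E[N_T]=I+D$ ($I$ the integer part, $D$ the fractional part), it \emph{adds} $I$ deterministic sampling instants plus one randomized instant used with probability $D$, exactly exhausting the budget with no continuity or intermediate-value argument. Adding a sample at a deterministic time $t'$ never increases the MSE because, by the strong Markov property, $\mathbb E[X_t|X_{t'}]=\mathbb E[X_t|\mathcal F_{t'}]$ and $\sigma(X_{\tau_{N_{t'}}},\tau_{N_{t'}})\subseteq\mathcal F_{t'}$, so the residual error on $[t',\tau_{N_{t'}+1})$ can only shrink. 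This yields a policy meeting \eqref{comm_cons_f} with equality and MSE no worse than the original, which is all the corollary requires. To salvage your route you would at minimum have to prove continuity and monotonicity of the sampling frequency in the scale $\lambda$ and replace the pointwise error comparison with a coupling or majorization argument; the randomized-insertion trick is the cleaner way to hit the constraint exactly.
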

\begin{proof}
Appendix~\ref{pfcor2}.
\end{proof}
Corollary~\ref{cor2} indicates that the inequality in the sampling frequency constraint \eqref{comm_cons_f} can be simplified to an equality.

\begin{corollary}\label{cor3}
In the setting of Theorem~\ref{prop2}, the threshold in \eqref{dstp} satisfies
\begin{align}\label{greater}
    \lim_{\delta\rightarrow 0^+} a_i(t+\delta,\tau_i) \geq a_i(t,\tau_i),~\forall t\in[\tau_i,\tau_{i+1}), i=0,1,\dots
\end{align}
\end{corollary}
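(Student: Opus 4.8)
The plan is a proof by contradiction: I would show that a downward jump of the threshold at some time $t_0$ forces the sampling time $\tau_{i+1}$ to put a positive mass on $\{t_0\}$, which is impossible under assumption (vi) (existence of the conditional pdf $f_{\tau_{i+1}\mid\tau_i}$). First I would condition on $\tau_i=\tau$ for a fixed $\tau$ in the support of $\tau_i$ and write $\tilde X_t\triangleq X_t-\mathbb E[X_t\mid X_\tau,\tau]$, so that \eqref{dstp} reads $\tau_{i+1}=\inf\{t\ge\tau:\tilde X_t\notin(-a_i(t,\tau),a_i(t,\tau))\}$ with $\tilde X_\tau=0$. The ingredients I would use are: independence of $\{\tilde X_t\}_{t\ge\tau}$ from $\mathcal F_\tau$ and its continuous paths ((ii), (iii-a), (iii-b)); the fact that $\tilde X_t=R_t(\tau,\tau)$ has an even, quasi-concave pdf, hence one strictly positive on the interior of its support; and the symmetry of $\{\tilde X_t\}_{t\ge\tau}$ under sign reversal. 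I would also, without loss of generality, take $a_i$ to be the \emph{minimal} threshold generating the optimal policy; equivalently, for $t>\tau$, $a_i(t,\tau)$ equals the essential supremum of $|\tilde X_t|$ on the event $S_t\triangleq\{\tilde X_s\in(-a_i(s,\tau),a_i(s,\tau))\ \forall s\in[\tau,t)\}=\{\tau_{i+1}\ge t\}$ that no sample has been generated before $t$. This normalization matters: a threshold taken strictly above that essential supremum could jump down harmlessly, and \eqref{greater} would fail.

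Assume toward a contradiction that $L\triangleq\liminf_{\delta\to0^+}a_i(t_0+\delta,\tau)<a_i(t_0,\tau)\triangleq A$ at some $t_0$ with $P(\tau_{i+1}\ge t_0\mid\tau_i=\tau)>0$, and fix $c\in(0,A-L)$. The crucial step is to show that $S_{t_0}\cap\{\tilde X_{t_0}\in(A-c,A)\}$ has positive conditional probability given $\tau_i=\tau$; this is where I would combine the strict positivity of the innovation pdf on the interval $(A-c,A)$, the essential-supremum characterization of $A$ (so that $(A-c,A)$ carries positive mass on $S_{t_0}$), and symmetry. On this event no sample has occurred in $[\tau,t_0)$, hence $\tau_{i+1}\ge t_0$; on the other hand, by path continuity $\tilde X_{t_0+\delta}>A-c$ for all $\delta$ in a (random) right neighborhood of $0$, while $a_i(t_0+\delta_n,\tau)<A-c$ along some $\delta_n\downarrow0$, so $t_0+\delta_n$ lies in the exit set for all large $n$ and therefore $\tau_{i+1}\le t_0+\delta_n\to t_0$. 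Thus $\tau_{i+1}=t_0$ on a set of positive probability, i.e.\ $P(\tau_{i+1}=t_0\mid\tau_i=\tau)>0$, contradicting the existence of $f_{\tau_{i+1}\mid\tau_i}$ in (vi). Hence $L\ge A$, which gives \eqref{greater} (with $\lim$ read as $\liminf$ if necessary, the ordinary limit existing wherever the construction in the proof of Theorem~\ref{prop2} guarantees one-sided limits). The endpoint $t=\tau_i$ follows by sending $t_0\downarrow\tau_i$ under the convention $a_i(\tau_i,\tau_i)\le\liminf_{\delta\to0^+}a_i(\tau_i+\delta,\tau_i)$.

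The hard part will be the positive-probability claim for $S_{t_0}\cap\{\tilde X_{t_0}\in(A-c,A)\}$; everything downstream is pathwise and elementary once that is in hand. To make it airtight I expect to need: (a) $a_i(s,\tau)>0$ for all $s\in[\tau,t_0)$ — otherwise the process would be sampled before $t_0$ almost surely, contradicting $P(\tau_{i+1}\ge t_0\mid\tau_i=\tau)>0$ — so that the tube $\{(s,x):|x|<a_i(s,\tau)\}$ has positive width on $[\tau,t_0]$; and (b) a support-type property of the innovation paths within the class of Section~\ref{IA}: for the canonical examples $X_t=g_1(t)W_{g_2(t)}+g_3(t)$ the innovation is a scaled, time-changed Brownian motion, for which the Stroock--Varadhan support theorem delivers precisely the required positivity of $P(S_{t_0},\ \tilde X_{t_0}\approx A\mid\tau_i=\tau)$, with the essential-supremum normalization of $a_i$ ensuring that conditioning on $S_{t_0}$ does not push all the mass away from $A$ at time $t_0$.
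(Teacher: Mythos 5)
Your core idea is the right one, and it is the same ingredient the paper uses: a downward jump of the threshold at $t_0$ would put positive conditional mass on $\{\tau_{i+1}=t_0\}$, which contradicts (vi). Stated equivalently, $\mathbb P[\tau_{i+1}=t_0\mid\tau_i=s]>0$ is the same as a jump of $t\mapsto\mathbb P[\tau_{i+1}>t\mid\tau_i=s]$, and the paper's proof is nothing but the statement that this map is continuous. So you and the paper are arguing contrapositives of the same implication.

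Where you diverge is in how much work it takes to close the loop, and your route does have a gap. The paper does not re-derive the relationship between the threshold and the sampling distribution; it reads it off from the matching identity \eqref{constr2} already established in the proof of Theorem~\ref{prop2}: the tube-staying probability $\mathbb P[\tilde X_r'\in(-a_i(r,s),a_i(r,s)),\forall r\in[s,t]\mid\tau_i'=s]$ is \emph{defined} to equal $\mathbb P[\tau_{i+1}>t\mid\tau_i=s]$, and the latter is continuous in $t$ by (vi). This gives \eqref{prob_ineq} with no further work, and then path continuity of $\tilde X'$ rules out a downward jump. You instead try to establish, from scratch, that a downward jump would force a discontinuity in the tube-staying probability — this is exactly your positive-probability claim that $S_{t_0}\cap\{\tilde X_{t_0}\in(A-c,A)\}$ carries positive mass. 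You flag this as the hard part, and rightly so: your proposed patch (Stroock--Varadhan for processes of the form $g_1 W_{g_2}+g_3$) would only handle specific examples, not the full class of Section~\ref{IA}, so as written the step does not go through in the required generality.

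The ``minimal threshold'' normalization you invoke is in fact the observation that dissolves the difficulty — but you should realize it is already baked in. The thresholds in Corollary~\ref{cor3} are not arbitrary: they are the ones \emph{constructed} in \eqref{constr2} to match a given continuous, nonincreasing target probability. Once you view $a_i(\cdot,s)$ as the output of that construction, a downward jump into a zero-density region simply never arises, and there is nothing left to prove beyond \eqref{prob_ineq} plus path continuity. Your proposal is not wrong in spirit, but it re-derives \eqref{constr2} in disguise and then gets stuck trying to prove a support-positivity lemma that the construction renders unnecessary.
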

\begin{proof}
Appendix~\ref{pfcor3}.
\end{proof}
Corollary~\ref{cor3} implies that the threshold $a_i(t,\tau_i)$, at time $t\in[\tau_i,\tau_{i+1})$, is either right-continuous or has a jump to a larger value. Thus, the continuous-path process $X_t-\mathbb E[X_t|X_{\tau_i},\tau_i]$ in \eqref{dstp} must hit one of the symmetric thresholds $\pm a_i(\tau_{i+1},\tau_i)$ at $t=\tau_{i+1}$.

\begin{definition}[time-homogeneous process]\label{def_th}
The process $\{X_t\}_{t= 0}^T$ is called \emph{time-homogeneous}, if for a stopping time $\tau\in[0,T]$ and a constant $s\in[0,T-\tau]$, $X_{s+\tau}-\mathbb E[X_{s+\tau}|X_{\tau}]$ follows a distribution that only depends on $s$.
\end{definition}

\begin{theorem}\label{prop3}
In the infinite time horizon, the optimal causal sampling policy for time-homogeneous continuous Markov processes satisfying assumptions (i)--(iii) in Section~\ref{IA} is a symmetric threshold sampling policy of the form
\begin{equation}\label{stp}
\begin{aligned}
    \tau_{i+1} = \inf\{t\geq \tau_i: &X_t-\mathbb E[X_t|X_{\tau_i},\tau_i]\\&\notin (-a(t-\tau_i),a(t-\tau_i))\},
\end{aligned}
\end{equation}
where the threshold $a$ is a non-negative deterministic function of $t-\tau_i$.
The optimal threshold of \eqref{stp} is the solution to the following optimization problem,
\begin{equation} \label{solvet}
    \underline{D}(F) = \min_{\substack{\{a(t)\}_{t\geq 0}\colon\\ \mathbb E[\tau_1]=\frac{1}{F}}} \frac{\mathbb E\left[\int_{0}^{\tau_1} (X_t-\mathbb E[X_t]^2) dt\right]}{\mathbb E[\tau_1]}.
\end{equation}
\end{theorem}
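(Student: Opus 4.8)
The plan is to start from the general form of the optimal sampling policy in Theorem~\ref{prop2} and specialize it to the time-homogeneous, infinite-horizon case, then convert the long-term average MSE into the renewal-theoretic ratio in~\eqref{solvet}. First, I would argue that for a time-homogeneous process the threshold function $a_i(t,\tau_i)$ in~\eqref{dstp} can be taken to depend on $(t,\tau_i)$ only through the elapsed time $t-\tau_i$, and moreover that it is independent of the index $i$. The key observation is that, by Corollary~\ref{cor1}, the running estimate is $\bar X_t=\mathbb E[X_t\mid X_{\tau_i},\tau_i]$, so the estimation error on $[\tau_i,\tau_{i+1})$ is exactly the residual $\tilde X_t=X_t-\mathbb E[X_t\mid X_{\tau_i},\tau_i]$; by assumption (iii-a) this residual is independent of $\mathcal F_{\tau_i}$, and by time-homogeneity (Definition~\ref{def_th}) its law depends only on $t-\tau_i$. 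Hence the optimal stopping problem faced at each $\tau_i$ — choose a threshold shape to minimize future MSE per unit time — is the same problem regardless of $i$ or of the value of $\tau_i$, which forces a common threshold $a(\cdot)$ of the elapsed time. This also shows that, conditioned on $\tau_i$, the next interval $\tau_{i+1}-\tau_i$ has a distribution not depending on $i$, so by assumption (v) the sampling intervals $\{\tau_{i+1}-\tau_i\}$ are i.i.d.; combined with assumption (iv), $\mathbb E[\tau_{i+1}-\tau_i]<\infty$, the stopping times form a renewal process.

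Next I would compute the long-term average MSE using the renewal-reward theorem. Write the horizon-$T$ cumulative error as $\int_0^T(X_t-\bar X_t)^2\,dt=\sum_i\int_{\tau_i}^{\tau_{i+1}\wedge T}\tilde X_t^2\,dt$. The pairs $\big(\tau_{i+1}-\tau_i,\ \int_{\tau_i}^{\tau_{i+1}}\tilde X_t^2\,dt\big)$ are i.i.d. across $i$ by the previous paragraph, with finite means by~\eqref{si} and~\eqref{si_d}. The renewal-reward theorem (or the elementary renewal theorem applied to both numerator and denominator) then gives
\begin{equation}\label{prop3_rr}
\limsup_{T\to\infty}\frac{1}{T}\mathbb E\left[\int_0^T(X_t-\bar X_t)^2\,dt\right]=\frac{\mathbb E\left[\int_{\tau_0}^{\tau_1}\tilde X_t^2\,dt\right]}{\mathbb E[\tau_1-\tau_0]},
\end{equation}
and similarly $\limsup_{T\to\infty}\mathbb E[N]/T=1/\mathbb E[\tau_1-\tau_0]$, so the frequency constraint~\eqref{comm_cons_f_b} becomes $\mathbb E[\tau_1]\le 1/F$. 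Since $\tau_0=0$ and $X_0=0$, on the first interval $\bar X_t=\mathbb E[X_t\mid X_0=0]=\mathbb E[X_t]$, so $\tilde X_t=X_t-\mathbb E[X_t]$ and the numerator of~\eqref{prop3_rr} equals $\mathbb E\big[\int_0^{\tau_1}(X_t-\mathbb E[X_t])^2\,dt\big]$, matching the objective in~\eqref{solvet}. Minimizing over admissible threshold shapes $\{a(t)\}_{t\ge0}$ then yields~\eqref{solvet}; to see that one may impose $\mathbb E[\tau_1]=1/F$ rather than $\le 1/F$, invoke Corollary~\ref{cor2}, which states the optimal policy meets the frequency constraint with equality, so the constrained minimum is unchanged.

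One technical point to handle carefully is that $\underline D(F)$ in Definition~\ref{Dsf} is an infimum over all codes satisfying (iv)--(vi), whereas the right side of~\eqref{solvet} is a minimum over threshold policies of a specific form; Theorem~\ref{prop2} identifies the optimizer as a symmetric threshold policy, so the achievability direction ($\underline D(F)$ is at most the right side) is immediate, and the converse ($\underline D(F)$ is at least the right side) follows because every admissible code has MSE at least that of the best symmetric threshold policy with the same frequency by Theorem~\ref{prop2}. I expect the main obstacle to be the rigorous justification that the elapsed-time threshold $a(t-\tau_i)$ is \emph{index-independent} and \emph{$\tau_i$-value-independent} — i.e., that the per-interval optimal stopping problem genuinely decouples across intervals. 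This requires care because in the finite-horizon version of Theorem~\ref{prop2} the threshold legitimately depends on $(t,\tau_i)$ separately (there is a boundary effect near $T$), and one must show that in the infinite-horizon time-homogeneous setting the optimal cost-per-interval is a fixed stationary quantity $\underline D(F)$, so that a dynamic-programming / renewal argument (each interval sees the same residual process law by (iii-a) and Definition~\ref{def_th}, and the same continuation value $\underline D(F)/F$) forces the stationary threshold. Establishing this fixed-point / stationarity statement — essentially that the value function of the infinite-horizon average-cost problem is constant across regeneration epochs — is the crux; once it is in place, the renewal-reward computation~\eqref{prop3_rr} and the reduction to~\eqref{solvet} are routine.
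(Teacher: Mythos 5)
Your overall route is the same as the paper's: specialize Theorem~\ref{prop2} to stationary thresholds using time-homogeneity and (iii-a), deduce that the sampling intervals are i.i.d., and convert the long-run average MSE into a per-cycle ratio via the renewal-reward theorem (the paper invokes Lemma~\ref{l6} with $Z_i\leftarrow\tau_{i+1}-\tau_i$ and $R_i\leftarrow\int_{\tau_i}^{\tau_{i+1}}\tilde X_t^2\,dt$, exactly your computation), with the first-cycle objective reducing to $\mathbb E\bigl[\int_0^{\tau_1}(X_t-\mathbb E[X_t])^2dt\bigr]$ because $X_0=0$ is known. That part of your argument is sound, modulo one slip: the renewal form of \eqref{comm_cons_f_b} is $1/\mathbb E[\tau_1]\leq F$, i.e.\ $\mathbb E[\tau_1]\geq 1/F$, not $\leq 1/F$; the passage to equality still follows from Corollary~\ref{cor2}.

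The step you yourself flag as the crux --- that the optimal threshold is independent of the index $i$ and of the realized value of $\tau_i$ --- is, however, only asserted (``the optimal stopping problem faced at each $\tau_i$ is the same, which forces a common threshold'') and then deferred, and it is precisely where Appendix~\ref{pfprop3} does its work; as written, your proof of \eqref{stp}, and hence of the i.i.d.\ structure feeding the renewal-reward computation, is incomplete. The paper closes this by a truncation-plus-shift argument on the infinite-horizon DFF $\underline{D}^{\infty}(F)$ rather than by an average-cost Bellman/fixed-point argument: by assumption (iv) (finite $\mathbb E[\tau_{i+1}-\tau_i]$ and finite per-interval MSE, \eqref{si}--\eqref{si_d}) the contribution of $[0,\tau_i]$ to the long-run average vanishes, so the infimum restricted to policies agreeing with the optimal one up to $\tau_i$ still equals $\underline{D}^{\infty}(F)$; then the time-shift identity (the law of $\{\tilde X_t\}_{t\geq\tau_i}$ equals that of $\{\tilde X_{t-\tau_i}\}_{t-\tau_i\geq 0}$ and is independent of $\mathcal F_{\tau_i}$, by Definition~\ref{def_th} and (iii-a)) shows that this restricted infimum is attained by the time-shifted copy of the optimizer started at $0$, forcing $a_i(s+\tau_i,\tau_i)=a_0(s,0)$ for all $i$ and $s$, which is \eqref{stp}. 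To complete your proposal you would need to write out this step (or an equivalent one); your intuition about ``the same continuation value $\underline D(F)/F$'' is the right idea, but it needs exactly this (iv)-based asymptotic-negligibility-plus-shift argument to become a proof, especially given the finite-horizon boundary effect you correctly note makes the general threshold of \eqref{dstp} depend on $(t,\tau_i)$ separately.
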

\begin{proof}
Appendix~\ref{pfprop3}.
\end{proof}
\begin{remark}\label{rmk1}
In the setting of Theorem~\ref{prop3}, the sampling intervals $\tau_{i+1}-\tau_i$, $i=0,1,\dots$ under a symmetric threshold sampling policy of the form \eqref{stp} are i.i.d.
\end{remark}
Theorem~\ref{prop3} shows that the optimal sampling policy in Theorem~\ref{prop2} can be further simplified for time-homogeneous processes in the infinite time horizon. As a consequence of time homogeneity, thresholds in \eqref{stp} only depend on the time elapsed since the last sampling time. In contrast, the thresholds in \eqref{dstp} depend on the last sampling time as well.

For example, applying \eqref{solvet} to the time-homogeneous process $X_t = cW_{at} + bt$, $a,b,c\in\mathbb R$, $a>0$, we conclude that the sampling threshold that achieves \eqref{solvet} is equal to $a(t) = c\sqrt{\frac{a}{F}}$ and that $\underline{D}(F) = \frac{ac^2}{6F}$.

\section{Causal rate-constrained sampling}\label{III}
In this section, we formally introduce the causal rate-constrained sampling problem, and we leverage Theorem~\ref{prop2} in Section~\ref{OSP} to find the causal code that achieves the optimal tradeoff between the communication rate and the MSE.
\subsection{Causal rate-constrained code}
We formally define encoding and decoding policies, and define a distortion-rate function (DRF) to describe the tradeoffs between \eqref{comm_cons} and \eqref{MSE}.

\begin{definition}[($R,d,T$)  causal rate-constrained codes]\label{rdt} A time horizon-$T$ causal rate-constrained code for the stochastic process $\{X_t\}_{t=0}^T$ is a pair of encoding and decoding policies. The encoding policy consists of a causal sampling policy and a causal compressing policy.
\begin{itemize}
    \item[1.] The causal sampling policy, defined in Definition~\ref{fcode}-1, decides the stopping times \eqref{timestamp} at which codewords are generated.

\item[2.]  The causal compressing policy, characterized by the $\mathbb Z_+$-valued process $\{f_t\}_{t=0}^T$ adapted to $\{\mathcal F_t\}_{t=0}^T$, 
decides the codeword to transmit at time $\tau_i$,
\begin{equation}
   U_i = f_{\tau_i}.
\end{equation}
\end{itemize}

Given an encoding policy, the MMSE \emph{decoding} policy uses the received codewords and codeword-generating time stamps to estimate the process,
\begin{equation}\label{opt_dec}
    \hat X_t = \mathbb E[X_t|U^i,\tau^i,t<\tau_{i+1}], ~ t\in[\tau_i,\tau_{i+1}). 
\end{equation}
In an $(R,d,T)$ code, the lengths of the codewords must satisfy the average communication rate constraint $R$ bits per sec in~\eqref{comm_cons}, while the MSE must satisfy \eqref{MSE}.

\end{definition}
Allowing more freedom in designing the decoding policy will not lead to a lower MSE, because \eqref{opt_dec} is the MMSE estimator. 

\begin{definition}[Distortion-rate function (DRF)]\label{def2}
The DRF for causal rate-constrained sampling of the process $\{X_t\}_{t=0}^T$ is the minimum MSE achievable by causal rate-$R$ codes:
\begin{equation}\label{DOP}
\begin{aligned}
    D(R) \triangleq\inf\{d:~ &\exists~ (R,d,T)~\text{causal rate-constrained}\\ &\text{ code satisfying (iv), (v), (vi)}\}.
\end{aligned}
\end{equation}
\end{definition}
We say that a causal $(R,d,T)$ code is \emph{optimal} if $d=D(R)$.

\subsection{Optimal causal codes}\label{iv}
We proceed to show that the sampling policies in Theorem~\ref{prop2} remain optimal in the scenario of rate-constrained sampling.
Towards that end, we introduce a class of causal codes, namely, the sign-of-innovation (SOI) codes. We prove that an SOI code is the optimal code as long as the process satisfies the assumptions (i)--(iii) in Section~\ref{IA}.

\begin{definition}[A Sign-of-innovation (SOI) code]\label{def}
The SOI code for a continuous-path process $\{X_t\}_{t=0}^T$ consists of an encoding and a decoding policy. Given a symmetric threshold sampling policy in \eqref{dstp} that satisfies (iv)--(vi), at each stopping time $\tau_i$, $i=1,2,\dots$, the SOI encoding policy generates a $1$-bit codeword
\begin{equation} \label{SOI}
U_i=
    \begin{cases} 
      1 & \text{if}\quad X_{\tau_{i}}-\mathbb E[X_{\tau_{i}}|X_{\tau_{i-1}},\tau_{i-1}]=a_{i-1}(\tau_{i},\tau_{i-1}) \\
      0 & \text{if}\quad X_{\tau_{i}}-\mathbb E[X_{\tau_{i}}|X_{\tau_{i-1}},\tau_{i-1}]=-a_{i-1}(\tau_{i},\tau_{i-1}).
   \end{cases}
\end{equation}
At time $\tau_{i}$, the MMSE decoding policy noiselessly recovers $X_{\tau_{i}}$, $i=1,2,\dots$ via the received codewords $U^i$, 
\begin{equation}\label{rc}
    X_{\tau_{i}} = (2U_i-1)a_{i-1}(\tau_{i},\tau_{i-1})+\mathbb E[X_{\tau_{i}}|X_{\tau_{i-1}},\tau_{i-1}], 
\end{equation}
and uses \eqref{cor17} as the estimate of $X_t$ until $U_{i+1}$ arrives.
\end{definition}

\begin{theorem}\label{thm1}
In either finite or infinite time horizon, for a process $\{X_t\}_{t=0}^T$ satisfying assumptions (i)--(iii) in Section~\ref{IA}, the SOI code, whose stopping times are decided by the optimal symmetric threshold sampling policy \eqref{dstp} with average sampling frequency \eqref{comm_cons_f} $F=R$, is the optimal causal code.
\end{theorem}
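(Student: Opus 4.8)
The plan is to establish a matching converse and achievability. For the converse, I would show that no $(R,d,T)$ causal rate-constrained code can do better than the SOI code, by reducing the rate-constrained problem to the frequency-constrained problem of Section~\ref{SecII}. The key observation is that any causal rate-$R$ code transmits, at each stopping time $\tau_i$, a codeword $U_i$ of length $\ell(U_i)\geq 1$ bit; hence the expected number of codewords satisfies $\mathbb E[N]/T \leq \mathbb E[\sum_i \ell(U_i)]/T \leq R$. Therefore, given the received $(U^i,\tau^i)$, the decoder has strictly less information than it would have from the real-valued samples $\{X_{\tau_j}\}_{j=1}^i$, so the MSE of any rate-$R$ code is lower bounded by $\underline D(F)$ evaluated at $F=R$; that is, $D(R)\geq \underline D(R)$. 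By Theorem~\ref{prop2} and Corollary~\ref{cor2}, the frequency-constrained optimum $\underline D(R)$ is achieved by a symmetric threshold sampling policy \eqref{dstp} with $\mathbb E[N]/T = R$, succeeded by the MMSE decoder \eqref{cor17}.

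For achievability, I would verify that the SOI code attains $\underline D(R)$. The crucial structural fact, already available from Corollary~\ref{cor3}, is that under a symmetric threshold policy the continuous-path innovation $X_t-\mathbb E[X_t\mid X_{\tau_i},\tau_i]$ must hit exactly one of the two symmetric thresholds $\pm a_i(\tau_{i+1},\tau_i)$ at the sampling instant $\tau_{i+1}$. Consequently, the single bit $U_i$ in \eqref{SOI} records which of the two thresholds was hit, and since the decoder knows $\tau^i$ (hence $a_{i-1}(\tau_i,\tau_{i-1})$) and, inductively, $X_{\tau_{i-1}}$, it can reconstruct $X_{\tau_i}$ exactly via \eqref{rc}. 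Thus the SOI decoder has access to precisely $\{X_{\tau_j}\}_{j=1}^i$ and $\tau^i$ — the same information as the optimal frequency-constrained code — and its estimate \eqref{cor17} coincides with the MMSE decoder of Corollary~\ref{cor1}. Hence the SOI code incurs MSE exactly $\underline D(R)$, matching the converse, and its rate is $\mathbb E[\sum_i \ell(U_i)]/T = \mathbb E[N]/T = R$ since every codeword is $1$ bit. Combining with the converse gives $D(R)=\underline D(R)$ and the SOI code is optimal.

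A few technical points need care. First, one must confirm that the sampling policy \eqref{dstp} optimal for the \emph{frequency}-constrained problem is also admissible (satisfies (iv)--(vi)) in the rate-constrained problem and that its induced sampling times produce a well-defined $1$-bit encoding — this is where the even and quasi-concave pdf assumption (iii-b) and the continuity assumptions (ii), (iii-b) enter, ensuring the threshold is hit from the interior rather than jumped over. Second, in the infinite-horizon case one must handle the $\limsup$ in \eqref{comm_cons_b} and \eqref{MSEf_b} carefully, using assumption (iv) (finite expected sampling intervals) and, for the time-homogeneous simplification, Remark~\ref{rmk1} on the i.i.d. structure of sampling intervals. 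Third, I should argue that the MMSE decoder \eqref{opt_dec} for the rate-constrained code, which conditions on the event $t<\tau_{i+1}$, reduces to \eqref{cor17} under the SOI/threshold structure — this is the analogue of Corollary~\ref{cor1} and follows because, given $X_{\tau_i}$ and $\tau_i$, the innovation process restarts and the stopping event $\{t<\tau_{i+1}\}$ depends only on the innovation path, whose conditional law given its current value is symmetric, so it contributes no bias to the estimate.

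The main obstacle I anticipate is the converse step's claim that a $1$-bit codeword per sample cannot beat a real-valued sample: while intuitively the decoder "has less information," one must rule out that a cleverly chosen \emph{multi-bit} code at \emph{fewer} sampling instants could outperform the frequency-constrained optimum at $F=R$. The resolution is that the relevant benchmark $\underline D(F)$ is monotone non-increasing in $F$ and that any rate-$R$ code, regardless of how it allocates bits across samples, uses at most $R$ samples per second on average (since $\ell\geq 1$) and reveals at most $\{X_{\tau_j}\}$ from those samples; so its MSE is at least $\underline D(\mathbb E[N]/T)\geq \underline D(R)$. Making this data-processing-type argument fully rigorous — in particular, that the frequency-constrained lower bound applies pointwise to the rate-constrained decoder's information set, not just in an averaged sense — is the delicate part, and I expect it to occupy the bulk of the proof.
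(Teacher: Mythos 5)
Your achievability argument and the observation $\mathbb E[N]\le \mathbb E\bigl[\sum_i \ell(U_i)\bigr]\le RT$ coincide with the paper's proof: Corollary~\ref{cor3} guarantees the innovation hits one of the two thresholds, so the $1$-bit codeword \eqref{SOI} plus the time stamps reconstructs $X_{\tau_i}$ exactly via \eqref{rc}, and the SOI code meets the rate constraint with $F=R$. The gap is in your converse. You lower-bound the rate-constrained MSE by asserting that the decoder, given $(U^i,\tau^i)$, ``has less information than it would have from the real-valued samples $\{X_{\tau_j}\}_{j=1}^i$,'' and later that any rate-$R$ code ``reveals at most $\{X_{\tau_j}\}$ from those samples.'' This is not justified: the compressing policy is adapted to $\{\mathcal F_t\}_{t=0}^T$, so $U_i$ is a function of the \emph{entire path} $\{X_s\}_{s=0}^{\tau_i}$, not of the samples, and $\sigma(U^i,\tau^i)$ need not be contained in $\sigma(\{X_{\tau_j}\}_{j=1}^i,\tau^i)$. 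A code could spend its bits describing path features other than $X_{\tau_i}$, so no data-processing inclusion relative to the sample-informed decoder of Definition~\ref{fcode} is available, and $\underline D(R)$ does not bound the codeword-informed MSE merely because ``there is less information.''

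The paper's converse (Appendix~\ref{pfthm1}) repairs exactly this step by conditioning on \emph{more} information than any code can convey: since $U^i\in\sigma(\{X_s\}_{s=0}^{\tau_i})$, the MSE of any code is at least that of the estimator $\mathbb E[X_t\mid \{X_s\}_{s=0}^{\tau_i},\tau^i,t<\tau_{i+1}]$; then assumptions (iii-a) and (v) are invoked to show this full-path estimator equals $\mathbb E[X_t\mid X_{\tau_i},\tau_i]+\mathbb E[\tilde X_t\mid \tau_i,t<\tau_{i+1}]$, i.e., it collapses to the very MMSE decoding policy of the frequency-constrained problem, so the resulting optimization is exactly $\underline D(R)$. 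Both the inequality (full path dominates codewords) and the equality (full path collapses to the last sample and stamp, via the conditional-independence structure) are needed; your sketch flags the ``delicate part,'' but the resolution you propose routes the bound through the samples, which is the wrong intermediate object, and it is precisely where the structural assumptions (iii-a) and (v) — not a generic data-processing argument — carry the proof.
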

\begin{proof}
In Appendix~\ref{pfthm1}, we show the converse
\begin{align}\label{converse}
    D(R)\geq \underline{D}(R).
\end{align}
We proceed to show that the equality in \eqref{converse} is achievable by the SOI code. Corollary~\ref{cor3} implies that the $1$-bit codeword in~\eqref{SOI} together with the recovered samples $\{X_{\tau_j}\}_{j=1}^{i-1}$ suffices to recover $X_{\tau_{i}}, i=1,2,\dots$ noiselessly at the decoder. Moreover, since $\ell(U_i)=1$ under a 1-bit SOI compressor, the rate constraint \eqref{comm_cons} is equal the frequency constraint \eqref{comm_cons_f}, i.e., $\mathbb E\left[\sum_{i=1}^N\ell(U_i)\right] = \mathbb E[N]$. Thus, \eqref{converse} is achieved with equality under the SOI code.
\end{proof}

Theorem~\ref{thm1} illuminates the working principle of the optimal causal code for the stochastic processes considered in Section~\ref{IA}: The encoder transmits a $1$-bit codeword representing the sign of the process innovation as soon as the innovation crosses one of the two symmetric thresholds. To achieve the DRF \eqref{DOP}, the optimal causal code uses the minimum compression rate (1 bit per codeword) in exchange for the maximum average sampling frequency $R$.

Theorem~\ref{thm1} shows that the optimal codeword-generating times are the sampling times of the optimal causal sampling policy. Furthermore, the optimal decoding policy only depends on the thresholds of the sampling policy and the sampling time stamps. Thus, finding the optimal causal code is simplified to finding the optimal causal sampling policy.

\subsection{Rate-constrained sampling of the OU process}\label{V}
Using Theorem~\ref{thm1} and \eqref{solvet}, we can easily find the optimal causal code and its corresponding DRF for the OU process by finding the thresholds of the optimal causal sampling policy. The OU process is the solution to the following SDE:
\begin{equation}
    \mathrm{d}X_t=\theta(\mu- X_t)\mathrm{d}t+\sigma \mathrm{d}W_t,
\end{equation}
where $\mu,\theta,\sigma$ are positive constants, and $W_t$ is the Wiener process. The OU process satisfies the conditions in Section~\ref{IA}.  Under the assumption (iv) in Section~\ref{SecIII} and the assumption that the sampling intervals form a regenerative process, Ornee and Sun \cite{Zaman} found the optimal sampling policy for the OU process in the infinite horizon by forming an optimal stopping problem. They solved the optimal stopping problem via the Snell's envelope which requires solving an SDE. We provide an easier method to find the optimal sampling policy for the OU process in Appendix~\ref{pfprop4}. We also show via Theorem~\ref{thm1} that the policy remains optimal when bitrate constraints are present.

Denote
\begin{equation}\label{27a}
R_1(v) \triangleq \frac{v}{\sigma^2} {}_2F_2\left(1,1;\frac{3}{2},2;\frac{\theta}{\sigma^2}v\right),
\end{equation}
\begin{equation}\label{27b}
R_2(v) \triangleq -\frac{v}{2\theta}+\frac{\sigma^2}{2\theta}R_1(v),
\end{equation}
where $_2F_2$ is a generalized hypergeometric function.

\begin{proposition}\label{prop4} For causal coding of the Ornstein-Uhlenbeck process, the optimal causal sampling policy is the symmetric threshold sampling policy given by
\begin{equation}\label{opt_ou}
    \tau_{i+1}=\inf\left\{t\geq \tau_i:|X_t-\mathbb E[X_t|X_{\tau_i},\tau_i]|\geq \sqrt{R_1^{-1}\left(\frac{1}{R}\right)} \right\},
\end{equation}
The DRF under the corresponding SOI code is given by
\begin{equation}\label{dr}
    D(R) = R\cdot R_2\left(R_1^{-1}\left(\frac{1}{R}\right)\right).
\end{equation}

\end{proposition}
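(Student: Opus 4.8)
The plan is to reduce \eqref{dr}--\eqref{opt_ou} to the scalar optimization \eqref{solvet} and solve it in closed form for the OU process. By Theorem~\ref{thm1} it suffices to compute the frequency--distortion tradeoff $\underline D(R)$ and exhibit the optimal sampling policy, and by Theorem~\ref{prop3} this requires only solving \eqref{solvet}, provided the OU process is time-homogeneous in the sense of Definition~\ref{def_th}. First I would verify time-homogeneity: since $\mathbb E[X_t\mid X_s]=\mu+(X_s-\mu)e^{-\theta(t-s)}$, the innovation $\tilde X_t:=X_t-\mathbb E[X_t\mid X_{\tau_i},\tau_i]$ satisfies the autonomous SDE $\mathrm d\tilde X_t=-\theta\tilde X_t\,\mathrm dt+\sigma\,\mathrm dW_t$ with $\tilde X_{\tau_i}=0$, hence is itself a mean-zero OU process whose law depends only on the elapsed time $t-\tau_i$. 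In particular, with $X_0=0$, the numerator integrand in \eqref{solvet} is $(X_t-\mathbb E[X_t])^2=\tilde X_t^2$. Theorem~\ref{prop3} then applies, and the search collapses to symmetric threshold policies \eqref{stp}.

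Next I would argue that the optimal threshold is a time-invariant constant $a$. I would do this through a Lagrangian view of \eqref{solvet}: for a multiplier $\lambda\ge 0$, minimizing $\mathbb E\!\left[\int_0^{\tau_1}\tilde X_t^2\,\mathrm dt\right]-\lambda\,\mathbb E[\tau_1]$ over stopping rules for the time-homogeneous Markov process $\{\tilde X_t\}$ is an optimal stopping problem whose running cost $\tilde X_t^2-\lambda$ depends on the current state alone; standard optimal-stopping arguments, together with the evenness of the OU transition kernel, show that its continuation region is a symmetric interval $(-a(\lambda),a(\lambda))$, i.e.\ a constant threshold. Since $a\mapsto\mathbb E_0[\tau_a]$ is continuous and strictly increasing from $0$ to $\infty$, sweeping $\lambda$ matches the constraint $\mathbb E[\tau_1]=1/R$ and pins down the constant threshold uniquely; Corollary~\ref{cor3} and path continuity guarantee the innovation actually hits $\pm a$ at $\tau_1$.

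With a constant threshold $a$ and $v:=a^2$, I would compute the two expectations in \eqref{solvet}. For the denominator, $g(x):=\mathbb E_x[\tau_a]$ solves $\tfrac{\sigma^2}{2}g''-\theta x g'=-1$ with $g(\pm a)=0$; integrating the identity $\tfrac{\mathrm d}{\mathrm dx}\!\left(g'e^{-\theta x^2/\sigma^2}\right)=-\tfrac{2}{\sigma^2}e^{-\theta x^2/\sigma^2}$ and using $g'(0)=0$ by symmetry gives $\mathbb E_0[\tau_a]=\tfrac{2}{\sigma^2}\int_0^a e^{\theta s^2/\sigma^2}\!\int_0^s e^{-\theta u^2/\sigma^2}\,\mathrm du\,\mathrm ds$, and expanding in powers of $\theta/\sigma^2$ identifies this double integral with $R_1(v)$ in \eqref{27a}. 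For the numerator I would avoid a second ODE by applying Dynkin's formula to $\tilde X_t^2$: from $\mathrm d(\tilde X_t^2)=(\sigma^2-2\theta\tilde X_t^2)\,\mathrm dt+2\sigma\tilde X_t\,\mathrm dW_t$ and $\tilde X_{\tau_a}^2=a^2$, taking expectations yields $a^2=\sigma^2\,\mathbb E_0[\tau_a]-2\theta\,\mathbb E_0\!\left[\int_0^{\tau_a}\tilde X_t^2\,\mathrm dt\right]$, hence $\mathbb E_0\!\left[\int_0^{\tau_a}\tilde X_t^2\,\mathrm dt\right]=\tfrac{\sigma^2}{2\theta}R_1(v)-\tfrac{v}{2\theta}=R_2(v)$. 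Substituting into \eqref{solvet} gives $\underline D(R)=R_2(v)/R_1(v)$ with $v$ chosen so that $R_1(v)=1/R$, i.e.\ $v=R_1^{-1}(1/R)$, so $\underline D(R)=R\cdot R_2\!\left(R_1^{-1}(1/R)\right)$; Theorem~\ref{thm1} gives $D(R)=\underline D(R)$, and the threshold $a=\sqrt{R_1^{-1}(1/R)}$ produces \eqref{opt_ou}.

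The main obstacle is the second step: rigorously proving that the optimal threshold is constant (equivalently, that the Lagrangian-relaxed optimal stopping problem has a time-invariant interval continuation region and that the relaxation is tight, with no duality gap in \eqref{solvet}), and checking the integrability/transversality conditions — that the constant-threshold exit time satisfies assumptions (iv)--(vi) and makes the Dynkin identity and the ODE boundary-value problem legitimate. The closed-form integration leading to the specific ${}_2F_2$ in \eqref{27a} is routine but bookkeeping-heavy; a useful sanity check is the $\theta\to 0^+$ limit, where $R_1(v)\to v/\sigma^2$ and $D(R)\to \sigma^2/(6R)$, matching the Wiener-process value $\underline D(F)=ac^2/(6F)$ at $c=1$, $a=\sigma^2$.
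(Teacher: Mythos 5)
Your reduction via Theorems~\ref{thm1} and~\ref{prop3} and your computations agree with the paper's: both approaches use Dynkin's formula for the OU exit time to get $\mathbb E[T_i]=\mathbb E[R_1(O_{T_i}^2)]$ and $\mathbb E\bigl[\int_0^{T_i}O_t^2\,dt\bigr]=\mathbb E[R_2(O_{T_i}^2)]$ with $O_t:=X_t-\bar X_t$, and your $\theta\to0^+$ sanity check is correct. However, the step you yourself flag as ``the main obstacle'' --- proving that, within the class of symmetric threshold policies $\{a(t)\}$ in \eqref{stp}, the optimal threshold is a constant --- is a genuine gap as written. You propose a Lagrangian relaxation of \eqref{solvet} with running cost $\tilde X_t^2-\lambda$ and appeal to ``standard optimal-stopping arguments'' for a constant symmetric continuation region, then sweep $\lambda$. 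Making this rigorous requires showing the relaxation of the ratio objective in \eqref{solvet} has no duality gap, solving a free-boundary problem, and verifying that the resulting state-only continuation region is a bounded symmetric interval; this is essentially the Snell-envelope route of Ornee and Sun that the paper explicitly states it is avoiding (``We provide an easier method'').

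The paper closes this gap with an elementary Jensen argument that bypasses optimal stopping entirely. Once the two Dynkin identities are in hand, the constraint $\mathbb E[T_i]=1/R$ fixes $\mathbb E[R_1(O_{T_i}^2)]=1/R$, so the objective of \eqref{solvet} collapses to the decreasing affine function
\begin{equation*}
\frac{\mathbb E\bigl[\int_0^{T_i}O_t^2\,dt\bigr]}{\mathbb E[T_i]}
= R\,\mathbb E\bigl[R_2(O_{T_i}^2)\bigr]
= \frac{\sigma^2}{2\theta}-\frac{R}{2\theta}\,\mathbb E\bigl[O_{T_i}^2\bigr]
\end{equation*}
of $\mathbb E[O_{T_i}^2]$. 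Since $R_1$ is convex and strictly increasing on $v>0$, Jensen gives $R_1\bigl(\mathbb E[O_{T_i}^2]\bigr)\le\mathbb E\bigl[R_1(O_{T_i}^2)\bigr]=1/R$, i.e.\ $\mathbb E[O_{T_i}^2]\le R_1^{-1}(1/R)$, with equality if and only if $O_{T_i}^2$ is deterministic --- which for a symmetric threshold policy with continuous paths (Corollary~\ref{cor3}) means the threshold is constant, $a=\sqrt{R_1^{-1}(1/R)}$. Substituting gives \eqref{dr} directly. No Lagrangian, no variational inequality, no verification of strong duality. If you replace your optimal-stopping paragraph with this Jensen step, the rest of your argument goes through unchanged.
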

\begin{proof}
Appendix~\ref{pfprop4}.
\end{proof}
\section{Rate-constrained control}\label{Control_sec}
The SOI coding scheme introduced in Definition~\ref{def} also applies to the following rate-constrained control scenario.

\begin{figure}[h!]
\centering
\includegraphics[trim = 35mm 220mm 70mm 18mm, clip, width=9cm]{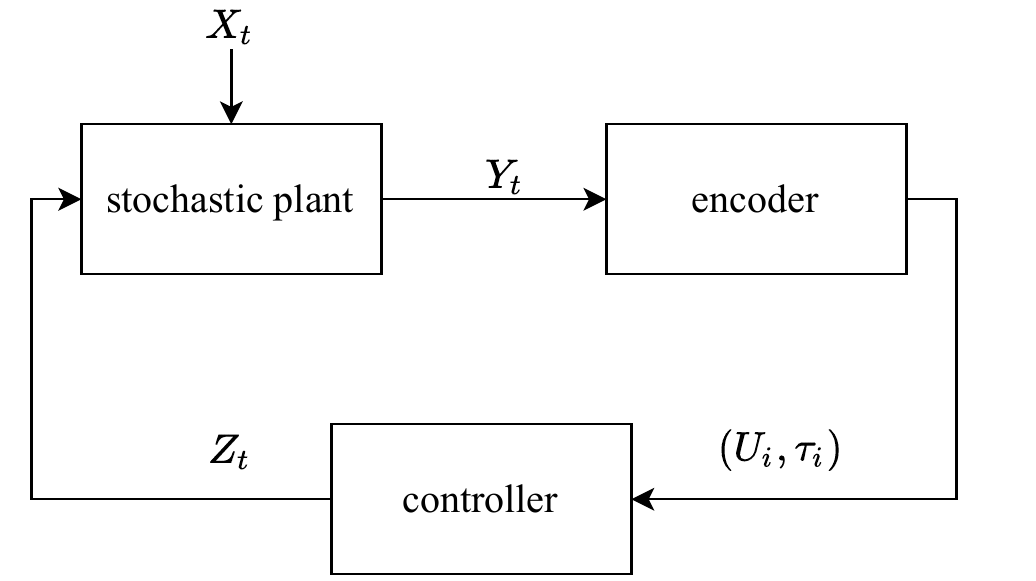}
\centering
\caption{Control system.} \label{Fig_cs}
\end{figure}

The stochastic plant evolves according to
\begin{align}\label{plant}
Y_t=X_t+Z_t,
\end{align}
where $X_t$ is a stochastic disturbance satisfying the assumptions (i)--(iii) in Section~\ref{IA}, and $Z_t$ is the additive control signal output from the controller. The encoder observes $Y_t$, causally decides the stopping times $\tau_1,\tau_2,\dots$ adapted to the filtration generated by $\{Y_t\}_{t=0}^T$, and generates a codeword $U_i$ at each stopping time $\tau_i$ based on its past observations $\{Y_t\}_{t=0}^{\tau_i}$. The controller collects the received codewords to causally form the control signal $Z_t$, with the goal to minimize the mean-square cost on $Y_t$ deviating from the target state $0$,
\begin{align}\label{plant_MSE}
    \frac{1}{T}\mathbb E\left[\int_{0}^T Y_t^2dt\right].
\end{align}
We aim to find the encoding policy satisfying (iv)--(vi) and the control policy that jointly minimize the mean-square cost~\eqref{plant_MSE} under the communication rate constraint \eqref{comm_cons} between the encoder and the controller. 
\begin{proposition}\label{prop_contr}
In the rate-constrained control system, the optimal encoding policy that minimizes the mean-square cost in \eqref{plant_MSE} is the SOI coding scheme in Theorem~\ref{thm1}, and the optimal control signal is
\begin{equation}\label{ZX}
Z_t = -\hat X_t.
\end{equation}
\end{proposition}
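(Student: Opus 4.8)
The plan is to reduce the rate-constrained control problem of Proposition~\ref{prop_contr} to the rate-constrained sampling problem of Section~\ref{III}, by showing that (a) applying $Z_t=-\hat X_t$ turns the plant state $Y_t$ into the estimation error $X_t-\hat X_t$, and (b) the two problems have the same information structure, so the control cost \eqref{plant_MSE} can never beat the DRF $D(R)$, while the SOI code of Theorem~\ref{thm1} attains it.

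First I would argue that we may restrict to deterministic encoding and control policies: a randomized policy is a mixture of deterministic ones, and by linearity the expected cost in \eqref{plant_MSE} is the corresponding mixture, so some deterministic policy is at least as good. Fixing such policies, I would establish by induction on the sampling intervals that the encoder, which observes $\{Y_s\}_{s\le t}$, can reconstruct the disturbance $\{X_s\}_{s\le t}$: on $[\tau_i,\tau_{i+1})$ the control signal $Z_s$ is a deterministic function of $(U^i,\tau^i,s)$, all of which the encoder itself has produced, so $X_s=Y_s-Z_s$ is known to the encoder. Consequently the filtration generated by $\{Y_t\}$ coincides with $\{\mathcal F_t\}$, so any sampling policy adapted to the former and satisfying (iv)--(vi) is an admissible sampling policy for $\{X_t\}$, and conversely.

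Next, the converse ($\text{cost}\ge D(R)$): given any rate-$R$ control scheme, define a rate-$R$ causal code for $\{X_t\}$ by keeping the same sampling and compressing policies (feasible by the previous paragraph, since the control encoder's codewords are functions of $\{X_s\}$, and (iv)--(vi) transport through the filtration equivalence) and the decoder $\hat X_t\triangleq -Z_t$. Since $Y_t=X_t+Z_t=X_t-\hat X_t$, the control cost equals the MSE of this code, which is at least $D(R)$ by Definition~\ref{def2}; the rate constraint is identical in both problems. For achievability, I would run the SOI code of Theorem~\ref{thm1} for the disturbance $\{X_t\}$ with $F=R$ and set $Z_t=-\hat X_t$, with $\hat X_t=\mathbb E[X_t\mid X_{\tau_i},\tau_i]$ on $[\tau_i,\tau_{i+1})$ as in \eqref{cor17}. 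The controller recovers $X_{\tau_i}$ exactly from $U^i$ via \eqref{rc} (Corollary~\ref{cor3} guarantees the innovation hits $\pm a_{i-1}$ exactly at $\tau_i$), hence can form $\hat X_t$; the encoder knows $\hat X_t$ and recovers $X_t=Y_t+\hat X_t$, so it can implement the threshold sampling and sign compression. Then $Y_t=X_t-\hat X_t=X_t-\bar X_t$, so the control cost equals $D(R)$ by Theorem~\ref{thm1}, matching the converse; hence the SOI scheme with $Z_t=-\hat X_t$ is optimal.

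The main obstacle is the information-structure (``no dual effect of control'') argument in the second paragraph: one must show carefully, interval by interval, that under a known control law observing $Y_t$ gives the encoder neither more nor less than observing $X_t$, so that the control problem is genuinely equivalent to — not merely comparable with — the estimation problem. In particular this is what legitimizes using the induced code in the converse (it satisfies (iv)--(vi) precisely because the control encoder was assumed to and the two information structures coincide) and what lets the achievability scheme be implemented at the encoder, which only sees $Y_t$.
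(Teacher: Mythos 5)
Your proposal is correct and follows essentially the same route as the paper: fix the encoding policy, observe that the cost-minimizing control given $(U^i,\tau^i,t<\tau_{i+1})$ is $Z_t=-\hat X_t$ so that $Y_t=X_t-\hat X_t$ and the cost \eqref{plant_MSE} becomes the MSE \eqref{MSE}, and then invoke Theorem~\ref{thm1}. Your additional steps (reduction to deterministic policies, the interval-by-interval argument that the encoder can reconstruct $X_t=Y_t-Z_t$ so the two information structures coincide, and the explicit converse/achievability split against $D(R)$) are careful elaborations of what the paper treats implicitly in its one-paragraph reduction, not a different method.
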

\begin{proof}
Given the received codewords $U^i$ and the fact that the next codeword has not been transmitted at $t<\tau_{i+1}$, the optimal control signal $Z_t$ that minimizes \eqref{plant_MSE} is indeed the optimal MMSE decoding policy $\hat X_t$ in \eqref{opt_dec}. Substituting \eqref{plant} and \eqref{ZX} into \eqref{plant_MSE}, we obtain the following MSE,
\begin{align}
   \frac{1}{T}\mathbb E\left[\int_{0}^T (X_t-\hat X_t)^2dt\right],
\end{align}
which is the same as \eqref{MSE}. Thus, the problem of finding the optimal encoding policy in this rate-constrained control system reduces to the problem that we solved in Section~\ref{iv}, whose result is given by Theorem~\ref{thm1}.
\end{proof}
Under the optimal control policy in Proposition~\ref{prop_contr}, the optimal encoder does not rely on the control signal to decide the codeword generating times.

In the traditional stochastic differential equation (SDE) formulation \cite{1}, \cite{Duncan}--\cite{touzi}, the evolution of the plant is described as
\begin{align}\label{SDE_plant}
    \mathrm{d}Y_t = \mathrm{d}X_t + L_t\mathrm{d}t,
\end{align}
where $L_t$ is the control signal.
The state evolutions \eqref{plant} and~\eqref{SDE_plant} are the same if and only if the control signals in~\eqref{plant} and \eqref{SDE_plant} are related as
\begin{align}\label{LZ}
    \int_{0}^t L_s \mathrm{d}s = Z_t,~ \forall t\in[0,T].
\end{align}
Any state evolution described by \eqref{SDE_plant} can be written in the form of \eqref{plant} by setting $Z_t$ as in \eqref{LZ}. Conversely, a state evolution described by \eqref{plant} can be written as \eqref{SDE_plant} if and only if $Z_t$, when viewed as a function of $t$, is almost surely \emph{generalized absolutely continuous in the restricted sense} ($ACG_*$) between any consecutive discontinuous points of $\{Z_t\}_{t=0}^T$ \cite{Talvila}\cite{Gordon}. This is because control signal $L_s$ in~\eqref{LZ} is well-defined if and only if $Z_t^*$ satisfies the $ACG_*$ property.
The function $f\colon [a,b]\rightarrow \mathbb R$ is said to be $ACG_*$ \cite{Talvila}\cite{Gordon} over set $\mathcal E\subset [a,b]$ if $f$ is continuous, and $\mathcal E$ is a countable union of sets $\mathcal E_n$ on each of which $f$ satisfies the following: for each $\epsilon>0$, there exists $\delta>0$ such that $\sum_{i=1}^k \sup_{x,y\in[x_i,y_i]}|F(x)-F(y)|<\epsilon$ for
all finite sets of disjoint open intervals $\{(x_i,y_i)\}_{i=1}^k$ with endpoints in $\mathcal E_n$ and $\sum_{i=1}^k|x_i-y_i|<\delta$.
For example, for stochastic processes of the form $X_t = g_1(t)W_{g_2(t)}+g_3(t)$, the optimal control signal $\{Z_t\}_{t=0}^T$ \eqref{ZX} almost surely satisfies the $ACG_*$ property. Here, $g_1(\cdot), g_3(\cdot)$ are continuous and differentiable except perhaps on a countable set, $g_2(\cdot)$ is continuous, positive, and non-decreasing, and $\{W_t\}_{t=0}^T$ is the Wiener process. In Appendix~\ref{pf_LZ}, we show how to recover $\{L_t\}_{t=0}^T$ from $\{Z_t\}_{t=0}^T$ using \eqref{LZ}, provided that $\{Z_t\}_{t=0}^T$ satisfies the $ACG_*$ property.

\AA str\"{o}m and Bernhardsson \cite{1} considered the controlled system in \eqref{SDE_plant} with $X_t=W_t$ and proposed a control policy that injects an impulse control to drive $Y_t$ to zero once $|Y_t|$ exceeds a threshold. The control signal $L_t$ corresponding to the optimal $Z_t$ in \eqref{ZX} for $X_t= W_t$ recovers \AA str\"{o}m and Bernhardsson's impulse control policy \cite{1} for the Wiener disturbance (Appendix~\ref{pf_LZ}).

\section{Conclusion and discussion}\label{Sec_Conc}
We have studied the optimal rate-constrained causal code for a class of continuous processes satisfying regularity conditions (i)--(iii). Prior art on remote estimation and optimal scheduling mostly considered a sampling frequency constraint, whereas in this work, we introduce a rate constraint. We leverage the information-theoretic framework of our prior work \cite{Nian_2} to establish the jointly optimal causal sampling and quantization policies. We show that the optimal frequency-constrained causal sampling policy is a symmetric threshold sampling policy (Theorems~\ref{prop2}--\ref{prop3}). Prior work \cite{Rabi}--\cite{Nar} on finding the optimal frequency-constrained sampling policy for the Wiener and the OU processes conjectured that the optimal decoding policy is the MMSE decoding policy in \eqref{opt_dec_s}. We confirm that conjecture in Corollary~\ref{cor1}. We show that the optimal causal code is the SOI code that transmits $1$-bit codewords as frequently as possible at the stopping times decided by the optimal frequency-constrained sampling policy (Theorem~\ref{thm1}). Theorems~\ref{prop2}~and~\ref{thm1} demonstrate that the optimal causal code can be easily obtained once we know the optimal sampling policy, revealing the close connection between the frequency-constrained and rate-constrained causal sampling problems. We show that the SOI code minimizes the mean-square cost between the desirable state $0$ and the state of the stochastic plant driven by a process satisfying conditions (i)--(iii). 

Causal rate-constrained sampling for communication over a digital-input noisy channel remains an interesting direction for future research. It is a joint source-channel coding problem that is extremely sensitive to coding delay. Channel codes that can quickly incorporate newly arrived bits into a continuing transmission like the one we developed in \cite{Nian_3} will be instrumental for making progress in this direction.

\appendices
\section{Sufficient condition for (v)}\label{pf_rmk_plus}
Before we show the sufficient condition in Proposition~\ref{prop4'} below, we first characterize the causal sampling policy in Definition~\ref{fcode}. 

Any causal sampling policy in Definition~\ref{fcode} can be characterized by a set-valued process we term \emph{sampling-decision process}. It is a
$\mathcal B_{\mathbb R}$-valued process $\{\mathcal P_t\}_{t=0}^T$ adapted to $\{\mathcal F_t\}_{t=0}^T$,
which decides the stopping times
\begin{equation}\label{gs'}
    \tau_{i+1} = \inf\{t\geq \tau_i\colon \tilde X_t\notin \mathcal P_t\},
\end{equation} 
where the mean-square residual error process $\{\tilde X_t\}_{t=0}^T$ in \eqref{gs'} is defined as
\begin{align}\label{tildeX_appen}
    \tilde X_t \triangleq X_t - \mathbb E[X_t|X_{\tau_i},\tau_i], ~\forall t\in[\tau_i,\tau_{i+1}).
\end{align}
Given any sampling policy $\tau_1,\tau_2,\dots$ and a realization of the process up to time $t$, we can set
\begin{align}\label{Pt_any}
    \mathcal P_t = \begin{cases}
    \mathcal A_t, &  t\neq \tau_i,~ i=1,2,\dots,\\
    \mathcal A_t^c, & t=\tau_i,~ i=1,2,\dots,
    \end{cases} 
\end{align}
where $\mathcal A_t$ is any Borel set the realization of $\tilde X_t$ belongs to. Without assumption (v), $\mathcal P_t$ for $t\in[\tau_i,\tau_{i+1})$ can depend on the input process $\{X_s\}_{s=0}^t$ up to time $t$.
Under assumption (v), $\mathcal P_t$ for $t\in[\tau_i,\tau_{i+1})$ can only depend on the stopping time $\tau_i$ and $\{\tilde X_s\}_{s=\tau_i}^t$ \eqref{tildeX_appen}.

We proceed to present the sufficient condition on the stochastic process under which the optimal sampling policy satisfies (v). We define notations that will be used in Proposition~\ref{prop4'} below. Consider a sampling-decision process $\{\mathcal P_t\}_{t=\tau_k}^T$ with stopping times $\tau_k,\tau_{k+1},\dots$, the mean-square residual error $\tilde X_t$ \eqref{tildeX_appen}, and the MMSE decoding policy $\bar X_t$~\eqref{opt_dec_s}. The value of $\{\mathcal P_t\}_{t=\tau_k}^T$ at time $t\in[\tau_k,T]$ only depends on $\{X_s-\mathbb E[X_s|X_{\tau_k},\tau_k]\}_{s=\tau_k}^t$ and $\tau_k$, i.e.,
\begin{align}\label{Pt34}
    \mathcal P_t = \mathcal P_t(\{X_s-\mathbb E[X_s|X_{\tau_k},\tau_k]\}_{s=\tau_k}^t,\tau_k),~t\in[\tau_k,T].
\end{align}
Denote by $\Pi_{[\tau_k,T]}$ the set of all sampling-decision processes of the form \eqref{Pt34}.
As a result, the stopping times associated with $\{\mathcal P_t\}_{\tau_k}^T \in \Pi_{[\tau_k,T]}$ only satisfy (v) at $i=k$.
Let $N\left(\{\mathcal P_t\}_{t=\tau_k}^T\right)$ represent the number of samples taken between $[\tau_k,T]$ under $\{\mathcal P_t\}_{t=\tau_k}^T$. We denote
\begin{align}\label{drphi_def}
    \underline{D}_r(\phi)\triangleq  \min_{\substack{\{\mathcal P_t\}_{t=r}^T\in \Pi_{[r,T]}\colon\\ \frac{1}{T}\mathbb E[N\left(\{\mathcal P_t\}_{t=r}^T\right)|\tau_k=r]\leq \phi}}\frac{1}{T}\mathbb E\left[\int_{\tau_k}^{T}(X_t-\bar X_t)^2dt\middle |\tau_k=r\right].
\end{align}

Consider an arbitrary sampling-decision process $\{\mathcal P_t'\}_{t=0}^T$~\eqref{gs'} with stopping times $\tau_1',\tau_2',\dots$, the mean-square residual error $\tilde X_t'$, and the MMSE decoding policy $\bar X_t'$. The value of the sampling-decision process $\{\mathcal P_t'\}_{t=0}^T$ at time $t$ can depend on $\{X_s\}_{s=0}^t$, i.e., for all $t\in[\tau_k',T]$,
\begin{align}\label{Ptprime}
    \mathcal P_t' = \mathcal P_t'\left(\{X_s\}_{s=0}^{\tau_k'},\left\{X_s-\mathbb E[X_s|X_{\tau_k'},\tau_k']\right\}_{s=\tau_k'}^t,\tau_k'\right).
\end{align} 
Denote by $\Pi_{[\tau_k',T]}'$ the set of all sampling-decision processes of the form \eqref{Ptprime}.

\begin{proposition}\label{prop4'}
For a stochastic process $\{X_t\}_{t=0}^T$ satisfying (i)--(iii), if $\underline{D}_r(\phi)$ in \eqref{drphi_def} is a convex function in $\phi$ for all $k=0,1,\dots$ and $r\in[0,T]$, then the optimal sampling policy satisfies (v).
\end{proposition}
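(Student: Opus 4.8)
The plan is to show that requiring (v) is without loss of optimality: every causal sampling policy satisfying (iv) and (vi) is dominated by one that additionally satisfies (v), in the sense of attaining no larger MSE at no larger average sampling frequency. Since (v) only shrinks the feasible set in \eqref{dsf_eq}, this yields that $\underline{D}(F)$ is attained within the class of policies obeying (v), i.e.\ the optimal sampling policy may be taken to satisfy (v). The domination will be constructed by editing a given policy one stopping time at a time: assuming the policy already satisfies (v) at indices $i=0,1,\dots,k-1$, I will replace its behaviour after $\tau_k$ by an element of $\Pi_{[\tau_k,T]}$ --- which satisfies (v) at $i=k$ --- chosen to keep the \emph{conditional} expected number of future samples unchanged, and I will show this does not raise the cost. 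Convexity of $\phi\mapsto\underline{D}_r(\phi)$ is precisely the ingredient that makes this replacement free.

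The core estimate is the following per-step inequality. Fix $k$ and an arbitrary causal sampling policy $\{\mathcal P_t'\}_{t=0}^T$ (its restriction to $[\tau_k',T]$ being of the general form $\Pi_{[\tau_k',T]}'$); condition on $\tau_k'=r$ and on the entire past $h=\{X_s\}_{s=0}^{\tau_k'}$. First I would invoke the strong Markov property (i) together with (iii-a) to show that, given $\tau_k'=r$, the residual trajectory $s\mapsto X_{r+s}-\mathbb E[X_{r+s}\mid X_r,r]$ --- indexed by elapsed time $s$ --- is independent of $\mathcal F_{\tau_k'}$, hence of $h$, and has a law depending only on $r$; then, using the affine representation \eqref{Yt_r}, I would show that both the MSE $\int(X_t-\bar X_t)^2\,dt$ accrued after $\tau_k'$ and the stopping times produced after $\tau_k'$ are measurable functions of this residual trajectory and of the continuation rule alone. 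Thus, for each $h$, the continuation of $\{\mathcal P_t'\}$ given $h$ is --- as far as its continuation cost and continuation sample count are concerned --- realizable by an element of $\Pi_{[r,T]}$, so by the definition \eqref{drphi_def} its normalized continuation cost $d_h$ and frequency $\phi_h$ satisfy $d_h\ge\underline{D}_r(\phi_h)$. Averaging over $h$ and using convexity of $\underline{D}_r$ via Jensen's inequality,
\begin{equation*}
\tfrac1T\,\mathbb E\!\left[\int_{\tau_k'}^{T}(X_t-\bar X_t)^2\,dt \,\middle|\, \tau_k'=r\right]=\mathbb E[d_h\mid\tau_k'=r]\ \ge\ \mathbb E[\underline{D}_r(\phi_h)\mid\tau_k'=r]\ \ge\ \underline{D}_r\!\big(\mathbb E[\phi_h\mid\tau_k'=r]\big),
\end{equation*}
and $\mathbb E[\phi_h\mid\tau_k'=r]$ is exactly the normalized expected number of samples generated after $\tau_k'$ given $\tau_k'=r$. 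Hence replacing the continuation after $\tau_k'=r$ by a $\Pi_{[r,T]}$ policy attaining the right-hand side does not increase the overall MSE or the overall average frequency (the trajectory before $\tau_k'$, and the conditional expected future sample count, are untouched), and it makes the policy satisfy (v) at $i=k$ while leaving it unchanged on $[0,\tau_k')$, so (v) at the earlier indices survives.

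To finish I would iterate this edit at $k=0,1,2,\dots$, starting from an arbitrary policy satisfying (iv) and (vi) (at $k=0$, $\tau_0=0$ is deterministic, so the whole policy is replaced by an element of $\Pi_{[0,T]}$). After the $K$-th edit the policy satisfies (v) at $i=0,\dots,K$, and every subsequent edit leaves it frozen on $[0,\tau_K)$. Since the average-frequency constraint forces only finitely many samples before any given time almost surely --- immediate for $T<\infty$, and for $T=\infty$ because $\limsup_{T'}\tfrac1{T'}\mathbb E[N_{T'}]\le F$ already implies $\mathbb E[N_{T'}]<\infty$ hence $\tau_K\to\infty$ a.s.\ --- the edited policies stabilize pointwise, and the limit is a well-defined causal sampling policy satisfying (v) at every $i$, with MSE and average frequency no larger than those of the original policy. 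After checking that (iv) and (vi) are inherited by the $\underline{D}_r$-optimal continuations (routine, given finiteness of $\underline{D}_r$ at the relevant frequencies and the regularity in (iii-b)), this limit is feasible in \eqref{dsf_eq}, so $\underline{D}(F)$ is unchanged by imposing (v).

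The hard part is the realizability claim in the second paragraph: that the continuation MSE and sampling behaviour of an \emph{arbitrary} policy after $\tau_k'$, conditioned on $\tau_k'=r$ and the past $h$, depend on $h$ only through $r$ and can therefore be matched by a policy in $\Pi_{[r,T]}$. This is the only step that genuinely uses (i)--(iii) rather than mere convexity, and it requires care: one must verify, from the strong Markov property and \eqref{Yt_r}, that $X_t-\bar X_t$ for $t>\tau_k'$ is a measurable function of $\big(\{X_s-\mathbb E[X_s\mid X_{\tau_k'},\tau_k']\}_{s=\tau_k'}^{t},\,r\big)$ and of the sampling decisions alone, with conditional law independent of $h$ --- in particular independent of $X_{\tau_k'}$ --- so that the shift does not "see" anything about $h$ beyond $r$. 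Once this translation-invariance is in place, the Jensen averaging and the induction over $k$ are routine bookkeeping.
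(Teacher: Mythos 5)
Your proposal takes essentially the same approach as the paper: fix a realization of the past compatible with $\tau_k'=r$, use (iii-a) to show that the continuation residual is independent of the past beyond $r$, construct a policy in $\Pi_{[r,T]}$ that matches the continuation cost and sample count, lower-bound by $\underline{D}_r$, and then average over the past and invoke Jensen's inequality with the assumed convexity. You add an explicit iteration over $k$ together with a pointwise-stabilization argument to assemble a full policy satisfying (v), which the paper leaves implicit after establishing the per-step inequality, but the central construction and the role of convexity are identical.
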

\begin{proof}
Fix an arbitrary sampling-decision process $\{\mathcal P_t'\}_{t=\tau_k'}^T\in~\Pi_{[\tau_k',T]}'$ at $\tau_k'=r$. To show that the optimal sampling policy of $\{X_t\}_{t=0}^T$ satisfies (v), it suffices to show that for all $k=0,1,\dots$,
 $\underline D_r\left(\frac{1}{T}\mathbb E[N(\{\mathcal P_t'\}_{t=r}^T)|\tau_k'=r]\right)$ is no larger than the MSE achieved by $\{\mathcal P_t'\}_{t=r}^T$, i.e.,
 \begin{align}\nonumber
 &\mathbb E\left[\frac{1}{T}\int_{\tau_k'}^T(X_t-\bar X_t')^2 dt\middle |\tau_k'=r\right] \\ \label{MSE_41}
    \geq~& \underline D_r\left(\frac{1}{T}
    \mathbb E[N(\{\mathcal P_t'\}_{t=r}^T)|\tau_k'=r]\right).
 \end{align}

We fix an arbitrary realization of $\{X_s\}_{s=0}^r=x$ that leads to $\tau_k'=r$, and we construct $\{\mathcal P_t\}_{t=r}^T$ as
\begin{align}\label{Pconstr}
    \mathcal P_t = \mathcal P_t'\left(x,\left\{X_s-\mathbb E[X_s|X_{r},r]\right\}_{s=r}^t,r\right).
\end{align}
The sampling-decision process $\{\mathcal P_t\}_{t=r}^T$ \eqref{Pconstr} satisfies the minimization constraint in \eqref{drphi_def} with 
\begin{align}\label{phi_MSE}
    \phi = \frac{1}{T}\mathbb E[N(\{\mathcal P_t'\}_{t=r}^T)|\{X_s\}_{s=0}^r=x,\tau_k'=r]
\end{align}
due to the reasons that follow.
The process $\{\mathcal P_t\}_{t=r}^T$ \eqref{Pconstr} belongs to $\Pi_{[r,T]}$ since it samples the input process after time $r$ as if it has observed $\{X_s\}_{s=0}^r = x$ regardless of the actual realization of $\{X_s\}_{s=0}^r$. Since $\{\tilde X_t\}_{t\geq \tau_k}$, at $\tau_k=r$, is independent of $\mathcal F_r$ by (iii-a), and $\tau_{i+1}$, $i\geq k$, is conditionally independent of $\{X_s\}_{s=0}^r$ given $\tau_k=r$ due to $\{\mathcal P_t\}_{t=r}^T\in\Pi_{[r,T]}$, we conclude that under $\{\mathcal P_t\}_{t=r}^T$, the random process $\{X_t - \bar X_t\}_{t=r}^T$ conditioned on $\tau_k=r$ has the same probability distribution as $\{X_t - \bar X_t'\}_{t=r}^T$ under $\{\mathcal P_t'\}_{r=0}^T$ conditioned on $\{X_s\}_{s=0}^r=x,\tau_k'=r$. This implies that $\{\mathcal P_t\}_{t=r}^T$ \eqref{Pconstr} achieves average sampling frequency $\phi$ \eqref{phi_MSE}, and that
\begin{subequations}
\begin{align}\nonumber
   &\mathbb E\left[\int_{\tau_k'}^{T}(X_t-\bar X_t')^2dt\middle |\{X_s\}_{s=0}^r=x,\tau_k'=r\right]\\
   =~&\mathbb E\left[\int_{\tau_k}^{T}(X_t-\bar X_t)^2dt\middle |\{X_s\}_{s=0}^r,\tau_k=r\right]\\
   =~&\mathbb E\left[\int_{\tau_k}^{T}(X_t-\bar X_t)^2dt\middle |\tau_k=r\right]\\ \label{MSE47}
   \geq~& \underline D_r\left(\frac{1}{T}\mathbb E[N(\{\mathcal P_t'\}_{t=r}^T)|\{X_s\}_{s=0}^r=x,\tau_k'=r]\right),
\end{align}
\end{subequations}
where \eqref{MSE47} holds because $\{\mathcal P_t\}_{t=\tau_k}^T\in\Pi_{[\tau_k,T]}$. Since \eqref{MSE47} holds for an arbitrary realization of $\{X_s\}_{s=0}^r$ compatible with $\tau_k'=r$, it holds almost surely that
\begin{align}\label{Estar}
    &\mathbb E\left[\int_{\tau_k'}^{T}(X_t-\bar X_t')^2dt\middle |\{X_s\}_{s=0}^r,\tau_k'=r\right]\\ \nonumber
    \geq~& \underline D_r\left(\frac{1}{T}\mathbb E[N(\{\mathcal P_t'\}_{t=r}^T)|\{X_s\}_{s=0}^r,\tau_k'=r]\right).
\end{align}
Taking an expectation of \eqref{Estar}, we conclude 
\begin{align}
    &\mathbb E\left[\frac{1}{T}\int_{\tau_k'}^T(X_t-\bar X_t')^2 dt\middle |\tau_k'=r\right] \\ \geq~& \mathbb E\left[\underline D_r\left(\frac{1}{T}\mathbb E[N(\{\mathcal P_t'\}_{t=r}^T)|\{X_s\}_{s=0}^r,\tau_k'=r]\right)\middle | \tau_k'=r\right],
\end{align}
and \eqref{MSE_41} follows via Jensen's inequality.
\end{proof}

\section{Proof of Theorem~\ref{prop2}}\label{pfprop2}
\subsection{Tools}
We first introduce Lemmas~\ref{l2}--\ref{l5} that supply majorization and real induction tools for proving Theorem~\ref{prop2}. 

Function $f$ \emph{majorizes} $g$, $f\succ g$, if and only if for any Borel measurable set  $\mathcal B\in \mathcal B_{\mathbb R}$ with finite Lebesgue measure, there exits a Borel measurable set $\mathcal A\in \mathcal B_{\mathbb R}$ with the same Lebesgue measure, such that \cite{Lipsa}
\begin{equation}\label{def_major}
  \int_{\mathcal B} g(x)dx \leq \int_{\mathcal A} f(x)dx.  
\end{equation}

Function $f:\mathbb R\rightarrow\mathbb R$ is \emph{even} if $f(x) = f(-x)$ for all $x\in\mathbb R$.

Function $f:\mathbb R\rightarrow\mathbb R$ is \emph{quasi-concave} if for all $x,y\in\mathbb R$, $0\leq \lambda\leq 1$,
\begin{equation}
    f(\lambda x + (1-\lambda)y) \geq \min\{f(x),f(y)\}.
\end{equation}
We denote by $\mathbbm{1}_{(a,b)}(x)$ an indicator function that is equal to $1$ if and only if $x\in(a,b)$.

Lemmas~\ref{l2}--\ref{l4}, stated next, show several majorization properties of pdfs.
\begin{lemma}(\cite[Lemma 2]{Lipsa})\label{l2}
Fix two pdfs $f_X$ and $g_X$, such that $f_X$ is even and quasi-concave and $f_X\succ g_X$. Fix a scalar $c>0$, and a function $h\colon\mathbb R\rightarrow [0,1]$, such that 
\begin{equation}\label{deno_cond}
    \int_{\mathbb R}f_X(x)\mathbbm{1}_{(-c,c)}(x)dx=\int_{\mathbb R}g_X(x)h(x)dx,
\end{equation}
Then,
\begin{equation}
    f_{X|X\in(-c,c)}\succ g_X',
\end{equation}
where the pdfs $f_{X|X\in(-a,a)}$ and $g_X'$ are given by,
\begin{equation}
\begin{aligned}
    &f_{X|X\in(-c,c)}(x)=\frac{f_X(x)\mathbbm{1}_{(-c,c)}(x)}{\int_{\mathbb R}f_X(x)\mathbbm{1}_{(-c,c)}(x)dx}\\
    &g_X'(x) = \frac{g_X(x)h(x)}{\int_{\mathbb R}g_X(x)h(x)dx}.
\end{aligned}
\end{equation}

\end{lemma}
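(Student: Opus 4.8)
The plan is to argue directly from the set-theoretic definition \eqref{def_major} of majorization, after stripping off the normalizations. Write $m\triangleq\int_{\mathbb R}f_X(x)\mathbbm{1}_{(-c,c)}(x)\,dx=\int_{\mathbb R}g_X(x)h(x)\,dx$; this is strictly positive because $f_X$, being an even quasi-concave pdf, attains its maximum at $0$ and hence is positive on a neighbourhood of $0$. By the definitions in the statement, $f_{X|X\in(-c,c)}=\tfrac1m f_X\mathbbm{1}_{(-c,c)}$ and $g_X'=\tfrac1m g_Xh$. The relation \eqref{def_major} is invariant under dividing both functions by the same positive constant (replacing $g$ by $g/m$ and $f$ by $f/m$ preserves every inequality $\int_{\mathcal B}g\le\int_{\mathcal A}f$ with the same $\mathcal A$), so it suffices to prove the unnormalized statement $f_X\mathbbm{1}_{(-c,c)}\succ g_Xh$.

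The one structural fact I would isolate first is: since $f_X$ is even and quasi-concave, for every $a\ge 0$ one has $f_X(x)\ge f_X(a)$ when $|x|\le a$ and $f_X(x)\le f_X(a)$ when $|x|\ge a$. Both follow by writing the smaller-modulus point as a convex combination of $\pm$(the larger-modulus point) and invoking quasi-concavity together with evenness. From this I would deduce the key extremality property: for every Borel set $\mathcal C$ with $|\mathcal C|=2a<\infty$ one has $\int_{\mathcal C}f_X\le\int_{(-a,a)}f_X$. Indeed, since $|\mathcal C|=|(-a,a)|$ we have $|\mathcal C\setminus(-a,a)|=|(-a,a)\setminus\mathcal C|$, and therefore $\int_{\mathcal C}f_X-\int_{(-a,a)}f_X=\int_{\mathcal C\setminus(-a,a)}f_X-\int_{(-a,a)\setminus\mathcal C}f_X\le f_X(a)\,|\mathcal C\setminus(-a,a)|-f_X(a)\,|(-a,a)\setminus\mathcal C|=0$, where the inequality uses $f_X\le f_X(a)$ on $\mathcal C\setminus(-a,a)\subseteq\{|x|\ge a\}$ and $f_X\ge f_X(a)$ on $(-a,a)\setminus\mathcal C\subseteq\{|x|\le a\}$.

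With these in hand I would verify $f_X\mathbbm{1}_{(-c,c)}\succ g_Xh$ by checking \eqref{def_major} directly. Fix a Borel set $\mathcal B$ with $|\mathcal B|=t<\infty$. If $t\ge 2c$, pick any Borel $\mathcal A\supseteq(-c,c)$ with $|\mathcal A|=t$; then $\int_{\mathcal A}f_X\mathbbm{1}_{(-c,c)}=\int_{(-c,c)}f_X=m\ge\int_{\mathbb R}g_Xh\ge\int_{\mathcal B}g_Xh$, using $g_Xh\ge 0$. If $t<2c$, set $a\triangleq t/2<c$ and $\mathcal A\triangleq(-a,a)\subseteq(-c,c)$, so $|\mathcal A|=t$; since $h\le 1$ and $f_X\succ g_X$, there is a Borel $\mathcal A_0$ with $|\mathcal A_0|=t$ and $\int_{\mathcal B}g_Xh\le\int_{\mathcal B}g_X\le\int_{\mathcal A_0}f_X$, and then the extremality property gives $\int_{\mathcal A_0}f_X\le\int_{(-a,a)}f_X=\int_{\mathcal A}f_X\mathbbm{1}_{(-c,c)}$. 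In both cases $\int_{\mathcal B}g_Xh\le\int_{\mathcal A}f_X\mathbbm{1}_{(-c,c)}$, which is exactly \eqref{def_major} for $f_X\mathbbm{1}_{(-c,c)}\succ g_Xh$; dividing through by $m$ recovers $f_{X|X\in(-c,c)}\succ g_X'$.

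The main obstacle is making the extremality step rigorous without extra regularity on $f_X$: the centred interval $(-a,a)$ need not coincide with a super-level set of $f_X$ (if $f_X$ has a flat piece at the level $f_X(a)$), and $f_X$ is not assumed continuous. The level-exchange estimate above is written so as to use only the two comparisons $f_X(x)\ge f_X(a)$ for $|x|\le a$ and $f_X(x)\le f_X(a)$ for $|x|\ge a$, which hold for any even quasi-concave function, so no additional hypotheses are needed. A minor point to spell out is the homogeneity of the majorization relation under common scaling, which is what legitimizes passing between the normalized and unnormalized densities; everything else is elementary.
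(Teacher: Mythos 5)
Your proof is correct. Note that the paper never proves this lemma itself---it is imported verbatim as \cite[Lemma 2]{Lipsa} and used as a black box---so there is no in-paper argument to compare against; what you have written is a self-contained, elementary proof of the cited result, working directly from the definition \eqref{def_major} rather than through the rearrangement machinery of the original reference. Your reduction to the unnormalized claim $f_X\mathbbm{1}_{(-c,c)}\succ g_Xh$ is legitimate precisely because \eqref{deno_cond} forces the \emph{same} normalizing constant $m$ on both sides (that is exactly where the mass-matching hypothesis is used); the observation that evenness plus quasi-concavity makes $f_X$ non-increasing in $|x|$ is right, the level-exchange estimate showing that $(-a,a)$ maximizes $\int f_X$ over Borel sets of measure $2a$ is correct and indeed needs no continuity of $f_X$, and the two-case verification (with $h\le 1$ entering only through $\int_{\mathcal B}g_Xh\le\int_{\mathcal B}g_X$ and the hypothesis $f_X\succ g_X$ supplying the intermediate set $\mathcal A_0$ of measure $|\mathcal B|$) closes the argument. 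Two cosmetic repairs: the justification that $m>0$ is too terse---maximality of $f_X$ at $0$ alone does not yield positivity near $0$; the clean argument is that $\int_{(-c,c)}f_X=0$ would give a point $y\in(0,c)$ with $f_X(y)=0$, hence $f_X\le f_X(y)=0$ on $\{|x|\ge y\}$ by monotonicity in $|x|$ and $f_X=0$ a.e.\ on $(-y,y)$ as well, contradicting that $f_X$ integrates to one---and in the case $|\mathcal B|\ge 2c$ you should exhibit a concrete admissible set, e.g.\ $\mathcal A=(-|\mathcal B|/2,|\mathcal B|/2)\supseteq(-c,c)$. Both are one-line fixes; the proof stands.
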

\begin{lemma}(\cite[Lemma 6.7]{Hj})\label{l3}
Fix two pdfs $f_X$ and $g_X$, such that $f_X$ is even and quasi-concave and that $f_X$ majorizes $g_X$, $f_X\succ g_X$. Fix an even and quasi-concave pdf $r_Y$. Then, the convolution of $f_X$ and $r_Y$ majorizes the convolution of $g_X$ and $r_Y$,
\begin{equation}
    f_X*r_Y\succ g_X*r_Y,
\end{equation}
Furthermore, $f_X*r_Y$ is even and quasi-concave.

\end{lemma}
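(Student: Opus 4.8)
The plan is to reduce both claims to properties of a single elementary object, the \emph{concentration function} $\psi_c \triangleq \mathbbm{1}_{(-c,c)} * r_Y$, i.e.\ $\psi_c(y) = \int_{y-c}^{y+c} r_Y(u)\,du$, the $r_Y$-probability of a window of width $2c$ centred at $y$. The first thing I would establish is that, since $r_Y$ is even and quasi-concave, $\psi_c$ is itself even, quasi-concave, and bounded by $1$: a window of fixed width catches the most $r_Y$-mass when centred at the mode, and as it drifts away it loses mass monotonically (formally, for $y\ge 0$ the increment $\psi_c(y+h)-\psi_c(y)$ is governed by $r_Y$ at the two moving endpoints, and the endpoint entering the window is never closer to the mode than the one leaving).

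For the ``even and quasi-concave'' assertion about $f_X * r_Y$, I would invoke the layer-cake representation $f_X = \int_0^\infty \mathbbm{1}_{(-a_t,a_t)}\,dt$, valid because evenness and quasi-concavity force every superlevel set $\{f_X > t\}$ to be an interval centred at $0$ (of finite length $2a_t$ for $t>0$, by integrability of $f_X$). Convolving with $r_Y$ and interchanging the two integrals gives $f_X * r_Y = \int_0^\infty \psi_{a_t}\,dt$, an integral mixture of functions that are even and non-increasing on $[0,\infty)$; since that class is closed under non-negative mixtures, $f_X * r_Y$ is even and quasi-concave. Applying the same argument with the symmetric decreasing rearrangement $g_X^{\circ}$ of $g_X$ in place of $f_X$ shows that $g_X^{\circ} * r_Y$ is also even and quasi-concave, a fact I will use next.

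For the majorization $f_X * r_Y \succ g_X * r_Y$, I would proceed in two steps. Recall from \eqref{def_major} that, for densities of equal total mass, $f \succ g$ is equivalent to $\sup_{|\mathcal B|=m}\int_{\mathcal B} g \le \sup_{|\mathcal A|=m}\int_{\mathcal A} f$ for every $m>0$, and that for an even quasi-concave density this supremum over sets of measure $2c$ is attained at the centred interval $(-c,c)$; hence the left-hand quantity for $f_X*r_Y$ equals $\int_{-c}^{c}(f_X*r_Y)\,dx$. For the right-hand quantity I would pass from $g_X$ to $g_X^{\circ}$ using Riesz's rearrangement inequality: for every Borel $\mathcal B$ with $|\mathcal B|=2c$,
\begin{align*}
\int_{\mathcal B}(g_X*r_Y)\,dx &= \iint \mathbbm{1}_{\mathcal B}(x)\,g_X(y)\,r_Y(x-y)\,dx\,dy\\
&\le \iint \mathbbm{1}_{(-c,c)}(x)\,g_X^{\circ}(y)\,r_Y(x-y)\,dx\,dy\\
&= \int_{-c}^{c}(g_X^{\circ}*r_Y)\,dx ,
\end{align*}
using that $\mathbbm{1}_{\mathcal B}$ and $\mathbbm{1}_{(-c,c)}$ are equimeasurable and $r_Y$ is its own symmetric decreasing rearrangement. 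This reduces everything to showing $\int_{-c}^{c}(f_X*r_Y)\,dx \ge \int_{-c}^{c}(g_X^{\circ}*r_Y)\,dx$ for all $c>0$, i.e.\ to the case of symmetric decreasing $g_X$. In that case I would set $\Phi \triangleq f_X - g_X^{\circ}$, which is even, integrates to $0$, and (since $g_X$ and $g_X^{\circ}$ are equimeasurable, so $f_X \succ g_X^{\circ}$, and both are symmetric decreasing) satisfies $\int_{-s}^{s}\Phi \ge 0$ for all $s\ge 0$. Interchanging integrals turns the desired difference into $\int_{\mathbb R}\Phi(y)\,\psi_c(y)\,dy$, and inserting the layer-cake representation $\psi_c = \int_0^1 \mathbbm{1}_{(-s_t,s_t)}\,dt$ (valid since $0\le\psi_c\le1$ and $\psi_c$ is even and quasi-concave) gives $\int_{\mathbb R}\Phi\,\psi_c = \int_0^1 \left(\int_{-s_t}^{s_t}\Phi\right)dt \ge 0$, which is exactly what is needed.

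The step I expect to be the main obstacle is the reduction to symmetric $g_X$: because $g_X$ need not be unimodal, the set realising its position in the majorization order may be off-centre or disconnected, and it is Riesz's rearrangement inequality --- together with the fact, obtained above, that $g_X^{\circ}*r_Y$ is symmetric decreasing (so that \emph{its} position is realised by a centred interval) --- that lets me eliminate this difficulty and reach the transparent layer-cake computation. The remaining points are routine: that $\psi_c$ is non-increasing in $|y|$ and not merely even (this is precisely where quasi-concavity of $r_Y$, rather than just symmetry, is used), and the integrability justifying the Fubini interchanges, both immediate from $0\le\psi_c\le1$ and $\int_{\mathbb R}|\Phi|\le 2$.
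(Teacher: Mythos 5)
The paper states Lemma~\ref{l3} as a citation of \cite[Lemma 6.7]{Hj} and does not reproduce a proof, so there is no in-paper argument to compare against; your job is therefore to supply a correct proof, and you have done so. Your argument is sound: (a) $\psi_c=\mathbbm{1}_{(-c,c)}*r_Y$ is symmetric decreasing (your endpoint comparison, which after the change of variable reduces to $r_Y(w)\le r_Y(w-2c)$ for $w\ge c$, is exactly right); (b) the layer-cake $f_X=\int_0^\infty\mathbbm{1}_{(-a_t,a_t)}\,dt$ together with Tonelli expresses $f_X*r_Y$ as a non-negative mixture of symmetric-decreasing functions $\psi_{a_t}$, and that class is closed under such mixtures, giving the evenness/quasi-concavity claim; (c) the Hardy--Littlewood inequality identifies the extremal measure-$2c$ set for $f_X*r_Y$ with $(-c,c)$, and the Riesz rearrangement inequality with $(\mathbbm{1}_{\mathcal B},g_X,r_Y)$ replaces $g_X$ by $g_X^\circ$ at no cost since $r_Y$ and $\mathbbm{1}_{(-c,c)}$ are already symmetric decreasing; (d) the layer-cake for $\psi_c$ then reduces the majorization to $\int_{-s}^s(f_X-g_X^\circ)\ge 0$, which follows from $f_X\succ g_X\succ g_X^\circ$ (majorization is transitive and equimeasurable functions majorize each other) together with both being symmetric decreasing. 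The Fubini/Tonelli interchanges are justified as you note by $0\le\psi_c\le 1$ and $\int|\Phi|\le 2$. Two standard external ingredients (Hardy--Littlewood and Riesz) are used, both correctly; the rest is elementary, so this is a clean self-contained replacement for the cited lemma.
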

\begin{lemma}(\cite[Lemma 4]{Lipsa})\label{l4}
Fix two pdfs $f_X$ and $g_X$ such that $f_X$ is even and quasi-concave and that $f_X$ majorizes $g_X$, $f_X\succ g_X$. Then,
\begin{equation}
    \int_{\mathbb R}x^2f_X(x)dx\leq \int_{\mathbb R}(x-y)^2g_X(x)dx,~\forall y\in\mathbb R.
\end{equation}
\end{lemma}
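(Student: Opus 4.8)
The plan is to strip off the free parameter $y$ by a translation, then compare the two second moments via their layer-cake representations, which turns the inequality into a one-parameter family of ``mass on a symmetric interval'' inequalities that follow from majorization combined with the bathtub principle.

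\emph{Step 1 (remove $y$).} Substituting $u=x-y$ gives $\int_{\mathbb R}(x-y)^2 g_X(x)\,dx=\int_{\mathbb R}u^2\,\tilde g_X(u)\,du$ with $\tilde g_X(u)\triangleq g_X(u+y)$, which is again a pdf. Since for every finite-measure Borel set $\mathcal C$ we have $\int_{\mathcal C}\tilde g_X=\int_{\mathcal C+y}g_X$ and $\mathcal C+y$ has the same Lebesgue measure as $\mathcal C$, the definition \eqref{def_major} immediately gives $f_X\succ\tilde g_X$. Hence it suffices to prove $\int_{\mathbb R}x^2 f_X(x)\,dx\le\int_{\mathbb R}x^2 g_X(x)\,dx$ for an even quasi-concave pdf $f_X$ majorizing a pdf $g_X$ (renaming $\tilde g_X$ back to $g_X$).

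\emph{Step 2 (symmetric-interval comparison).} I claim $\int_{-c}^{c}g_X(x)\,dx\le\int_{-c}^{c}f_X(x)\,dx$ for every $c>0$. Applying \eqref{def_major} with $\mathcal B=(-c,c)$ yields a Borel set $\mathcal A$ of Lebesgue measure $2c$ with $\int_{-c}^{c}g_X\le\int_{\mathcal A}f_X$, so it remains to see $\int_{\mathcal A}f_X\le\int_{-c}^{c}f_X$. Evenness together with quasi-concavity forces the superlevel sets of $f_X$ to be symmetric intervals, so $f_X$ is nonincreasing in $|x|$ and in particular $\sup_{|x|\ge c}f_X\le\inf_{|x|<c}f_X$. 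Writing $\int_{\mathcal A}f_X-\int_{-c}^{c}f_X=\int_{\mathcal A\setminus(-c,c)}f_X-\int_{(-c,c)\setminus\mathcal A}f_X$ and using that $\mathcal A\setminus(-c,c)$ and $(-c,c)\setminus\mathcal A$ have equal Lebesgue measure (since $\mathcal A$ and $(-c,c)$ do), this difference is at most $\bigl(\sup_{|x|\ge c}f_X-\inf_{|x|<c}f_X\bigr)\cdot|\mathcal A\setminus(-c,c)|\le 0$.

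\emph{Step 3 (layer cake).} For any pdf $h$ one has $\int_{\mathbb R}x^2 h(x)\,dx=\int_{0}^{\infty}\bigl(1-\int_{-\sqrt t}^{\sqrt t}h(x)\,dx\bigr)\,dt$, with value in $[0,\infty]$. Applying Step 2 with $c=\sqrt t$ gives $1-\int_{-\sqrt t}^{\sqrt t}g_X\ge 1-\int_{-\sqrt t}^{\sqrt t}f_X$ for every $t>0$; integrating over $t$ yields $\int_{\mathbb R}x^2 g_X\ge\int_{\mathbb R}x^2 f_X$, which is the claim (and in particular shows the left side is finite whenever the right side is). I expect the delicate point to be the bathtub step: one must carefully justify that ``even and quasi-concave'' really forces $f_X$ to be unimodal and nonincreasing in $|x|$ (handling the boundary of each superlevel set, where $f_X$ may jump, via one-sided limits) and that the centered interval of measure $2c$ genuinely maximizes the $f_X$-mass among all measurable sets of that measure; the translation reduction, the coupling of abstract majorization to the concrete symmetric-interval inequality, and the layer-cake computation are routine.
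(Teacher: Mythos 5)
The paper does not prove this lemma; it cites \cite[Lemma 4]{Lipsa} and uses it as a black box, so there is no in-paper argument to compare against. Your proof is correct and self-contained. The translation in Step~1 preserves majorization because $\mathcal C\mapsto\mathcal C+y$ is measure-preserving, so \eqref{def_major} transfers directly from $g_X$ to the shifted pdf $\tilde g_X$. The bathtub argument in Step~2 is sound: evenness plus quasi-concavity makes every superlevel set $\{f_X\ge t\}$ a symmetric interval, hence $f_X$ is nonincreasing in $|x|$ (for $|x_1|<|x_2|$ with $f_X(x_2)>f_X(x_1)$, the symmetric convex superlevel set at threshold $t=f_X(x_2)$ would have to contain $x_1$, a contradiction), which gives $\sup_{|x|\ge c}f_X\le\inf_{|x|<c}f_X$; your swap inequality then closes the step using only the equality of Lebesgue measures of $\mathcal A\setminus(-c,c)$ and $(-c,c)\setminus\mathcal A$, and it needs no continuity of $f_X$ at $\pm c$. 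The layer-cake identity in Step~3 (Tonelli on $x^2=\int_0^\infty\mathbbm 1_{|x|>\sqrt t}\,dt$) correctly converts the symmetric-interval mass comparison into the second-moment comparison and automatically handles infinite moments in the right direction. This is essentially the majorization/symmetric-rearrangement argument one would expect from \cite{Lipsa}; I see no gap.
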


Lemma~\ref{l5}, stated next, provides a mathematical proof technique called \emph{real induction}. We will use it to prove that the assertions in Lemma~\ref{l6'}, stated below, hold on a continuous interval.
\begin{lemma}\label{l5}(Real induction \cite[Thm. 2]{Pete}) A subset $S\subset [a,b]$, $a<b$ is called inductive if
\begin{itemize}
    \item[1)] $a\in S$;
    \item[2)] If $a\leq x<b$, $x\in S$, then there exists $y>x$ such that $[x,y]\in S$;
    \item[3)] If $a\leq x<b$, $[a,x)\in S$, then $x\in S$.
\end{itemize} 
If a subset $S\subset [a,b]$ is inductive, then $S=[a,b]$.
\end{lemma}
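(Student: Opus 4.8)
The plan is to reproduce the classical proof of ordinary induction over $\mathbb{N}$, with the completeness of $\mathbb{R}$ playing the role that the well-ordering of $\mathbb{N}$ plays in the discrete case: where a discrete argument takes the least counterexample, here I take a supremum. Given an inductive set $S\subseteq[a,b]$, I would introduce the auxiliary set
\[
T\triangleq\{x\in[a,b]\colon [a,x]\subseteq S\},
\]
observe that $T\subseteq S$ (each $x\in T$ satisfies $x\in[a,x]\subseteq S$) and that $b\in T$ is equivalent to $[a,b]\subseteq S$, hence to $S=[a,b]$; so the goal reduces to showing $b\in T$.

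I would then argue as follows. By property 1), $a\in S$, so $a\in T$ and $T\neq\emptyset$; also $T$ is bounded above by $b$, so $s\triangleq\sup T$ exists in $[a,b]$ by the least-upper-bound property. First I would check that $[a,s)\subseteq S$: for any $x<s$ there is, by definition of supremum, some $t\in T$ with $x<t\le s$, whence $x\in[a,t]\subseteq S$. Next, $s>a$, since applying property 2) at $x=a$ (legitimate because $a<b$ and $a\in S$) yields $y>a$ with $[a,y]\subseteq S$, so $y\in T$ and $s\ge y>a$. Then I would invoke property 3) at $x=s$ — legitimate because $a<s\le b$ and $[a,s)\subseteq S$ — to get $s\in S$; combined with $[a,s)\subseteq S$ this gives $[a,s]\subseteq S$, i.e. $s\in T$, so the supremum is attained.

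Finally I would rule out $s<b$: in that case $a\le s<b$ and $s\in S$, so property 2) at $x=s$ gives $y>s$ with $[s,y]\subseteq S$; together with $[a,s]\subseteq S$ this yields $[a,y]\subseteq S$, i.e. $y\in T$ with $y>s=\sup T$, a contradiction. Hence $s=b$, so $b\in T$ and $S=[a,b]$.

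The argument involves no substantial obstacle; the only point requiring care is the boundary bookkeeping — arranging the trichotomy $s=a$ (excluded above), $a<s<b$, $s=b$ so that property 2) is applied only at points $x$ with $a\le x<b$ and property 3) only at points $x$ with $a<x\le b$. The latter is the range of the hypothesis in 3) intended by \cite{Pete}, and it is exactly what lets property 3) be invoked at the supremum (possibly at $x=b$). Once the cases are organized this way, the proof is a direct transcription of the discrete induction argument with $\sup$ in place of $\min$.
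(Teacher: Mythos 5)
The paper itself offers no proof of this lemma---it cites Clark's exposition \cite{Pete} and moves on---so there is no in-paper argument to compare against; your proof is the standard supremum argument and coincides with Clark's. The one substantive point you raise is correct and worth recording: as transcribed in the paper, condition 3) reads ``if $a\le x<b$, \dots'' while the proof must invoke it at $x=b$ (to pass from $[a,s)\subseteq S$ to $s\in S$ when $s=b$), and with the condition taken literally the lemma is false---the set $S=[a,b)$ satisfies all three clauses as written yet is not $[a,b]$. Clark's original formulation quantifies 3) over $x\in(a,b]$, which is what your proof uses and what the paper's later application (real induction on the left-closed interval $\Omega_{\tau_{i+1}}(s)$ in the proof of Lemma~\ref{l6'}) actually needs. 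So your argument is correct, matches the intended source, and you have correctly identified a transcription error in the lemma statement (there are two more cosmetic ones: the $\in$ in ``$[x,y]\in S$'' and ``$[a,x)\in S$'' should be $\subseteq$). Your bookkeeping---applying 2) only when $x<b$ and $x\in S$, applying 3) only when $x>a$ and $[a,x)\subseteq S$, splitting out $s>a$ via 2) at $x=a$ before invoking 3) at $s$, and killing $s<b$ by 2) at $x=s$---is exactly right and leaves no gaps.
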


\subsection{A technical lemma}\label{tech_lemma}
We define the following notations for two sampling-decision processes $\{\mathcal P_t\}_{t=0}^T$ and $\{\mathcal P^{\mathrm{sym}}_t\}_{t=0}^T$ (see Appendix~\ref{pf_rmk_plus}). Fix an arbitrary sampling-decision process $\{\mathcal P_t\}_{t=0}^T$ \eqref{gs'} satisfying (iv)--(v). It gives rise to a sampling policy with stopping times $\tau_1,\tau_2,\dots$ via \eqref{gs'}. We recall the definition of the mean-square residual error (MSRE) process $\{\tilde X_t\}_{t=0}^T$ in (iii) and denote the MSRE process under $\{\mathcal P_t\}_{t=0}^T$ as
\begin{subequations}\label{Yta}
\begin{align}
    \tilde X_t~& = \tilde X_t(\{\mathcal P_s\}_{s=0}^T)\\
    &\triangleq X_t-\mathbb E[X_t|X_{\tau_i},\tau_i], t\in[\tau_i,\tau_{i+1}).
\end{align}
\end{subequations}
We define the residual error estimate (REE) process $\{\bar {\tilde X}_t\}_{t=0}^T$ under $\{\mathcal P_t\}_{t=0}^T$ as
\begin{subequations}\label{Ytbar'}
\begin{align}
   \bar {\tilde{X}}_t~&=\bar {\tilde{X}}_t(\{\mathcal P_s\}_{s=0}^T)\\\label{Ytbar} 
   &\triangleq \bar X_t - \mathbb E[X_t|X_{\tau_i},\tau_i]\\ \label{XG'b}
   &=\mathbb E[\tilde X_t|\{X_{\tau_j}\}_{j=1}^i,{\tau}^{i}, t<\tau_{i+1}]\\\label{XG'c}
   &=\mathbb E[\tilde X_t| \tau_i,t<\tau_{i+1}],~t\in[\tau_i,\tau_{i+1}),
\end{align}
\end{subequations}
where $\bar X_t=\bar X_t(\{\mathcal P_s\}_{s=0}^T)$ is the MMSE decoding policy defined in \eqref{opt_dec_s}; the equality in \eqref{XG'b} holds since $\mathbb E[X_t|X_{\tau_i},\tau_i]\in\sigma(\{X_{\tau_j}\}_{j=1}^i,\tau^i, t<\tau_{i+1})$; \eqref{XG'c} holds because $\tilde X_t$ is independent of $\{X_{\tau_j}\}_{j=1}^i,\tau^{i}$ due to (iii-a), and the event $\{t<\tau_{i+1}\}$ is independent of $\{X_{\tau_j}\}_{j=1}^i,\tau^{i-1}$ given $\tau_i$ due to (v). We recall that $N(\{\mathcal P_t\}_{t=0}^T)$ defined above Proposition~\ref{prop4'} in Appendix~\ref{pf_rmk_plus} represents the number of stopping times in $[0,T]$, and we simplify this notation as
\begin{align}
    N \triangleq N(\{\mathcal P_t\}_{t=0}^T).
\end{align}
We denote the left-closed continuous interval
\begin{align}\label{omega}
    \Omega_{\tau_{i+1}}(s) \triangleq \{t\in[s,T]\colon &\mathbb P[\tau_{i+1}>t|\tau_i=s] >0\}, 
\end{align}
for all $s\in\mathrm{Supp}(f_{\tau_i})$, and the left-open continuous interval
\begin{align}
    \bar \Omega_{\tau_{i+1}}(s) \triangleq \Omega_{\tau_{i+1}}(s)\setminus\{s\}.
\end{align}

Given $\{\mathcal P_t\}_{t=0}^T$, we construct a sampling-decision process $\{\mathcal P^{\mathrm{sym}}_t\}_{t=0}^T$ \eqref{gs'} of the form \eqref{dstp}, which via \eqref{gs'} is associated with a sampling policy with stopping times $\tau_1',\tau_2',\dots$ , such that the symmetric thresholds $\{a_i(r,s)\}_{r=s}^{T}$ of $\{\mathcal P^{\mathrm{sym}}_t\}_{t=0}^T$ satisfy for all $s\in\mathrm{Supp}(f_{\tau_i})$, $t\in[s,T]$,
\begin{equation}
\begin{aligned}\label{constr2}
&\mathbb P[\tilde X_r'\in(-a_i(r,s),a_i(r,s)),\forall r\in[s,t]|\tau_i'=s]\\ =~&\mathbb P[\tau_{i+1}>t|\tau_i=s]. 
\end{aligned}
\end{equation}
This is possible since by adjusting the thresholds, the left side of \eqref{constr2} can be equal to any non-increasing function in $t$ bounded between $[0,1]$. Under $\{\mathcal P^{\mathrm{sym}}_t\}_{t=0}^T$ \eqref{constr2}, for all $s\in\mathrm{Supp}(f_{\tau_i})$, $i=1,2,\dots$, it holds that
\begin{align}
    \Omega_{\tau_i}(s) = \Omega_{\tau_i'}(s),\\
    \bar \Omega_{\tau_i}(s) = \bar \Omega_{\tau_i'}(s).
\end{align}
We denote the MSRE and the REE processes and the number of stopping times on $[0,T]$ under $\{\mathcal P_t^{\mathrm{sym}}\}_{t=0}^T$ respectively by
\begin{align}\label{Yt'barY'N'}
         &\tilde X_t' = \tilde X_t(\{\mathcal P^{\mathrm{sym}}_s\}_{s=0}^T),\\\label{YAc}
         &\bar {\tilde{X}}_t' = \tilde {\bar X}_t(\{\mathcal P^{\mathrm{sym}}_s\}_{s=0}^T)= 0,\\  
    &N' = N(\{\mathcal P^{\mathrm{sym}}_s\}_{s=0}^T),
\end{align}
where \eqref{YAc} holds since we can write $\bar {\tilde{X}}_t'$ as \eqref{XG'c} with $\tau_i$ replaced by $\tau_i'$ using the argument that justifies \eqref{XG'c}; $\tilde X_t'$ has an even and quasi-concave pdf due to the assumption (iii-b), and the pdf of $\tilde X_t$ conditioned on $\tau_i',t<\tau_{i+1}'$ under a symmetric threshold sampling-decision process of the form~ \eqref{dstp} is still even and quasi-concave. 

We denote the following probabilities
\begin{subequations}
\begin{align}\label{QP_denote1}
&\mathbb{Q}_i(a,b,c,d) \triangleq \mathbb P[\tau_{i+1}>a|\tau_{i+1}>b, \tau_i = c, \tilde X_a = d]\\ \label{QP_denote2}
&\mathbb{Q}_i'(a,b,c,d) \triangleq \mathbb P[\tau_{i+1}'>a|\tau_{i+1}'>b, \tau_i' = c, \tilde X_a' = d].
\end{align}
\end{subequations}

We proceed to introduce Lemma~\ref{l6'} using the notations defined in \eqref{Yta}--\eqref{QP_denote2}. We will use the assertions in Lemma~\ref{l6'} to compare the MSEs achieved by $\{\mathcal P_t\}_{t=0}^T$ and $\{\mathcal P_t^{\mathrm{sym}}\}_{t=0}^T$. 

\begin{lemma}\label{l6'}
The pdfs $f_{\tilde X_t|\tau_i=s,\tau_{i+1}>t}$ and $f_{\tilde X_t'|\tau_i'=s,\tau_{i+1}'>t}$ exist for all $s\in\mathrm{Supp}(f_{\tau_i})$,  $t\in\bar\Omega_{\tau_{i+1}}(s)$. Furthermore, for all $s\in\mathrm{Supp}(f_{\tau_i})$, $t\in\bar\Omega_{\tau_{i+1}}(s)$, it holds that
\begin{align}\label{maji1}
   f_{\tilde X_t'|\tau_i'=s,\tau_{i+1}'>t}\succ f_{\tilde X_t|\tau_i=s,\tau_{i+1}>t},\\ \label{maji1_even}
    f_{\tilde X_t'|\tau_i'=s,\tau_{i+1}'>t}~\text{is even and quasi-concave}.
\end{align}
\end{lemma}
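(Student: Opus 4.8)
\textbf{Proof proposal for Lemma~\ref{l6'}.}

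The plan is to prove the two assertions by \emph{real induction} (Lemma~\ref{l5}) on the time variable $t$ over the interval $\Omega_{\tau_{i+1}}(s)$, for each fixed $i$ and each fixed $s\in\mathrm{Supp}(f_{\tau_i})$. Write $a=\inf\Omega_{\tau_{i+1}}(s)=s$ and $b=\sup\Omega_{\tau_{i+1}}(s)$, and let $S$ be the set of $t\in[a,b]$ for which the pdfs $f_{\tilde X_t\mid\tau_i=s,\tau_{i+1}>t}$ and $f_{\tilde X_t'\mid\tau_i'=s,\tau_{i+1}'>t}$ exist and \eqref{maji1}--\eqref{maji1_even} hold (with the convention that the conditioning event has positive probability on $\bar\Omega_{\tau_{i+1}}(s)$, and at $t=s$ one simply reads off the unconditioned MSRE pdf). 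I would verify the three inductive axioms:
\begin{itemize}
\item[(1)] \emph{Base case} $a=s\in S$: at $t=s$ we have $\tilde X_s=\tilde X_s'=0$ almost surely (since $R_t(t,\tau)=0$ and $q_t(t)=1$ in (iii-b), so $\tilde X_s = X_s-\mathbb E[X_s|X_{\tau_i},\tau_i]=0$), the ``pdf'' is a point mass, majorization is trivial, and evenness/quasi-concavity hold degenerately. (If one prefers to start strictly inside the interval, the argument at (3) covers the passage to the boundary.)
\item[(2)] \emph{Right-openness}: if $t\in S$ with $t<b$, exhibit $y>t$ with $[t,y]\subset S$. Here I use the decomposition \eqref{Yt_r}: for $r\in(t,y]$ close to $t$, $\tilde X_r = q_r(t)\tilde X_t + R_r(t,\tau_i)$, where $R_r(t,\tau_i)$ has an even, quasi-concave pdf (assumption (iii-b)) and is independent of $\mathcal F_t$ (assumption (iii-a)). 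Conditioning on the no-sample event up to time $r$ multiplies the density of $\tilde X_t$ (resp.\ $\tilde X_t'$) by a function $h(\cdot)\in[0,1]$ — this is exactly the hypothesis needed for Lemma~\ref{l2} — and by construction \eqref{constr2} the normalizing probabilities match (this is where \eqref{QP_denote1}--\eqref{QP_denote2} and \eqref{constr2} are used to line up the two denominators). Applying Lemma~\ref{l2} preserves majorization of the conditioned densities, and then Lemma~\ref{l3} (convolution with the even, quasi-concave density of $R_r(t,\tau_i)$, up to the deterministic rescaling by $q_r(t)$) propagates both majorization \emph{and} the evenness/quasi-concavity forward to time $r$. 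The existence of the conditional densities on a right-neighborhood follows from assumption (vi) together with the continuity of paths and the continuity of $R_r(t,\tau)$ in $r$.
\item[(3)] \emph{Left-closedness}: if $[a,t)\subset S$ with $a\le t\le b$, then $t\in S$. This is a limiting argument: take a sequence $r_n\uparrow t$, $r_n\in S$; the densities $f_{\tilde X_{r_n}'\mid\tau_i'=s,\tau_{i+1}'>r_n}$ are even and quasi-concave and majorize the corresponding unprimed ones; one passes to the limit using almost-sure continuity of $\{X_t\}$ and of $R_t(s,\tau)$ (so $\tilde X_{r_n}\to\tilde X_t$ a.s.), the fact that $\mathbb P[\tau_{i+1}>r_n\mid\tau_i=s]\to\mathbb P[\tau_{i+1}>t\mid\tau_i=s]>0$ on $\bar\Omega_{\tau_{i+1}}(s)$ (left-closedness of $\Omega_{\tau_{i+1}}(s)$, \eqref{omega}), and the closedness of the class of even, quasi-concave densities and of the majorization relation under (weak / $L^1$) limits.
\end{itemize}
Real induction then yields $S=\Omega_{\tau_{i+1}}(s)$, which is the claim.

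\textbf{Main obstacle.} The delicate point is step (2): one must check that conditioning on $\{\tau_{i+1}>r\}$ (resp.\ $\{\tau_{i+1}'>r\}$) acts on the time-$t$ density precisely as multiplication by a measurable $h\colon\mathbb R\to[0,1]$ with the \emph{same} induced mass on both sides — i.e.\ that \eqref{deno_cond} in Lemma~\ref{l2} holds with $c$ the current symmetric threshold — so that Lemma~\ref{l2} is applicable. This requires the Markov property of $\{\tilde X_t\}$ (iii-a) to factor the survival event $\{\tau_{i+1}>r\}$ through $\tilde X_t$, assumption (v) to keep the sampling-decision set $\mathcal P_t$ a function only of $(\tau_i,\{\tilde X_s\}_{s=\tau_i}^t)$, and the matching construction \eqref{constr2} to equalize the two normalizations; making the measurability and positivity bookkeeping rigorous (and handling the deterministic scaling $q_r(t)$ before invoking Lemma~\ref{l3}) is the technical heart of the proof. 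The passage to the interval endpoint in step (3) is the second, milder, obstacle, handled by the continuity assumptions in (ii) and (iii-b) plus the left-closedness built into \eqref{omega}.
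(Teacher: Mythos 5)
Your overall strategy matches the paper's: reformulate the claims so they make sense at $t=s$, then run real induction (Lemma~\ref{l5}) over $\Omega_{\tau_{i+1}}(s)$ with Lemma~\ref{l2} handling the conditioning-by-survival step and Lemma~\ref{l3} handling the convolution step, using \eqref{constr2} to line up the normalizations. The base case at $t=s$ is the same. However, there are two genuine gaps.

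First, your forward-propagation step (which you place in the right-openness clause) applies Lemma~\ref{l2} \emph{before} Lemma~\ref{l3}: you condition the time-$t$ density on the future no-sample event $\{\tau_{i+1}>r\}$ and then convolve with $R_r(t,\tau_i)$. This order is not valid. The convolution identity $f_{\tilde X_r\mid\cdot}=f_{q_r(t)\tilde X_t\mid\cdot}*f_{R_r(t,\tau_i)}$ requires $R_r(t,\tau_i)$ to be conditionally independent of $\tilde X_t$ given the conditioning event, and that holds when conditioning on $\{\tau_{i+1}>t\}$ (an $\mathcal F_t$-event, so independent of $R_r$ by (iii-a)), but \emph{not} when conditioning on $\{\tau_{i+1}>r\}$, which depends on the post-$t$ increments that build $R_r$. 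The paper therefore convolves first — moving from $f_{\tilde X_{r-\delta}\mid\tau_i=s,\tau_{i+1}>r-\delta}$ to $f_{\tilde X_r\mid\tau_i=s,\tau_{i+1}>r-\delta}$ by Lemma~\ref{l3} via \eqref{maj2} — and only then conditions on $\{\tau_{i+1}>r\}$ via Lemma~\ref{l2}. Relatedly, Lemma~\ref{l2} requires the modulating function on the \emph{majorizing} (primed) side to be an indicator on a symmetric interval. In your step, that function is $\mathbb P[\tau_{i+1}'>r\mid\tau_{i+1}'>t,\tau_i'=s,\tilde X_t'=y]$, which for fixed $r>t$ is a continuous $[0,1]$-valued function, not an indicator. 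The paper gets an indicator by conditioning on $\tilde X_r'$ at the \emph{target} time $r$, moving the ``start'' time $r-\delta$ arbitrarily close to $r$, and using $\lim_{\delta\to 0^+}\mathbb Q_i'(r,r-\delta,s,y)=\mathbbm 1_{(-a_i(r,s),a_i(r,s))}(y)$.

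Second, your left-closedness step replaces the paper's forward-propagation argument by an appeal to closedness of the majorization relation and of the class of even, quasi-concave densities under weak/$L^1$ limits, together with an asserted convergence $f_{\tilde X_{r_n}\mid\tau_i=s,\tau_{i+1}>r_n}\to f_{\tilde X_t\mid\tau_i=s,\tau_{i+1}>t}$ as $r_n\uparrow t$. Neither the density convergence nor the stability of quasi-concavity under weak limits is established by the assumptions you cite (path continuity of $X$ controls $\tilde X_{r_n}\to\tilde X_t$ a.s.\ but does not by itself give convergence of conditional densities, and quasi-concavity is preserved under pointwise, not weak, limits). The paper avoids this entirely: its verification of condition 3) of Lemma~\ref{l5} is again the Lemma~\ref{l3}-then-Lemma~\ref{l2} forward step from $r-\delta$ to $r$, using the inductive hypothesis on $[s,r)$, followed by the $\delta\to 0^+$ limit in which all the quantities already satisfy \eqref{maji1}--\eqref{maji1_even} for each fixed $\delta$, so no separate ``limits preserve these structural properties'' argument is needed.
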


\begin{proof}[Proof of Lemma~\ref{l6'}]
We prove that $f_{\tilde X_t|\tau_i=s,\tau_{i+1}>t}$ exists. The proof that $f_{\tilde X_t'|\tau_i'=s,\tau_{i+1}'>t}$ exists is similar. Since $\tilde X_t$ at $t\geq \tau_i=s$, is independent of $\mathcal F_s$ by (iii-a) and is equal to $R_t(s,s)$ by (iii-b), we compute $f_{\tilde X_t|\tau_i=s,\tau_{i+1}>s}$ using \eqref{Yt_r},
\begin{equation}\label{thm1_70}
    f_{\tilde X_t|\tau_i=s,\tau_{i+1}>s} = f_{R_t(s,s)}.
\end{equation}
Thus, $f_{\tilde X_t|\tau_i=s,\tau_{i+1}>s}$ exists since $f_{R_t(s,s)}$ is a valid pdf by (iii-b). To establish that $f_{\tilde X_t|\tau_i=s,\tau_{i+1}>t}(y)$ exists, we compute
\begin{subequations}\label{54_behind}
\begin{align}\label{54_b1}
 &f_{\tilde X_t|\tau_i=s,\tau_{i+1}>t}(y)
  = f_{\tilde X_t|\tau_i=s,\tau_{i+1}>s,\tau_{i+1}>t}(y)\\\label{54_b2}
  =~& \frac{\mathbb Q_i(t,s,s,y)f_{\tilde X_t|\tau_i=s,\tau_{i+1}>s}(y)}{\mathbb P[\tau_{i+1}>t|\tau_i=s,\tau_{i+1}>s]},
  \end{align}
\end{subequations}
where \eqref{54_b1} holds since $\tau_{i+1}>t$ implies $\tau_{i+1}>s$. In \eqref{54_b2}, we observe that for all $t\in\bar \Omega_{\tau_{i+1}}(s)$, the pdf $f_{\tilde X_t|\tau_{i+1}>s,\tau_i=s}$ exists by \eqref{thm1_70}; the denominator of \eqref{54_b2} is nonzero. We conclude that the pdf $f_{\tilde X_t|\tau_{i}=s,\tau_i>t}$ exists for all $s\in\mathrm{Supp}(f_{\tau_i})$, $t\in\bar\Omega_{\tau_{i+1}}(s)$.

The assertion \eqref{maji1} holds if and only if 
\begin{itemize}
\item [(a)] for all $s\in\mathrm{Supp}(f_{\tau_i})$, $t\in\bar\Omega_{\tau_{i+1}}(s)$ and for any Borel measurable set $\mathcal B\in\mathcal B_{\mathbb R}$ with finite Lebesgue measure, there exists a Borel measurable set $\mathcal A\in\mathcal B_{\mathbb R}$ with the same Lebesgue measure, such that
 \begin{equation}\label{maji1f}
 \begin{aligned}
  &\mathbb P[\tilde X_t'\in\mathcal A|\tau_i'=s,\tau_{i+1}'>t]\\
  \geq~&  \mathbb P[\tilde X_t\in\mathcal B|\tau_i=s,\tau_{i+1}>t],
  \end{aligned}
\end{equation}
\end{itemize}
holds. This is because \eqref{maji1f} is a rewrite of \eqref{maji1} using the definition of majorization \eqref{def_major}.

The assertion \eqref{maji1_even} holds if and only if for all $s\in\mathrm{Supp}(f_{\tau_i})$, $t\in\bar\Omega_{\tau_{i+1}}(s)$, all of the following hold:
\begin{itemize}
\item[(b)] the conditional cdf $\mathbb P[\tilde X_t'\leq y|\tau_i'=s,\tau_{i+1}'>t]$ is convex for $y<0$ and is concave for $y>0$. 
\item[(c)] for any $y> 0$, 
\begin{equation}\label{maji1_evenf}
\begin{aligned}
    &\mathbb P[\tilde X_t'\in(0,y]|\tau_i'=s,\tau_{i+1}'>t]\\
    =&\mathbb P[\tilde X_t'\in[-y,0)|\tau_i'=s,\tau_{i+1}'>t].
\end{aligned}
\end{equation}
\end{itemize}
This is because $f_{\tilde X_t'|\tau_i'=s,\tau_{i+1}'>t}$ is quasi-concave if and only if (b) holds, and $f_{\tilde X_t'|\tau_i'=s,\tau_{i+1}'>t}$ is even if and only if (c) holds. 

Items (a)--(c) facilitate proving that the assertions \eqref{maji1}--\eqref{maji1_even} hold on the left-open interval $\bar\Omega_{\tau_{i+1}}(s)$. Real induction, which must be used on a left-closed interval, does not apply to show~\eqref{maji1}--\eqref{maji1_even} directly, since the densities in \eqref{maji1}--\eqref{maji1_even} do not exist at $t=s$. Instead, we apply real induction to show (a)--(c). Using real induction in Lemma~\ref{l5}, we verify that conditions 1), 3), 2) in Lemma~\ref{l5} hold for (a)--(c) in on $t\in\Omega_{\tau_{i+1}}(s)$ one by one.

To verify that the condition 1) in Lemma~\ref{l5} holds, we need to show that (a)--(c) hold for $t=s$. This is trivial since
\begin{equation}\label{cdf_com}
\begin{aligned}
&\mathbb P[\tilde X_s'=0|\tau_i'=s,\tau_{i+1}'>s]\\
    =~&\mathbb P[\tilde X_s=0|\tau_i=s,\tau_{i+1}>s]\\
    =~&1.   
\end{aligned}
\end{equation}

Next, we show that condition 3) in Lemma~\ref{l5} holds, that is, assuming that (a)--(c) hold for all $t\in[s,r)$, $r\in\bar \Omega_{\tau_{i+1}}(s)$, we prove that (a)--(c) hold for $t=r$. Equivalently, we show that~\eqref{maji1}--\eqref{maji1_even} hold for $t=r$. Let $\delta \in(0,r-s]$. At time $t=r$, we calculate the left side of \eqref{maji1} as \begin{subequations}\label{FA0}
\begin{align}\nonumber
    &f_{\tilde X_r'|\tau_i'=s,\tau_{i+1}'>r}(y)\\ \label{p2_32b}
=~&\lim_{\delta\rightarrow0^+}f_{\tilde X_r'|\tau_i'=s,\tau_{i+1}'> r-\delta,\tau_{i+1}'>r}(y)\\\label{p2_32c}
    =~&\lim_{\delta\rightarrow0^+}\frac{\mathbb Q_i'(r, r-\delta,s,y)f_{\tilde X_r'|\tau_i'=s,\tau_{i+1}'>r-\delta}(y)}{\int_{\mathbb R}\mathbb Q_i'(r, r-\delta,s,y)f_{\tilde X_r'|\tau_i'=s,\tau_{i+1}'>r-\delta}(y)dy}\\ \label{p2_32d}
    =~&\lim_{\delta\rightarrow0^+}\frac{\mathbbm{1}_{(-a_i(r,s),a_i(r,s))}(y)f_{\tilde X_r'|\tau_i'=s,\tau_{i+1}'>r-\delta}(y)}{\int_{\mathbb R}\mathbbm{1}_{(-a_i(r,s),a_i(r,s))}(y)f_{\tilde X_r'|\tau_i'=s,\tau_{i+1}'>r-\delta}(y)dy},
\end{align}
\end{subequations}
where \eqref{p2_32b} holds since the event $\tau_{i+1}'>r$ implies the event $\tau_{i+1}'>r-\delta$; the pdf $f_{\tilde X_r'|\tau_i'=s,\tau_{i+1}'>r-\delta}$ in \eqref{p2_32c} exists since \eqref{54_behind} holds with $\tilde X_t$, $\tau_i=s$, $\tau_{i+1}>s$, $\tau_{i+1}>t$ replaced by $\tilde X_r'$, $\tau_i'=s$, $\tau_{i+1}'>s$, $\tau_{i+1}'>r-\delta$, respectively; \eqref{p2_32d} holds since 
\begin{equation}
    \lim_{\delta\rightarrow 0^+}\mathbb Q_i'(r,r-\delta,s,y) = \mathbbm{1}_{(-a_i(r,s),a_i(r,s))}(y).
\end{equation}
Similarly, replacing $\mathbb Q_i'$ in \eqref{p2_32c} by $\mathbb Q_i$, we calculate the right side of \eqref{maji1} as
\begin{align}\nonumber
  &f_{\tilde X_r|\tau_i=s,\tau_{i+1}>r}(y)\\ \label{FG0}
  =~&\lim_{\delta\rightarrow0^+}\frac{\mathbb Q_i(r, r-\delta, s, y)f_{\tilde X_r|\tau_i=s,\tau_{i+1}>r-\delta}(y)}{\int_{\mathbb R}\mathbb Q_i(r, r-\delta, s, y)f_{\tilde X_r|\tau_i=s,\tau_{i+1}>r-\delta}(y)dy},
\end{align}
where the pdf $f_{\tilde X_r|\tau_i=s,\tau_{i+1}>r-\delta}(y)$ exists since \eqref{54_behind} holds with $\tilde X_t$, $\tau_{i+1}>t$ replaced by $\tilde X_r$, $\tau_{i+1}>r-\delta$ respectively.

To check that \eqref{maji1} holds at $t=r$, we first prove that $f_{\tilde X_r'|\tau_i'=s,\tau_{i+1}'>r-\delta}$ majorizes $ f_{\tilde X_r|\tau_i=s,\tau_{i+1}>r-\delta}$. Note that $R_r(r-\delta,s)$ is independent of $\{\tilde X_t\}_{t=0}^{r-\delta}$ due to (iii-a), and thus is independent of the event $\{\tau_{i+1}'>r-\delta,\tau_i'=s\}$. We obtain $\tilde X_r'$ using \eqref{Yt_r},
\begin{equation}\label{maj2}
     f_{\tilde X_r'|\tau_i'=s,\tau_{i+1}'>r-\delta}= f_{q_r(r-\delta)\tilde X_{r-\delta}'|\tau_i'=s,\tau_{i+1}'>r-\delta}*f_{R_r(r-\delta,s)}.
\end{equation}
By \eqref{maj2} and the inductive hypothesis that (a)--(c) holds for $t\in[s,r)$, the assumptions in Lemma~\ref{l3} are satisfied with $f_X\leftarrow f_{q_r(r-\delta)\tilde X_{r-\delta}'|\tau_i'=s,\tau_{i+1}'>r-\delta}$, $g_X\leftarrow f_{q_r(r-\delta)\tilde X_{r-\delta}|\tau_i=s,\tau_{i+1}>r-\delta}$, $r_Y\leftarrow f_{R_r(r-\delta,s)}$. We conclude that
\begin{equation}\label{maj3}
    f_{\tilde X_r'|\tau_i'=s,\tau_{i+1}'>r-\delta}\succ  f_{\tilde X_r|\tau_i=s,\tau_{i+1}>r-\delta},
\end{equation}
\begin{equation}\label{maj3_behine}
  f_{\tilde X_r'|\tau_i'=s,\tau_{i+1}'>r-\delta}~\text{is even and quasi-concave}.
\end{equation}
Due to \eqref{maj3_behine} and the fact that the indicator function in \eqref{p2_32d} is over an interval symmetric about zero, we conclude \eqref{maji1_even} holds for $t=r$. 
By \eqref{constr2}, \eqref{maj3} and \eqref{maj3_behine}, the assumptions in Lemma~\ref{l2} are satisfied with $f_X\leftarrow f_{\tilde X_r'|\tau_i'=s,\tau_{i+1}'>r-\delta}$, $g_X\leftarrow f_{\tilde X_r|\tau_i=s,\tau_{i+1}>r-\delta}$, $f_{X|X\in(-c,c)} \leftarrow f_{\tilde X_r'|\tau_i'=s,\tau_{i+1}'>r}$, and $g_X' \leftarrow f_{\tilde X_r|\tau_i=s,\tau_{i+1}>r}$, $c\leftarrow a_i(r,s)$, $h\leftarrow \mathbb Q_i(r,r-\delta,s,y)$. Thus, we conclude that \eqref{maji1} holds for $t=r$. Therefore, \eqref{maji1}--\eqref{maji1_even} hold for $t=r$, i.e., (a)--(c) hold for $t=r$.

To prove that the condition 2) in Lemma~\ref{l5} holds, we assume (a)--(c) hold for $t=r$, and prove that the following holds:
\begin{subequations}\label{maj_reld}
\begin{align}\label{maj_reld_a}
    &\lim_{\delta\rightarrow 0^+}f_{\tilde X_{r+\delta}'|\tau_i'=s,\tau_{i+1}'>r+\delta}\succ \lim_{\delta\rightarrow0^+}f_{\tilde X_{r+\delta}|\tau_i=s,\tau_{i+1}>{r+\delta}},\\
    &\lim_{\delta\rightarrow 0^+}f_{\tilde X_{r+\delta}'|\tau_i'=s,\tau_{i+1}'>r+\delta}~\text{is even and quasi-concave}.
\end{align}
\end{subequations} 
The right and the left sides of \eqref{maj_reld_a} are equal to \eqref{p2_32d} and~\eqref{FG0} respectively with $r$ replaced by $r+\delta$. It is easy to see that \eqref{maj2}--\eqref{maj3_behine} and the assumptions in Lemma~\ref{l2} hold with $r$ replaced by $r+\delta$. Thus, we conclude that \eqref{maj_reld} holds. 

Using the real induction in Lemma~\ref{l5}, we have shown that (a)--(c) hold for all $s\in\mathrm{Supp}(f_{\tau_i})$, $t\in\Omega_{\tau_{i+1}}(s)$. Thus, \eqref{maji1}--\eqref{maji1_even} hold for all $s\in\mathrm{Supp}(f_{\tau_i})$, $t\in\bar\Omega_{\tau_{i+1}}(s)$. 
\end{proof}

\subsection{Proof of Theorem~\ref{prop2}}\label{prop2_sub_pf}
The sampling-decision process $\{\mathcal P_t^{\mathrm{sym}}\}_{t=0}^T$ leads to the same average sampling frequency as $\{\mathcal P_t\}_{t=0}^T$. This is because \eqref{constr2} implies that for all $s\in\mathrm{Supp}(f_{\tau_i})$, $t\in[s,T]$,
\begin{equation}\label{Pcond}
    \mathbb P[\tau_{i+1}>t|\tau_i=s]=\mathbb P[\tau_{i+1}'>t|\tau_i'=s].
\end{equation}
Together with the Markov property of the stopping times (assumption (v)), \eqref{Pcond} implies that the joint distribution of $\tau_1,\tau_2,\dots$ is equal to the joint distribution of $\tau_1',\tau_2',\dots$ We conclude that $\{\mathcal P_t\}_{t=0}^T$ and $\{\mathcal P_t^{\mathrm{sym}}\}_{t=0}^T$ lead to the same average sampling frequency
\begin{align}
    \mathbb E[N] = \mathbb E[N'].
\end{align}

Next, we show $\{\mathcal P_t^{\mathrm{sym}}\}_{t=0}^T$ achieves an MSE no larger than that achieved by $\{\mathcal P_t\}_{t=0}^T$. Due to \eqref{XG'c}, \eqref{YAc}, and \eqref{maji1}--\eqref{maji1_even} in Lemma~\ref{l6'}, we can apply Lemma~\ref{l4} with $f_X\leftarrow f_{\tilde X_t'|\tau_i'=s,\tau_{i+1}'>t}$ and $g_X\leftarrow f_{\tilde X_t|\tau_i=s,\tau_{i+1}>t}$, yielding
\begin{equation}\label{Econd}
\mathbb E\left[(\tilde X_t-\bar {\tilde{X}}_t)^2|\tau_{i}=s, \tau_{i+1}>t\right]\geq \mathbb E\left[\tilde X_t'^2|\tau_{i}'=s, \tau_{i+1}'>t\right].
\end{equation}
Combining \eqref{Pcond} and \eqref{Econd}, we conclude by law of total expectation that $\{\mathcal P_t^{\mathrm{sym}}\}_{t=0}^T$ achieves an MSE no larger than that achieved by $\{\mathcal P_t\}_{t=0}^T$.

\section{Proof of Corollary~\ref{cor1}}\label{pfcor1}
Under a symmetric threshold sampling policy \eqref{dstp}, the MMSE decoding policy in \eqref{opt_dec_s} can be expanded as, for $\tau_i\leq t<\tau_{i+1}$,
\begin{subequations}\label{ms}
\begin{align}\label{mmse_s_a}
    \bar X_t
    =&\mathbb E[X_t|\{X_{\tau_j}\}_{j=1}^i,\tau^i, t<\tau_{i+1}]\\ \label{mmse_s_b}
 =&\bar {\tilde{X}}_t+ \mathbb E[X_t|X_{\tau_i},\tau_i]\\ \label{mmse_s_d}
    =&\mathbb E[X_t|X_{\tau_i},\tau_i],
\end{align}
\end{subequations}
where $\bar {\tilde{X}}_t$ in \eqref{mmse_s_b} is equal to $\bar {\tilde{X}}_t'$ in \eqref{YAc}, thus is equal to zero.
\section{Proof of Corollary~\ref{cor2}}\label{pfcor2}
Given any causal sampling policy such that \eqref{comm_cons_f} is satisfied with a strict inequality, we construct a causal sampling policy that satisfies \eqref{comm_cons_f} with equality and leads to an MSE no worse than that achieved by the given causal sampling policy. 

Given an arbitrary symmetric threshold sampling policy \eqref{dstp} with stopping times $\tau_1,\tau_2,\dots$, we denote by $N_t$ the number of samples taken in $[0,t]$. Let $t'$, $t'\in(0,T)$ be a dummy deterministic time. We decompose the MSE under the given sampling policy as
\begin{subequations}\label{cor2eq1}
\begin{align}\label{cor2eq1a}
    &\mathbb E\left[\sum_{i=0}^{N_{t'}-1}\int_{\tau_i}^{\tau_{i+1}}(X_t-\mathbb E[X_t|X_{\tau_i},\tau_i])^2 dt\right]\\ \label{cor2eq1c}
    +&\mathbb E\left[\int_{\tau_{N_{t'}}}^{t'}(X_t-\mathbb E[X_t|X_{\tau_{N_{t'}}},\tau_{N_{t'}}])^2 dt\right]\\ \label{cor2eq1d}
    +&\mathbb E\left[\int_{t'}^{\tau_{N_{t'}+1}}(X_t-\mathbb E[X_t|X_{\tau_{N_{t'}}},\tau_{N_{t'}}])^2 dt\right]\\ \label{cor2eq1e}
    +&\mathbb E\left[\sum_{i=N_{t'}+1}^{N_T}\int_{\tau_i}^{\tau_{i+1}}(X_t-\mathbb E[X_t|X_{\tau_i},\tau_i])^2 dt\right],
\end{align}
\end{subequations}
where $\tau_{N_T+1}\triangleq T$.

Under the given sampling policy $\tau_1,\tau_2,\dots$, we construct a sampling policy by inserting an extra deterministic sampling time $t'$. The resultant MSE is the same as \eqref{cor2eq1} with \eqref{cor2eq1d} replaced by
\begin{equation}\label{cor2eq2d}
    \mathbb E\left[\int_{t'}^{\tau_{N_{t'}+1}}(X_t-\mathbb E[X_t|X_{t'}])^2 dt\right],
\end{equation}
since a sample is taken at time $t'$ under the constructed sampling policy. Since
\begin{subequations}
\begin{align}
    &\sigma(X_{\tau_{N_{t'}}},\tau_{N_{t'}})\subseteq \sigma(\mathcal F_{t'})\\ \label{sigmab}
&\mathbb E[X_t|\mathcal F_{t'}] = \mathbb E[X_t|X_{t'}],
\end{align}    
\end{subequations}
where \eqref{sigmab} is due to the strong Markov process (i) in Section~\ref{IA}, we conclude that \eqref{cor2eq1d} $\geq$ \eqref{cor2eq2d}.

Thus, by introducing extra sampling times, we can achieve the same or a lower MSE. We can express the difference between the frequency constraint $F$ and the average sampling frequency under the given sampling policy as
\begin{equation}
    FT-\mathbb E[N_T] = I+D,
\end{equation}
where $I\in \mathbb N$ represents the non-negative integer part, and $D\in(0,1)$ represents the decimal part. By introducing $I$ different deterministic sampling times, we can compensate the integer part $I$. By introducing a random sampling time stamp $t$ with probability $D$ to sample and probability $1-D$ not to sample, we can compensate the decimal part. Therefore, for any sampling policy whose average sampling frequency is strictly less than $F$, we can always construct a sampling policy that achieves the maximum sampling frequency $F$ and leads to an MSE no worse than that achieved by the arbitrarily fixed sampling policy. 

\section{Proof of Corollary~\ref{cor3}}\label{pfcor3}
We show that symmetric thresholds $\{a_i(r,s)\}_{r=s}^T$ in \eqref{constr2} must satisfy \eqref{greater} for all $s\in\mathrm{Supp}(f_{\tau_i})$.

Due to (vi), the probability on the right side of \eqref{constr2} is continuous in $t\in[s,T]$ for all $s\in\mathrm{Supp}(f_{\tau_i})$. Thus, for all $s\in\mathrm{Supp}(f_{\tau_i})$, $t\in[s,T)$,
\begin{subequations}\label{prob_ineq}
\begin{align}
    &\lim_{\delta\rightarrow 0^+} \mathbb P\left[\tilde X_r'\in (-a_i(r,s),a_i(r,s)),\forall r\in[s,t+\delta]\middle | \tau_i'=s\right]\\
    =~&\mathbb P\left[\tilde X_r'\in (-a_i(r,s),a_i(r,s)),\forall r\in[s,t]\middle| \tau_i'=s\right].
\end{align}
\end{subequations}
By the continuity of $\tilde X_r'$ in (iii-b), \eqref{prob_ineq} implies \eqref{greater}.

\section{Proof of Theorem~\ref{prop3}}\label{pfprop3}
First, we introduce Lemma~\ref{l6}, stated next, that will be helpful in proving \eqref{solvet}. Second, we prove that symmetric threshold sampling policies \eqref{dstp} in Theorem~\ref{prop2} can be reduced to \eqref{stp} in the setting of Theorem~\ref{prop3}, i.e., under the assumption that $\{X_t\}_{t\geq 0}$ has time-homogeneous property in Definition~\ref{def_th} and $T=\infty$. Then, we show that Remark~\ref{rmk1} holds and prove that \eqref{solvet} holds using Lemma~\ref{l6}.

\begin{lemma}(e.g. \cite[Proposition 1(ii)]{Mark}) \label{l6}
 Suppose that $Z_0,Z_1,\dots$ are i.i.d. Let $Q_t\triangleq\sum_{i=0}^{\infty}\mathbbm{1}_{[0,t]}\left(\sum_{k=0}^iZ_k\right)$. Let $R_0,R_1,\dots$ be i.i.d rewards, and let $S_t\triangleq\sum_{i=0}^{Q_t}R_i$ be the renewal reward process.  
If $0<\mathbb E[Z_i]<\infty$, $\mathbb E[|R_i|]<\infty$, then
\begin{equation}\label{pf2_75}
    \lim_{T\rightarrow\infty}\frac{\mathbb E[S_T]}{T}=\frac{\mathbb E[R_0]}{\mathbb E[Z_0]}.
\end{equation}
\end{lemma}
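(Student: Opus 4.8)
The plan is to reduce the claim to the \emph{elementary renewal theorem}, $\lim_{t\to\infty}\mathbb E[Q_t]/t=1/\mathbb E[Z_0]$, together with the Wald-type identity $\mathbb E[S_t]=\mathbb E[R_0]\,\mathbb E[Q_t+1]$; dividing the latter by $t$ and letting $t\to\infty$ then yields \eqref{pf2_75}. Write $T_j\triangleq\sum_{k=0}^{j}Z_k$ for the $j$-th renewal epoch. Since the inter-renewal times $Z_i$ are nonnegative, $(T_j)_{j\geq 0}$ is non-decreasing and $Q_t=\#\{j\geq 0\colon T_j\leq t\}$, so that
\begin{equation*}
\{Q_t\geq i\}=\{T_{i-1}\leq t\}\in\sigma(Z_0,\dots,Z_{i-1}),\qquad i\geq 1 .
\end{equation*}
In particular $Q_t$ is a stopping time for the filtration generated by the i.i.d.\ pairs $(Z_i,R_i)$, and --- the key structural fact --- the event ``index $i$ is retained in $\sum_{k=0}^{Q_t}(\cdot)_k$'', namely $\{Q_t\geq i\}$, is determined by $Z_0,\dots,Z_{i-1}$ and hence independent of $(Z_i,R_i)$. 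A standard estimate (since $\mathbb E[Z_0]>0$ forces $\mathbb P[Z_0>\epsilon]>0$ for some $\epsilon>0$, $T_j$ dominates $\epsilon$ times a $\mathrm{Binomial}(j+1,\mathbb P[Z_0>\epsilon])$ count, whose lower tail is summable) shows $\mathbb E[Q_t]<\infty$.

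I would first derive the Wald-type identities. For $A_i\in\{Z_i,R_i\}$, writing $\sum_{i=0}^{Q_t}A_i=\sum_{i\geq 0}A_i\,\mathbbm 1\{Q_t\geq i\}$ and invoking the independence above term by term,
\begin{equation*}
\mathbb E\left[\sum_{i=0}^{Q_t}A_i\right]=\sum_{i\geq 0}\mathbb E[A_i]\,\mathbb P[Q_t\geq i]=\mathbb E[A_0]\,\mathbb E[Q_t+1],
\end{equation*}
the interchange of sum and expectation being justified by $\mathbb E|A_0|\,\mathbb E[Q_t+1]<\infty$. With $A_i=R_i$ this is $\mathbb E[S_t]=\mathbb E[R_0]\,\mathbb E[Q_t+1]$; with $A_i=Z_i$ it is $\mathbb E[T_{Q_t}]=\mathbb E[Z_0]\,\mathbb E[Q_t+1]$, which drives the renewal estimates.

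For the elementary renewal theorem, the lower bound is immediate: $T_{Q_t}=\sum_{i=0}^{Q_t}Z_i>t$ pointwise (the first uncounted epoch exceeds $t$), so $\mathbb E[Z_0]\,\mathbb E[Q_t+1]\geq t$ and $\liminf_{t\to\infty}\mathbb E[Q_t]/t\geq 1/\mathbb E[Z_0]$. For the upper bound I would truncate: fix $c>0$, set $Z_i^{c}\triangleq Z_i\wedge c$, and let $T^{c}_j\triangleq\sum_{k=0}^{j}Z^{c}_k$, $Q^{c}_t\triangleq\#\{j\geq 0\colon T^{c}_j\leq t\}$. Then $T^{c}_j\leq T_j$ forces $Q^{c}_t\geq Q_t$, while the last increment of $(T^{c}_j)$ before it overshoots $t$ is at most $c$, so $T^{c}_{Q^{c}_t}\leq t+c$. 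Applying the identity above to $Z^{c}$ (for which $\mathbb E[Q^{c}_t]<\infty$ by the same estimate) gives $\mathbb E[Z^{c}_0]\,\mathbb E[Q^{c}_t+1]=\mathbb E[T^{c}_{Q^{c}_t}]\leq t+c$, hence $\mathbb E[Q_t]\leq\mathbb E[Q^{c}_t]\leq (t+c)/\mathbb E[Z^{c}_0]-1$. Dividing by $t$, letting $t\to\infty$ and then $c\to\infty$ (with $\mathbb E[Z^{c}_0]\uparrow\mathbb E[Z_0]$ by monotone convergence) gives $\limsup_{t\to\infty}\mathbb E[Q_t]/t\leq 1/\mathbb E[Z_0]$. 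Combining the bounds, $\mathbb E[Q_t]/t\to 1/\mathbb E[Z_0]$, and substituting this into $\mathbb E[S_t]=\mathbb E[R_0]\,\mathbb E[Q_t+1]$ and dividing by $t$ finishes the proof (the limit variable of the lemma being this $t$).

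The main care point --- rather than a genuine obstacle --- is the upper bound just above: without a second-moment hypothesis on $Z_0$, the overshoot term $\mathbb E[Z_{Q_t}]=\mathbb E[T_{Q_t}]-\mathbb E[T_{Q_t-1}]$ cannot be bounded directly (the inspection-paradox effect), so truncation at level $c$ followed by $c\to\infty$ is essential. Everything else is bookkeeping: the stopping-time/independence facts that legitimize the Wald-type term-by-term computation, the preliminary finiteness $\mathbb E[Q_t]<\infty$, and the harmless boundary case $Q_t=0$ (empty sums equal $0$, $T_{-1}\triangleq 0$). Notably, no strong law of large numbers or uniform-integrability argument is needed, since the assertion is stated directly in terms of expectations.
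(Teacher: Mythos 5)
Your proof is correct. Note that the paper does not prove this lemma at all: it is quoted as a known result, with a pointer to Brown and Ross's renewal-reward proposition, so your contribution is a self-contained derivation of the cited fact rather than an alternative to an in-paper argument. Your route is the standard one and it is executed cleanly: the Wald-type identity $\mathbb E[S_t]=\mathbb E[R_0]\,\mathbb E[Q_t+1]$, justified term by term because $\{Q_t\geq i\}=\{T_{i-1}\leq t\}$ is measurable with respect to $\sigma(Z_0,\dots,Z_{i-1})$ and hence independent of the pair $(Z_i,R_i)$, plus the elementary renewal theorem, with the lower bound from $T_{Q_t}>t$ and the upper bound from truncation at level $c$ to control the overshoot without a second-moment assumption — exactly the right care point, and your preliminary estimate $\mathbb E[Q_t]<\infty$ via the geometric lower tail of $T_j$ is what legitimizes the Fubini step for signed rewards. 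What your write-up makes explicit, and the paper's citation leaves implicit, are the two structural hypotheses actually needed: nonnegativity of the $Z_i$ (true in the application, where $Z_i=\tau_{i+1}-\tau_i$) and that the pairs $(Z_i,R_i)$ are i.i.d.\ while $R_i$ may depend on $Z_i$ — essential here, since in the proof of Theorem~3 the reward $R_i=\int_{\tau_i}^{\tau_{i+1}}(X_t-\mathbb E[X_t|X_{\tau_i},\tau_i])^2dt$ is strongly correlated with the interval length; your independence argument correctly uses only independence of $(Z_i,R_i)$ from the earlier $Z$'s, so it covers this case. Working directly with expectations, as you note, also avoids any SLLN or uniform-integrability detour, which matches the form of the statement being used in \eqref{solvet}.
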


Since the stochastic process considered in Theorem~\ref{prop3} is infinitely long, we use the DFF in the infinite time horizon:
\begin{equation}\label{dsf_eq_inf}
    \underline D^{\infty}(F) =\inf_{\substack{\{\mathcal P_t\}_{t\geq0}\in\Pi\colon\\ \eqref{comm_cons_f_b}}} \limsup_{T\rightarrow\infty}\frac{1}{T}\mathbb E\left[\int_{0}^T(\tilde X_t-\bar{\tilde X}_t)^2\right],
\end{equation}
where $\Pi$ is the set of all sampling-decision processes \eqref{gs'} of the form \eqref{dstp} satisfying (iv) and (v) in Section~\ref{IA} over the infinite time horizon. 
Note that for any stopping time $\tau$, and for any $t\geq \tau$, we have
\begin{equation}\label{pf2_c1}
    \{\tilde X_t\}_{t\geq \tau}~\text{and}~\{\tilde X_{t-\tau}\}_{t-\tau\geq 0}~\text{have the same distribution},
\end{equation}
\begin{equation}\label{pf2_c2}
    \{\tilde X_t\}_{t\geq \tau}~\text{is independent of}~ \{\tilde X_t\}_{t=0}^{ \tau},
\end{equation}
where \eqref{pf2_c1} is due to the time-homogeneity of $\{X_t\}_{t\geq 0}$ in Definition~\ref{def_th}, and \eqref{pf2_c2} is due to (iii-a) in Section~\ref{IA}.

Using \eqref{pf2_c1}--\eqref{pf2_c2} and assumption (iv), we will prove that the sampling-decision process that achieves the $\underline{D}^{\infty}(F)$ for time-homogeneous continuous Markov processes satisfying assumptions (i)--(iii) is of the form \eqref{stp}.

Given an arbitrary sampling-decision process $\{\mathcal P_t\}_{t\geq 0}$ of the form \eqref{dstp}, we define its MSRE \eqref{Yta} and REE \eqref{Ytbar'} processes  as
\begin{equation}
\begin{aligned}
    &\tilde X_t \triangleq \tilde X_t(\{\mathcal P_s\}_{s\geq 0}),\\
    &\bar {\tilde{X}}_t \triangleq \bar {\tilde{X}}_t(\{\mathcal P_s\}_{s\geq 0}).
\end{aligned}
\end{equation}
Denote by $\tau_1,\tau_2,\dots$ the stopping times of the causal sampling policy characterized by $\{\mathcal P_t\}_{t\geq 0}$. Assume that the sampling-decision process that achieves $\underline{D}^{\infty}(F)$ \eqref{dsf_eq_inf} is $\left\{\mathcal P_t^{\mathrm{(a)}}\right\}_{t\geq 0}$. We have,
\begin{subequations}\label{pfthm2_71}
\begin{align}\label{pf2_72a}
 &\underline{D}^{\infty}(F)\\\label{pf2_72b}
   =&~\inf_{\substack{\{\mathcal P_t\}_{t\geq0}\in\Pi\colon\\\mathcal P_t=\mathcal P_t^{\mathrm{(a)}},t\leq \tau_i,\\ \eqref{comm_cons_f_b}}}\limsup_{T\rightarrow\infty}\frac{1}{T}\mathbb E\left[\int_{\tau_i}^T(\tilde X_t-\bar {\tilde{X}}_t)^2 dt\right]\\ \label{pf2_72c}
   =&~\inf_{\substack{\{\mathcal P_t\}_{t\geq0}\in\Pi\colon\\ \eqref{comm_cons_f_b}}}\limsup_{T\rightarrow\infty}\frac{1}{T}\mathbb E\left[\int_{0}^{T-\tau_i}(\tilde X_t-\bar {\tilde{X}}_t)^2 dt\right]\\\label{pf2_72d}
   =&~ \underline{D}^{\infty}(F),
\end{align}
\end{subequations}
where \eqref{pf2_72b} is due to assumption (iv); \eqref{pf2_72c} is due to \eqref{pf2_c1}; the equality in \eqref{pf2_72d} is achieved since \eqref{pf2_72c} is upper-bounded by \eqref{pf2_72d} and is equal to \eqref{pf2_72a} simultaneously. Suppose that the sampling-decision processes that achieve \eqref{pf2_72b}--\eqref{pf2_72c} are  $\{\mathcal P_t^{\mathrm{(b)}}\}_{t\geq 0}$ and $\{\mathcal P_t^{\mathrm{(c)}}\}_{t\geq 0}$, respectively. 
From \eqref{pf2_72a} and \eqref{pf2_72b}, we observe that
\begin{equation}\label{pf2_3}
    \left\{\mathcal P_t^{\mathrm{(a)}}\right\}_{t\geq \tau_i}=\left\{\mathcal P_t^{\mathrm{(b)}}\right\}_{t\geq \tau_i},~i=0,1,\dots
\end{equation} 
We prove that under sampling-decision processes satisfying assumption (iv), it holds that
\begin{equation}\label{ETfinite}
    \mathbb E\left[\int_{T-\tau_i}^T(\tilde X_t-\bar {\tilde{X}}_t)^2\right]<\infty,
\end{equation}
so that using \eqref{pf2_72c}, \eqref{pf2_72d}, and \eqref{ETfinite}, we conclude
\begin{equation}\label{pf2_1}
    \left\{\mathcal P_t^{\mathrm{(c)}}\right\}_{t\geq 0}=\left\{\mathcal P_t^{\mathrm{(a)}}\right\}_{t\geq 0}.
\end{equation}
By assumption (iv) we know that there exist sampling-decision processes that lead to
\begin{align}
    \mathbb E\left[\int_{0}^{\tau_i}(\tilde X_t-\bar {\tilde{X}}_t)^2dt\right]<\infty.
\end{align}
Thus, there exist sampling-decision processes such that \eqref{ETfinite} holds. Since the goal is to minimize the MSE, it suffices to consider sampling-decision processes that lead to \eqref{ETfinite}.

Due to \eqref{pf2_c1}, the probability distributions of $\tilde X_t, t\in[0,T-\tau_i]$ in \eqref{pf2_72b} and $\tilde X_t,t\in[\tau_i,T]$ \eqref{pf2_72c} are the same. Thus, the sampling-decision process $\{\mathcal P_t\}_{t\geq \tau_i}=\left\{\mathcal P_{t-\tau_i}^{\mathrm{(a)}}\right\}_{t-\tau_i\geq 0}$ achieves the infimum in \eqref{pf2_72b}. We conclude
\begin{equation}\label{pf2_2}
    \left\{\mathcal P_t^{\mathrm{(b)}}\right\}_{t\geq \tau_i}=\left\{\mathcal P_{t-\tau_i}^{\mathrm{(a)}}\right\}_{t-\tau_i\geq 0}, ~i=0,1,\dots
\end{equation}

Using \eqref{pf2_3} and \eqref{pf2_2}, we conclude that $ \left\{\mathcal P_{t-\tau_i}^{\mathrm{(a)}}\right\}_{t-\tau_i\geq 0}= \left\{\mathcal P_t^{\mathrm{(a)}}\right\}_{t\geq \tau_i}$, $i=0,1,\dots$, i.e.,
\begin{equation}
    a_0(s,0)=a_i(s+\tau_i,\tau_i).
\end{equation}
Thus, \eqref{stp} follows.

Next, we show Remark~\ref{rmk1} using \eqref{stp}. We conclude that the sampling intervals $T_i\triangleq \tau_{i+1}-\tau_i$, $i=0,1,\dots$, are independent due to~\eqref{pf2_c2} and the fact that the sampling-decision process \eqref{stp} is independent of the process prior to the last stopping time; the sampling intervals $T_i$, $i=0,1,\dots$, are identically distributed due to \eqref{pf2_c1} and the fact that the sampling-decision process~\eqref{stp} only takes into account the time elapsed from the last sampling time $t-\tau_i$, $t\in[\tau_i,\tau_{i+1})$, $i=0,1,\dots$

We proceed to show that the optimization problem associated with $\underline D^{\infty}(F)$ can be reduced to \eqref{solvet} by Lemma~\ref{l6}. The assumptions in Lemma~\ref{l6} are satisfied with $Z_i\leftarrow T_i$, $R_i\leftarrow \int_{\tau_i}^{\tau_{i+1}}(X_t-\mathbb E[X_t|X_{\tau_i},\tau_i])^2dt$. The sampling intervals $T_0,T_1,\dots$ are i.i.d. due to Remark~\ref{rmk1}. The expectation of $T_i$ is finite by assumption (iv).
The reward random variables $R_i$ are i.i.d. due to \eqref{pf2_c1}--\eqref{pf2_c2} and Remark~\ref{rmk1}. Furthermore, the expectation of the reward is finite by assumption (iv).
Therefore, using \eqref{pf2_75}, we simplify the DFF in \eqref{dsf_eq} to \eqref{solvet}.

\section{Proof of Theorem~\ref{thm1}}\label{pfthm1}
Achievability is shown right after the statement of Theorem~\ref{thm1}. We here show the converse \eqref{converse}. Denote by $\Pi_{T}$ the set of all sampling-decision processes \eqref{gs'} that satisfy (iv)--(vi) on $[0,T]$. Denote by $\mathcal C_T$ the set of all causal compressing policies on $[0,T]$. We lower bound the DRF in \eqref{DOP} as \eqref{ob} (see \eqref{ob} on the next page),
\begin{figure*}[!t]
\normalsize
\begin{subequations}\label{ob}
\allowdisplaybreaks
\begin{align}\label{ob_b}
    D(R)
=&~\inf_{\substack{\{\mathcal P_t\}_{t=0}^T\in\Pi_T,\\\{f_t\}_{t=0}^T\in\mathcal C_T\colon\\ \eqref{comm_cons_a}}}\frac{1}{T}\mathbb E\left[\sum_{i=0}^{N}\int_{\tau_i}^{\tau_{i+1}}(X_t-\mathbb E[X_t|U^i,\tau^i,t<\tau_{i+1}])^2dt\right]\\ \label{ob_c}
\geq&~\inf_{\substack{\{\mathcal P_t\}_{t=0}^T\in\Pi_T\colon\\ \frac{\mathbb E[N]}{T}\leq R}}\frac{1}{T}\mathbb E\Biggl[\sum_{i=0}^{N}\int_{\tau_i}^{\tau_{i+1}}(X_t-\mathbb E[X_t|\{X_s\}_{s=0}^{\tau_i},\tau^i,t<\tau_{i+1}])^2dt\Biggr]\\ \label{ob_d}
=& \inf_{\substack{\{\mathcal P_t\}_{t=0}^T\in\Pi_T\colon\\ \frac{\mathbb E[N]}{T}\leq R}}\frac{1}{T}\mathbb E\Biggl[\sum_{i=0}^{N}\int_{\tau_i}^{\tau_{i+1}}(\tilde X_t-\mathbb E[\tilde X_t|\tau_i,t<\tau_{i+1}])^2dt\Biggr]\\
\label{ob_e}
=&~\underline{D}(R).
\end{align}
\end{subequations}
\hrulefill
\vspace*{4pt}
\end{figure*}
where \eqref{ob_c} holds since $\mathbb E[N] \leq \mathbb E\left[\sum_{i=1}^{N}\ell(U_i)\right]$, and $U^i$ belongs to the $\sigma$-algebra generated by the stochastic process $\{X_s\}_{s=0}^{\tau_i}$. The equality in~\eqref{ob_d} is obtained by subtracting and adding $\mathbb E[X_t|X_{\tau_i},\tau_i]$ to $X_t$ in \eqref{ob_c}, where
\begin{align}\nonumber
 &\mathbb E[\tilde X_t|\tau_i,t<\tau_{i+1}]\\ =~& \mathbb E[X_t|\{X_s\}_{s=0}^{\tau_i},\tau^i,t<\tau_{i+1}] - \mathbb E[X_t|X_{\tau_i},\tau_i]
\end{align}
holds due to the argument that justifies \eqref{XG'c} with $\{X_{\tau_j}\}_{j=1}^i$ replaced by $\{X_s\}_{s=0}^{\tau_i}$. 

While \eqref{ob} shows that the converse \eqref{converse} holds for the finite horizon ($T<\infty$), the converse also holds for the infinite horizon ($T=\infty$). This is because \eqref{ob} continues to hold with the minimization constraints $\{\mathcal P_t\}_{t=0}^T\in\Pi_T$, $\{f_t\}_{t=0}^T\in\mathcal C_T$, \eqref{comm_cons_a}, and $\frac{\mathbb E[N]}{T}\leq R$ replaced by $\{\mathcal P_t\}_{t\geq 0}\in\Pi_{\infty}$, $\{f_t\}_{t\geq 0}\in\mathcal C_{\infty}$, \eqref{comm_cons_b}, and $\limsup_{T\rightarrow\infty}\frac{\mathbb E[N]}{T}\leq R$, respectively, and with $\limsup_{T\rightarrow\infty}$ inserted right before the objective functions in \eqref{ob_b}--\eqref{ob_d}.

\section{Proof of Proposition~\ref{prop4}}\label{pfprop4}
Using \eqref{cor17}, we calculate that for $t\in[\tau_i,\tau_{i+1})$,
\begin{equation}\label{X_diff}
    X_t-\bar X_t = O_{t-\tau_i} \triangleq \frac{\sigma}{\sqrt{2\theta}}e^{-\theta(t-\tau_i)}W_{e^{2\theta(t-\tau_i)}-1}.
\end{equation}
Let
\begin{equation}\label{sint}
    T_i\triangleq \tau_{i+1}-\tau_i, i=0,1,2,\dots
\end{equation}

We write the objective function of \eqref{solvet} as
\begin{subequations}
\allowdisplaybreaks
\begin{align} \label{oueq1c}
    &\frac{\mathbb E\left[\int_{0}^{T_i}O_t^2dt\right]}{\mathbb E[T_i]}\\ \label{oueq1d}
    =~& \frac{\mathbb E\left[R_2(O_{T_i}^2)\right]}{\mathbb E[R_1(O_{T_i}^2)]}\\ \label{oueq1e}
    \geq ~&
    \frac{R_2(\mathbb E[O_{T_i}^2)])}{ R_1(\mathbb E[O_{T_i}^2])},
\end{align}
\end{subequations}
where \eqref{oueq1c} is obtained by plugging \eqref{X_diff}--\eqref{sint} into \eqref{solvet}; \eqref{oueq1d} holds by solving Dynkin's formula \cite[Eq.(44)]{Zaman} for $R_1(O_{T_i}^2)$ and $R_2(O_{T_i}^2)$ in \eqref{27a}--\eqref{27b} to obtain
\begin{subequations}\label{Dynkins}
\begin{align}
    &\mathbb E\left[\int_{0}^{T_i}O_t^2dt\right]=\mathbb E\left[R_2(O_{T_i}^2)\right],\\ \label{86b}
    &\mathbb E[T_i] = \mathbb E[R_1(O_{T_i}^2)],
\end{align}
\end{subequations}
and plugging \eqref{Dynkins} into \eqref{oueq1c}; \eqref{oueq1e} is obtained by letting 
\begin{subequations}
\begin{align}
    &A\triangleq \frac{\sigma^2}{2\theta}\mathbb E[R_1(O_{T_i}^2)],\\
    &B\triangleq\frac{\sigma^2}{2\theta}R_1(\mathbb E[O_{T_i}^2]),\\
    &C\triangleq\frac{1}{2\theta}\mathbb E[O_{T_i}^2],\\
    &\mathbb E\left[R_2(O_{T_i}^2)\right] = A-C,\\
    &R_2(\mathbb E[O_{T_i}^2)]) = B-C,
\end{align}
\end{subequations}
applying Jensen's inequality to $R_1(v)$, which is convex in $v>0$, to obtain $A \geq B$, and using the fact that $\frac{A-C}{A}\geq \frac{B-C}{B}$ for $A\geq B\geq C\geq 0$.

By \eqref{86b} (\cite[Eq.(43)]{Zaman}) and Jensen's inequality, we write the minimization constraint in \eqref{solvet} as,
\begin{equation}\label{range}
   R_1(\mathbb E[O_{T_i}^2]) \leq \mathbb E[R_1(O_{T_i}^2)] = \mathbb E[T_i]= \frac{1}{R}.
\end{equation}
For any $R_1(\mathbb E[O_{T_i}^2])$ in the range \eqref{range}, \eqref{oueq1e} is a lower bound to \eqref{oueq1c}. Choosing $R_1(\mathbb E[O_{T_i}^2])$ that satisfies \eqref{range} with equality leads to \eqref{oueq1e} being equal to $D(R)$ in \eqref{dr}. 

Plugging \eqref{opt_ou} into \eqref{oueq1d}, we verify that the lower bound in \eqref{oueq1d} is achieved by the symmetric threshold sampling policy in \eqref{opt_ou}.

\section{Recovering $L_t$ from $Z_t$}\label{pf_LZ}
Denote by $\delta(\cdot)$ the Dirac-delta function. Let $\nu_i$ be the $i$-th discontinuous point of $\{Z_t\}_{t=0}^T$. For $\{Z_t\}_{t=0}^T$ in \eqref{ZX}, $\nu_i$ is simply equal to the sampling times $\tau_i$, $i=1,2,\dots$ Without loss of generality, we assume that $\{Z_{t}\}_{t=0}^T$ is right-continuous at the discontinuous point $\nu_i$, since the mean-square cost in~\eqref{plant_MSE} is not affected by the assumption.  Denote by $\nu_i^-$ the time just before time $\nu_i$, where $Z_{\nu_i^-}\neq Z_{\nu_i}$.

\begin{proposition}\label{prop_lz}
Assume that $\{Z_t\}_{t=0}^T$ is almost surely $ACG_*$ on $[\nu_i,\nu_{i+1})$ and is right-continuous at the discontinuous point $\nu_i$. Then, control signal $\{L_t\}_{t=0}^T$ in \eqref{SDE_plant}  for $t\in[\nu_i,\nu_{i+1})$, $i=1,2,\dots$, is given by
\begin{align}\label{lt}
    L_t = \begin{cases}
      \left(Z_{\nu_i} - Z_{\nu_i^-}\right)\delta(t-\nu_i), & t=\nu_i, \\
      \lim_{\delta\rightarrow 0^+} \frac{Z_t-Z_{t-\delta}}{\delta}, &t\in(\nu_i,\nu_{i+1}).
    \end{cases}
\end{align}
\end{proposition}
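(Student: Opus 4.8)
The plan is to verify directly that the process $\{L_t\}_{t=0}^T$ displayed in \eqref{lt} satisfies the defining identity \eqref{LZ}, i.e.\ $\int_0^t L_s\,\mathrm{d}s = Z_t$ for every $t\in[0,T]$; since \eqref{LZ} determines $L_t$ only up to modification on a set of times of Lebesgue measure zero, this is exactly what needs to be shown. I fix once and for all a realization of $\{Z_t\}_{t=0}^T$ that is $ACG_*$ on each $[\nu_i,\nu_{i+1})$ (and, taking $\nu_0:=0$, also on the leftmost piece $[\nu_0,\nu_1)$, which is covered by the same hypothesis) and right-continuous at every $\nu_i$; by assumption such realizations have probability one. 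I also use that the jump times $\nu_i=\tau_i$ are almost surely locally finite, so that the sums over $\{i:\nu_i\le t\}$ below are finite, and that $Z_0=-\hat X_0=-\mathbb E[X_0]=0$ by \eqref{ZX} together with the convention $X_0=0$.

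The analytic fact that does the real work is the standard characterization of $ACG_*$ functions \cite{Gordon}\cite{Talvila}: if $g\colon[a,b]\to\mathbb R$ is $ACG_*$, then $g$ is differentiable Lebesgue-almost everywhere on $[a,b]$, its derivative $g'$ is Henstock--Kurzweil integrable, and the fundamental theorem of calculus holds, $\int_a^x g'(s)\,\mathrm{d}s = g(x)-g(a)$ for all $x\in[a,b]$. Applying this to $Z$ on $[\nu_i,\nu_{i+1})$ and passing to the right endpoint via Hake's theorem, I obtain, for every $x\in[\nu_i,\nu_{i+1})$,
\begin{equation}\label{eq:lz-ftc}
\int_{\nu_i}^{x} Z_s'\,\mathrm{d}s = Z_x - Z_{\nu_i}, \qquad \int_{\nu_i}^{\nu_{i+1}} Z_s'\,\mathrm{d}s = Z_{\nu_{i+1}^-} - Z_{\nu_i},
\end{equation}
where right-continuity at $\nu_i$ lets me write $Z_{\nu_i^+}=Z_{\nu_i}$, and where the one-sided limit $Z_{\nu_{i+1}^-}$ exists precisely because the left-hand integral converges. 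Moreover, at every point $t$ where $Z$ is differentiable the left difference quotient $\lim_{\delta\to0^+}(Z_t-Z_{t-\delta})/\delta$ exists and equals $Z_t'$; hence replacing $Z_s'$ in \eqref{eq:lz-ftc} by the quantity appearing in the second branch of \eqref{lt} alters the integrand only on a Lebesgue-null set and leaves the integrals unchanged.

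Next I would assemble the pieces. Fix $t\in[\nu_j,\nu_{j+1})$. By the distributional meaning of the Dirac delta, the first branch of \eqref{lt} contributes $\sum_{i\colon \nu_i\le t}\bigl(Z_{\nu_i}-Z_{\nu_i^-}\bigr)$ to $\int_0^t L_s\,\mathrm{d}s$, while by \eqref{eq:lz-ftc}, applied on $(\nu_{i-1},\nu_i)$ for $i=1,\dots,j$ and on $(\nu_j,\nu_{j+1})$, the second branch contributes $\sum_{i=1}^{j}\bigl(Z_{\nu_i^-}-Z_{\nu_{i-1}}\bigr)+\bigl(Z_t-Z_{\nu_j}\bigr)$. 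Adding and telescoping,
\begin{equation}\label{eq:lz-sum}
\int_0^t L_s\,\mathrm{d}s = \sum_{i=1}^{j}\bigl(Z_{\nu_i}-Z_{\nu_{i-1}}\bigr) + Z_t - Z_{\nu_j} = \bigl(Z_{\nu_j}-Z_0\bigr) + Z_t - Z_{\nu_j} = Z_t,
\end{equation}
which is exactly \eqref{LZ}. Since $t\in[0,T]$ was arbitrary, the $\{L_t\}_{t=0}^T$ of \eqref{lt} realizes the state evolution \eqref{SDE_plant}, and it is the unique such control signal up to modification on a set of times of measure zero. I expect the only genuinely delicate step to be the invocation of the fundamental theorem of calculus for $ACG_*$ functions across the half-open intervals $[\nu_i,\nu_{i+1})$: one must combine the $ACG_*$--FTC of \cite{Gordon} with the Cauchy/Hake extension of the Henstock--Kurzweil integral to the endpoint $\nu_{i+1}^-$, and carefully distinguish the left limits $Z_{\nu_i^-},Z_{\nu_{i+1}^-}$ from the right-continuous values $Z_{\nu_i}$. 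Everything after that is bookkeeping with the telescoping sum in \eqref{eq:lz-sum} and the Dirac-delta term.
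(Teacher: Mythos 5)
Your verification is correct and rests on exactly the same analytic ingredient as the paper's argument --- the fundamental theorem of calculus for $ACG_*$ functions (Henstock--Kurzweil/Denjoy integration), with the jumps of $Z$ carried by impulses --- but you run it in the opposite direction. The paper localizes the defining relation \eqref{LZ} to $[\nu_i,\nu_{i+1})$ and decomposes $\int_{\nu_i}^t L_s\,\mathrm{d}s = Z_t - Z_{\nu_i^-} = (Z_t-Z_{\nu_i}) + (Z_{\nu_i}-Z_{\nu_i^-})$, reading off the Dirac impulse from the jump term and the left-derivative (a.e.) from the $ACG_*$ part; you instead take the candidate \eqref{lt}, integrate it from $0$, and telescope across all jumps to recover $Z_t$ globally --- the sufficiency/verification direction, made explicit with the FTC for $ACG_*$ functions, Hake's extension to the right endpoint, and the null-set insensitivity that lets you swap $Z_s'$ for the left difference quotient. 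That extra explicitness is a genuine plus over the paper's two-line rewrite. Two caveats: first, the existence of the left limits $Z_{\nu_i^-}$, $Z_{\nu_{i+1}^-}$ is \emph{not} a consequence of "convergence of the left-hand integral" --- for a general $ACG_*$ function on a half-open interval neither $\int_{\nu_i}^{x}Z_s'\,\mathrm{d}s$ nor $Z_x$ need converge as $x\to\nu_{i+1}^-$ (think of a bounded-variation-free oscillation near the endpoint), so your justification there is circular; the limits are in effect presupposed by the statement (the jump $Z_{\nu_i}-Z_{\nu_i^-}$ appears in \eqref{lt}) and do hold for the optimal $Z_t=-\hat X_t$ of \eqref{ZX}, since $\mathbb E[X_t|X_{\tau_i},\tau_i]=X_t-\tilde X_t$ has continuous paths on each $[\tau_i,\tau_{i+1})$ by (ii) and (iii-b). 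Second, the closing "unique up to a null set" remark, and the bookkeeping of the delta mass when the upper limit equals a jump time $\nu_j$, should be read at the same informal level at which the paper treats impulse controls, not as a literal statement about functions; with those understandings your argument is sound.
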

\begin{proof}
For $t\in[\nu_i,\nu_{i+1})$, we rewrite \eqref{LZ} as
\begin{subequations}\label{LZ1}
\begin{align}
    \int_{\nu_i}^t L_s \mathrm{d}s & = Z_t -\lim_{\delta\rightarrow 0^+} \int_{0}^{\nu_i-\delta}L_s\mathrm{d}s\\\label{LZ1b}
    & =Z_t - Z_{\nu_i^-}\\\label{LZ1c}
    & = (Z_t - Z_{\nu_i}) + (Z_{\nu_i} - Z_{\nu_i^-}),
\end{align}
\end{subequations}
which is equivalent to \eqref{lt}. 
\end{proof}
Note that $L_{\nu_i}$ is an impulse control at $t=\nu_i$ \cite{1}, \cite{Lions}--\cite{ZGLi}, and $L_t$, $t\in(\nu_i,\nu_{i+1})$ is equal to the left-derivative of $Z_t$. This is because $Z_t$ may not be differentiable at $t$, but its left-derivative exists since the $ACG_*$ property of $Z_t$ implies that it is differentiable almost everywhere on $(\nu_i,\nu_{i+1})$ \cite{Talvila}.
For example, if $X_t = W_t$, the optimal control signal \eqref{ZX} is $Z_t = -W_{\tau_i}$, $t\in[\tau_i,\tau_{i+1})$, and the corresponding control signal in~\eqref{SDE_plant} is $L_{t} = -(W_{\tau_i}-W_{\tau_{i-1}})\delta(t-\tau_i)$ for $t\in [\tau_i,\tau_{i+1})$.

\ifCLASSOPTIONcaptionsoff
  \newpage
\fi



%

%

\begin{IEEEbiographynophoto}
{Nian Guo} is a Ph.D. candidate in Electrical Engineering at California Institute of Technology. She received the Bachelor's degree in Electrical Engineering from The University of Hong Kong in 2017 and the Master's degree in Electrical Engineering from California Institute of Technology in 2019. Her research interests include information theory, theory of random processes, control, and wireless communication.
\end{IEEEbiographynophoto}

\begin{IEEEbiographynophoto}
{Victoria Kostina}(S'12--M'14)
 received the bachelor's degree from Moscow Institute of Physics and Technology (MIPT) in 2004, the master's degree from University of Ottawa in 2006, and the Ph.D. degree from Princeton University in 2013.  During her studies at MIPT, she was affiliated with the Institute for Information Transmission Problems of the Russian Academy of Sciences. 
She is currently a Professor of electrical engineering and computing and mathematical sciences at California Institute of Technology. Her research interests include information theory, coding, control, learning, and communications. 
 She received the Natural Sciences and Engineering Research Council of Canada postgraduate scholarship during 2009--2012, Princeton Electrical Engineering Best Dissertation Award in 2013, Simons-Berkeley research fellowship in 2015 and the NSF CAREER award in 2017.  
\end{IEEEbiographynophoto}





\end{document}